\documentclass{article}
\usepackage{fullpage}
\usepackage{url,color,cite}
\usepackage{caption}
\usepackage{amsmath,amsthm,amsfonts,amssymb}
\usepackage{mathtools}
\DeclarePairedDelimiter{\ceil}{\lceil}{\rceil}
\DeclarePairedDelimiter{\floor}{\lfloor}{\rfloor}
\usepackage{float}
\usepackage{epstopdf}
\usepackage{nicefrac}
\usepackage{pifont}
\usepackage{subcaption}
\usepackage{sidecap}
\usepackage{multirow}
\usepackage{bbm}

\newtheorem{lemma}{Lemma}
\newtheorem{theorem}{Theorem}
\theoremstyle{definition}
\newtheorem{ddd}{Definition}
\theoremstyle{definition}
\newtheorem{Example}{Example}
\newtheorem{remark}{Remark}
\theoremstyle{definition}
\newtheorem{corollary}{Corollary}
\theoremstyle{definition}

\usepackage{microtype}
\usepackage{graphicx}
\usepackage{booktabs} 
\usepackage{hyperref}
\usepackage{cleveref}
\usepackage{lipsum}
\usepackage{authblk}

\title{Successive Refinement of Privacy}

\author{Antonious M. Girgis, Deepesh Data, Kamalika Chaudhuri, \\ Christina Fragouli, and Suhas Diggavi\thanks{Antonious M. Girgis, Deepesh Data, Christina Fragouli, and Suhas Diggavi are with the University of California, Los Angeles, USA. Kamalika Chaudhuri is with the University of California, San Diego, USA.
Email: amgirgis@g.ucla.edu, deepesh.data@gmail.com, kamalika@cs.ucsd.edu, christina.fragouli@ucla.edu, suhas@ee.ucla.edu.
This was supported by the NSF grant \#1740047 and by the UC-NL grant LFR-18-548554.}}

\date{}

\begin{document}
\maketitle

\begin{abstract}

This work examines a novel question: how much randomness is needed to achieve local differential privacy (LDP)? A motivating scenario is providing {\em multiple levels of privacy} to multiple analysts, either for distribution or for heavy hitter estimation, using the \emph{same} (randomized) output. We call this setting  \emph{successive refinement of privacy}, as it provides hierarchical access to the raw data with different privacy levels. For example, the same randomized output could enable one analyst to reconstruct the input, while another can only estimate the distribution subject to LDP requirements. This extends the classical Shannon (wiretap) security setting to local differential privacy. We provide (order-wise) tight characterizations of privacy-utility-randomness trade-offs in several cases for distribution estimation, including the standard LDP setting under a randomness constraint. We also provide a non-trivial privacy mechanism for multi-level privacy. Furthermore, we show that we cannot reuse random keys over time while preserving privacy of each user. 
\end{abstract}

\section{Introduction}\label{Intro}


Differential privacy~\cite{dwork2006calibrating} -- a cryptographically motivated notion of privacy -- has recently emerged as the gold standard in privacy-preserving data analysis. Privacy is provided by guaranteeing that the participation of a single person in a dataset does not change the probability of any outcome by much; this is ensured by randomness -- either by adding noise to (or randomizing) the raw data itself or to a function or statistic computed directly on the data. If the randomization is large enough relative to the change caused by a single person's data, then their participation is indistinguishable, and privacy is attained. An underlying assumption in the body of work on differential privacy has long been that an unlimited amount of randomness is available for use by any privacy mechanism. Under this assumption, the vast majority of the literature has focused on achieving better privacy-utility trade-offs -- see, for example, \cite{DP-book_DworkRoth14,SarwateChaudhuri_DP-survey13} for surveys. In this paper, we ask: how much randomness do we need to achieve a desired level of privacy and utility, and study privacy-utility-randomness trade-offs instead. Answering this question both contributes to our theoretical understanding, and also could support specific emerging applications that we discuss later in the section.

We consider local differential privacy (LDP) -- a privacy model that has recently seen use in industrial applications, \cite[RAPPOR]{Rappor}, \cite{Apple_DP}. Here, an untrusted analyst acquires already-privatized pieces of information from a number of users, and aggregates them into a statistic or a machine learning model. Concretely, there are $n$ users who observe i.i.d.~inputs $X_1, X_2,\hdots, X_n$ (user $i$ observes $X_i$) from a finite alphabet $\mathcal{X}$ of size $k$, where each $X_i$ is distributed according to a probability distribution $\mathbf{p}$.  Each user has a certain amount of randomness, measured in Shannon entropy, to randomize her input, that she then publicly shares. Our general setup also includes $d$ analysts who would like to use the users' public outputs to estimate $\mathbf{p}$, each at a different level of privacy $\epsilon_1, \ldots,\epsilon_d$, where smaller $\epsilon$ means higher privacy. Each analyst may or may not share some common randomness with the users. We call this general setup {\em successive refinement of privacy}, in which each user shares a public output with highest privacy level.
Then, each analyst uses a shared random key to partially undo the randomization of the public output to get less privacy and higher utility. 

This general formulation includes several interesting special cases, for which we study the trade-offs between privacy, utility, and randomness. These are:

{\sf (i)} There is a single analyst ($d=1$), who shares no randomness with the users and estimates $\mathbf{p}$ with privacy level $\epsilon$. This setting directly generalizes the classical setup of LDP to the case of limited randomness.

{\sf (ii)} There are two analysts ($d=2$), who observe the same public outputs from the users; the first analyst who shares common randomness with the users has permission to perfectly recover the original inputs (i.e., privacy level $\epsilon_1\to \infty$), while the second analyst who shares no randomness with the users estimates $\mathbf{p}$ with privacy level $\epsilon_2$. This setting is an adaptation of the classical {\em perfect secrecy} setup of Shannon \cite{shannon1949} to the differential privacy world. In Shannon's setup, Alice (users) wants to send a secret to Bob (the first analyst), which must remain perfectly private from Eve (the second analyst); whereas, in our setting, instead of complete independence, we only want that the secret remains hidden from Eve in the sense of differential privacy. We call this setup {\em private-recoverability}.

{\sf (iii)} There are $d>1$ analysts, who share some common randomness with the users. Analyst $i$ would like to estimate $\mathbf{p}$ with privacy level $\epsilon_i$, where $\epsilon_1>\ldots>\epsilon_d$.\footnote{We can assume, without loss of generality, that $\epsilon_j>\epsilon_{j+1}, \forall j\in[d-1]$; otherwise, we can group the equal $\epsilon_j$'s together and the corresponding analysts can use the same privatized data that the users share with them.}

\begin{figure}[t]
  \centerline{\includegraphics[scale=0.3]{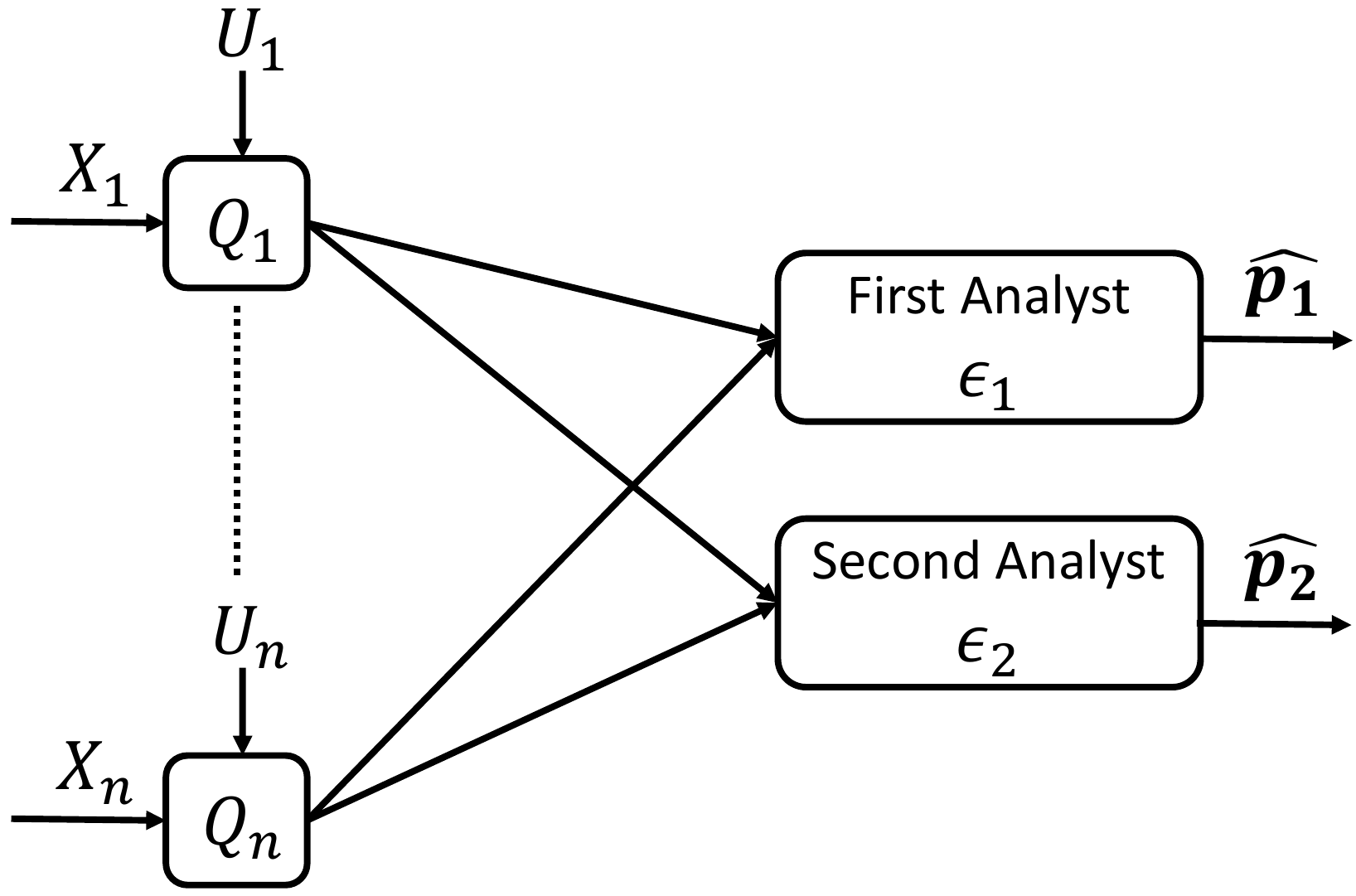}}
 \caption{We have $n$ users, each observing a sample $X_i$. A private randomization mechanism $Q_i$ is applied to $X_i$ using a random key $U_i$. Two analysts want to estimate $\mathbf{p}$. Each analyst requires a different privacy level.}
\label{Fig1_1}
\end{figure}

\subsection{Motivation} 

In general, designing private mechanisms with a small amount of randomness can be translated into communication efficiency and/or storage efficiency. For instance, when there are multiple privacy levels, each user needs to send additional information to some analysts, that is a function of the randomness used in the mechanism. Hence, using a smaller amount of randomness implies delivering a smaller number of bits to each analyst.

The private-recoverability setup ($d=2$) can be useful in applications such as census surveys, \cite{Dowrk_US-Census19}, that collect large amounts of data and are prohibitively expensive to repeat. Using our approach, we can store the randomized data on a public database (second analyst) without compromising the privacy of individuals; we can also give to the first analyst (e.g., the government, who may wish to exactly calculate the population count, or verify the validity of census results) a secret key, that can be used  to ``de-randomize'' the publicly stored data and perfectly reconstruct the user inputs. An alternative approach would be to store the data twice (once randomized in a public database and once in a secure government database), which would incur an additional storage cost, as also shown in Section \ref{Num}. Another alternative would be to use a cryptographic scheme to encode the user inputs; in this case, the resulting outputs may not allow public use in an efficient manner.\footnote{In principle, we could use homomorphic encryption that allows to compute a function on the encrypted data without decrypting it explicitly; however, such encryption schemes are computationally inefficient and expensive to deploy.}

The multi-level privacy $ d>1 $ illustrates a new technical capability of hierarchical access to the raw data that might inspire and support a variety of applications. For example, given data collected from a fleet of autonomous cars, we could imagine different privacy access levels provided to the car manufacturer itself, to police departments, to applications interested in online traffic regulation, to applications interested in long-term traffic predictions or road planning. Essentially, this capability enables providing the desired utility needed for each application while maintaining the maximum possible amount of privacy.

\subsection{Contributions} 

Our contributions are as follows.

$\bullet$  For the single analyst case $(d=1)$, we characterize the trade-off between randomness and utility for a fixed privacy level $\epsilon$, by proving an information-theoretic lower bound and a matching upper bound for a minimax private estimation problem.

$\bullet$ For private-recoverability $(d=2)$, we derive an information-theoretic lower bound on the minimum randomness required to achieve it, and prove that the Hadamard scheme proposed in~\cite{acharya2018hadamard} is order optimal. We also show that we cannot reuse random keys over time while preserving privacy of each user. Hence, to preserve privacy of $T$ samples, any $\epsilon$-DP mechanism has to use an amount of randomness equal to $T$ times the amount of randomness used for a single data sample. We also extend this result to estimating {\em heavy hitters}.

$\bullet$ In the multi-level privacy $(d>1)$ setting, a trivial scheme is to use the $d=1$ scheme multiple times, separately for each analyst. We propose instead a non-trivial scheme that uses a smaller amount of randomness with no sacrifice in utility. Our scheme publicly announces the users' outputs, and allows each analyst to remove an appropriate amount of (shared) randomness with the help of an associated key. This approach enables efficient hierarchical access to the data (for example, when analysts have different levels of authorized access).

Overall, our investigation into privacy-utility-randomness trade-offs for LDP yields (optimal) privacy mechanisms that use randomness more economically. These include new guarantees for existing schemes such as the Hadamard mechanism, as well as new multi-user and multi-level mechanisms that allow for hierarchically private data access. 

\subsection{Related work}

To the best of our knowledge, the role of limited randomness has not been previously explored either in the context of local or global differential privacy.\footnote{Except for a notable exception of \cite{imperfect-randomness_DP12}, which showed that imperfect source of randomness allows efficient protocols with global differential privacy. This is different from our problem, where our goal is to quantify the amount of randomness required (measured in terms of Shannon entropy) in local differential privacy and give privacy-utility-randomness trade-offs.}
In this work, we consider local differential privacy in the context of distribution estimation and heavy hitter estimation for reasons of simplicity.

Popular local differentially private mechanisms for distribution estimation include RAPPOR~\cite{Rappor}, randomized response (RR)~\cite{warner1965randomized}), subset selection (SS)~\cite{Ye2018,wang2016mutual}, and the Hadamard response (HR)~\cite{acharya2018hadamard}. The randomized response mechanism is known to be order optimal in the low privacy regime, and the RAPPOR scheme in the high privacy regimes~\cite{kairouz2016discrete,kairouz2014extremal}. Subset selection and the Hadamard mechanisms are order optimal in utility for all privacy regimes; additionally, the Hadamard mechanism has the advantage of communication and computational efficiency for all privacy regimes~\cite{acharya2018hadamard}. We build on this extensive literature, and show that the Hadamard mechanism is also near-optimal in terms of the amount of randomness used. 

Heavy hitter estimation under local differential privacy has been studied in~\cite{bassily2015local,qin2016heavy,hsu2012distributed, bassily2017practical, bun2018heavy}, again with unrestricted randomness. Our work adds to this line of work by showing that the Hadamard mechanism is capable of achieving order-optimal accuracy for heavy hitter estimation {\em{while}} using an order-optimal amount of randomness. 

Local differential privacy in a multi-user setting where the users and the server may have some shared randomness has also been looked at in prior work -- see ~\cite{bassily2015local, acharya2019communication, acharya2018test} among others. These works however investigate other orthogonal aspects of such multi-user protocols. Local differentially private mechanisms with bounded communication have also been studied by~\cite{acharya2019communication}; in their setup, multiple agents transmit their data in a locally private manner to an aggregator, and communication is measured by the number of bits transmitted by each user. They consider both private and public coin mechanisms, and show that the Hadamard mechanism is near optimal in terms of communication for both distribution and heavy-hitter estimation; however, unlike ours, their mechanisms do not impose any randomness constraints.

Our results in the multiple analyst setting are also related to privacy amplification by stochastic postprocessing~\cite{balle2019privacy} -- which analyzes the privacy risk achieved by applying a (stochastic) post-processing mechanism to the output of a differentially private algorithm. While these methods might also be used to provide multi-level privacy to multiple analysts, our work is different from~\cite{balle2019privacy} in the following aspect. First, their privacy amplification methodology does not apply to pure DP and applies instead to approximate DP, while our work focuses on pure DP. Second, the work in~\cite{balle2019privacy} does not include a randomness constraint, and finally, a closer look at their mechanism reveals that it does not use the optimal amount of randomness. 

Finally, a line of work on locally differentially private estimation considers the case when the inputs comprise of i.i.d.\ samples from the same distribution. ~\cite{duchi2018minimax,duchi2019lower} derive lower and upper bounds for estimation under LDP in this setting -- their work considers that all users observe i.i.d.\ samples from the same distribution, and the goal for each user is to preserve privacy of its raw sample. Our work is also different from this setting in that we focus on designing private mechanisms with finite randomness.

\subsection{Paper organization}

Section~\ref{Prelm} formally defines LDP
mechanisms under randomness constraints and presents the distribution
and heavy hitter estimation problem formulations. Section~\ref{Res}
states our main results for the single-level privacy,
private-recoverability, and multi-level privacy
settings. Section~\ref{Num} presents numerical evaluations on the
effect of parameters such as $n,\epsilon,d$ on the estimation error
and the required randomness. Section~\ref{LDP} derives an information-theoretic 
lower bound and an upper bound (achievability scheme) on the
minimax risk estimation under randomness and privacy constraints for a
single analyst. Section~\ref{Privlv} proposes a new LDP mechanism for
the multi-level privacy $d>1 $. Section~\ref{Recov} presents the
necessary and sufficient conditions on the randomness to design an
$\epsilon$-LDP mechanism with input recoverability
requirement. Section~\ref{Recov_GDP} introduces the necessary and
sufficient conditions on the randomness to preserve privacy of a
sequence of samples per user. 

\section{Preliminaries and Problem Formulation} \label{Prelm}

\textbf{Notation:} We use $\left[k\right]$ to define the set $\lbrace 1,\ldots,k\rbrace$ of integers. We use uppercase letters $X,Y$, etc., to denote random variables, and lowercase letter $x,y$, etc., to denote their realizations. 
For any two distributions $\textbf{p}$ and $\textbf{q}$ supported over a set $\mathcal{X}$, let 
$\|\textbf{p}-\textbf{q}\|_{\text{TV}} = \sup_{\mathcal{A}\subseteq\mathcal{X}}|\textbf{p}(\mathcal{A})-\textbf{q}(\mathcal{A})|$ be the total variation distance between $\textbf{p}$ and $\textbf{q}$. 
We use $\oplus$ to define the XOR operation. For $p\in[0,1]$, we use $H_2\left(p\right)$ to denote the binary entropy function defined by $H_2\left(p\right)=-p\log\left(p\right)-\left(1-p\right)\log\left(1-p\right)$, and $H\left(X\right)$ to denote the entropy of the random variable $X$. Also, we use $H\left(\textbf{p}\right)$ to denote the entropy of a random variable $X$ drawn from a distribution $\textbf{p}$.

\subsection{Differential Privacy (LDP)} Let $\mathcal{X}\triangleq\lbrace 1,\ldots,k\rbrace$ be an input alphabet and $\mathcal{Y}\triangleq\lbrace 1,\ldots,m\rbrace$ be an output alphabet, of sizes  $|\mathcal{X}|=k$ and  $|\mathcal{Y}|=m$, respectively, that are not required to be the same. A private randomization mechanism $Q$ is a conditional distribution that takes an input $X\in\mathcal{X}$ and generates a privatized output $Y\in\mathcal{Y}$. $Q$ is said to satisfy the $\epsilon$-local differential privacy ($\epsilon$-LDP)~\cite{duchi2013local}, if for every pair of inputs $x,x^{\prime}\in\mathcal{X}$, we have
\begin{equation}~\label{eqn1_3}
\sup_{y\in\mathcal{Y}}\frac{Q\left(y|x\right)}{Q\left(y|x^{\prime}\right)}\leq \exp\left(\epsilon\right),
\end{equation}
where $Q\left(y|x\right)=\text{Pr}\left[Y=y|X=x\right]$ and $\epsilon$ captures the privacy level. For small values of $\epsilon$, the adversary cannot infer whether the input was $X=x$ or $X=x^{\prime}$. Hence, a smaller privacy level $\epsilon$ implies higher privacy.

\subsection{Randomness in LDP Mechanisms}  
A private mechanism $Q$ with input $X\in\mathcal{X}$ and output $Y\in\mathcal{Y}$ is said to satisfy $\left(\epsilon,R\right)$-LDP, if for every pair of inputs $x,x^{\prime}\in\mathcal{X}$, we have
\begin{equation}
\begin{aligned}
&\sup_{y\in\mathcal{Y}}\frac{Q\left(y|x\right)}{Q\left(y|x^{\prime}\right)}\leq \exp\left(\epsilon\right),\ \text{and}\\
&H\left(Y|X=x\right)\leq R\quad \forall x\in\mathcal{X},
\end{aligned}
\end{equation}
where $H\left(Y|X=x\right)=\sum_{y\in\mathcal{Y}}Q\left(y|x\right)\log\left(\frac{1}{Q\left(y|x\right)}\right)$ denotes the entropy of the random output $Y$ conditioned on the input $X=x$. Note that an $\left(\epsilon,R\right)$-LDP mechanism is an $\epsilon$-LDP mechanism that requires an amount of randomness less than or equal to $R$-bits to be designed.

Suppose that a random key $U$ with $H\left(U\right)\leq R$ is used to design an $\left(\epsilon,R\right)$-LDP mechanism $Q$. We consider $U$ to be a random variable that takes values from a discrete set $\mathcal{U}=\lbrace u_1,\ldots,u_l\rbrace$ according to a distribution $\mathbf{q}=\left[q_1,\ldots,q_l\right]$, where $q_u=\text{Pr}\left[U=u\right]$ for $u\in\mathcal{U}$. We assume that $\mathcal{U}$ is a discrete set, since we focus on finite randomness. Let $\mathcal{U}_{yx}\subset\mathcal{U}$ be a subset of key values such that input $X=x$ is mapped to $Y=y$ when $u\in\mathcal{U}_{yx}$. The private mechanism $Q$ can be represented as 
\begin{equation}\label{eqn2_4}
Q\left(y|x\right)=\sum_{u\in\mathcal{U}_{yx}}q_u.
\end{equation}  
Note that the output $Y$ is a function of $\left(X,U\right)$. Therefore, we have $\mathcal{U}_{y^{\prime}x}\bigcap \mathcal{U}_{yx}=\phi$ for $y^{\prime}\neq y$, since there is only one output for each input. In addition, if we want \eqref{eqn2_4} to satisfy the privacy condition \eqref{eqn1_3}, we also have\footnote{Otherwise we can distinguish inputs causing $\epsilon\rightarrow\infty$.} $\bigcup_{y\in\mathcal{Y}}\mathcal{U}_{yx}=\mathcal{U}$ for each $x\in\mathcal{X}$. We will leverage this representation of randomness in LDP mechanisms to design multi-level privacy mechanisms. Figure~\ref{Fig2_1} shows an example of designing a private mechanism with binary inputs $\mathcal{X}=\lbrace 0,1\rbrace$, binary random keys $\mathcal{U}=\lbrace 0,1\rbrace$, and binary outputs $\mathcal{Y}=\lbrace 0,1\rbrace$. In this example, we can represent the output of the mechanism as a function of $\left(X,U\right)$ by $Y=X\oplus U$, where $\oplus$ denotes the XOR operation. If the random key $U$ is drawn from a distribution $\mathbf{q}=\left[\frac{e^{\epsilon}}{e^{\epsilon}+1},\frac{1}{e^{\epsilon}+1}\right]$, then it is easy to show that the mechanism is $\epsilon$-LDP.

\begin{figure}[t]
  \centerline{\includegraphics[scale=0.4,trim={0.5cm 8.9cm 9cm 0.5cm},clip]{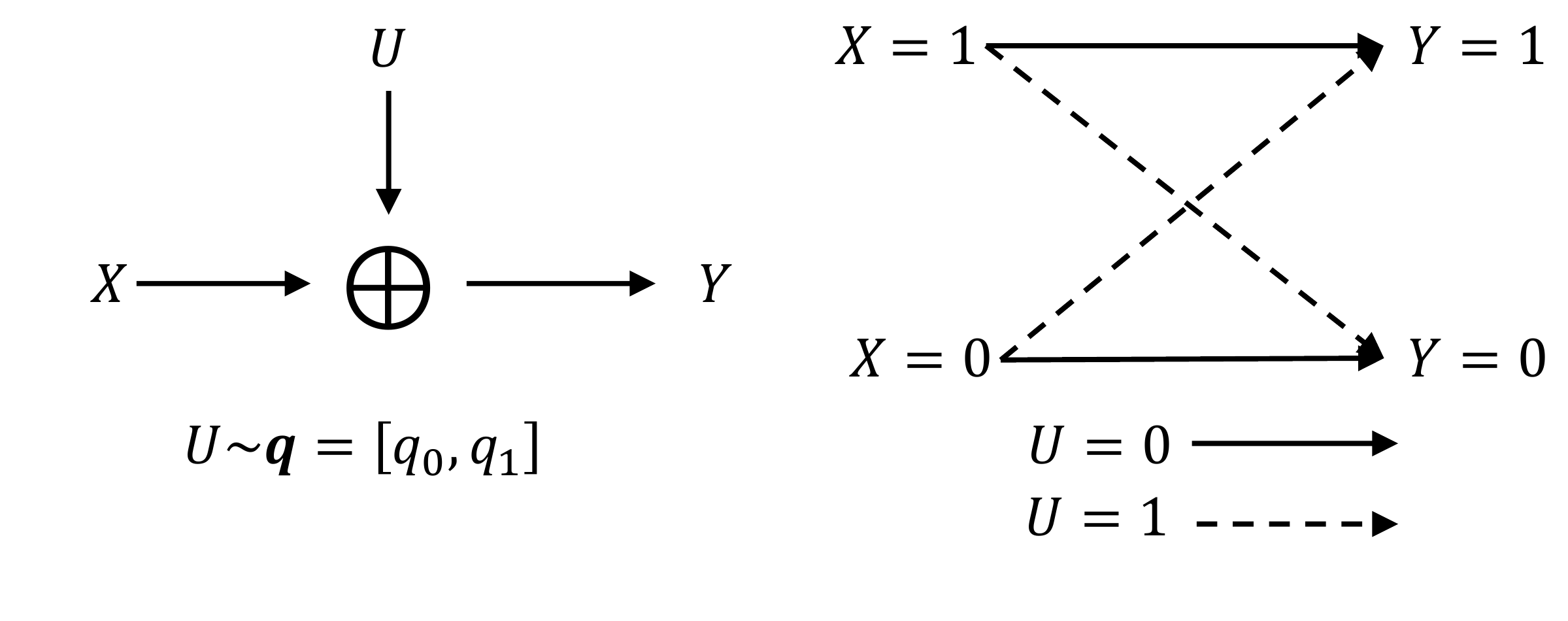}}
 \caption{An example of designing an $\epsilon$-LDP mechanism using a private key: (left) representing the output $Y$ of the mechanism $Q$ as a function of the input $X$ and the private key $U$, (right) representing the mechanism $Q$ as a probabilistic mapping from the input $X$ to the output $Y$ depending on the private key $U$.}
\label{Fig2_1}
\end{figure}

\subsection{Problem Formulation}  
We consider $n$ users who observe i.i.d.~inputs $X_1, X_2,\hdots, X_n$ (user $i$ observes input $X_i$),  drawn from an unknown discrete distribution $\mathbf{p}\in \Delta_k$, where $\Delta_k=\left\{ \mathbf{p}\in\mathbb{R}^{k}| \sum_{j=1}^{k}p_j=1, p_j\geq 0,\ \forall j\in\left[k\right]\right\}$ denotes the probability simplex over $\mathcal{X}$.
The $i$'th user has a random key $U_i$ with $H\left(U_i\right)\leq R$; we assume that $U^{n}=\left[U_1,\ldots,U_n\right]$ are independent random variables, unless otherwise stated. The $i$'th user generates (and publicly shares) an output $Y_i$, using 
an $\left(\epsilon,R\right)$-LDP mechanism $Q_i$ and her random key $U_i$.
The output $Y_i$  has a marginal distribution given by 
\begin{equation}
\mathbf{M}_{i}\left(y|\mathbf{p}\right)=\sum_{x\in\mathcal{X}}Q_i\left(y|x\right)p_x\qquad\forall y\in\mathcal{Y}_i,
\end{equation}
where  $\mathcal{X}$ and $\mathcal{Y}_i$ are the input and output alphabets. We also have $d$ analysts who want to use the users' public outputs  $Y^{n}=\left[Y_1,\ldots,Y_n\right]$ to estimate $\mathbf{p}$, each at a different level of privacy  $\epsilon_1 > \ldots > \epsilon_d$. The system model is shown in Figure~\ref{Fig1_1}.

\textbf{\em Risk Minimization:}
For simplicity of exposition, consider for now a single analyst, and
let $\mathbf{\hat{p}}=\left[\hat{p}_1,\cdots,\hat{p}_{k}\right]$ denote the analyst's estimator (this is a function $\mathbf{\hat{p}}:Y^{n}\to \mathbb{R}^{k}$ that maps the outputs $Y^{n}$ to a distribution in the simplex $\Delta_k$)\footnote{Observe that it is sufficient to consider a deterministic estimator $\hat{\mathbf{p}}$, since for any randomized estimator, there exists a deterministic estimator that dominates the performance of the randomized one.}. For given private mechanisms $Q^{n}=\left[ Q_1,\ldots,Q_n\right]$, the estimator $\hat{\mathbf{p}}$ is obtained by solving the problem 
\begin{equation}
r^{\ell}_{\epsilon,R,n,k}\left(Q^{n}\right)=\inf_{\hat{\mathbf{p}}}\sup_{\mathbf{p}\in\Delta_k}\mathbb{E}\left[\ell\left(\hat{\mathbf{p}}\left(Y^{n}\right),\mathbf{p}\right)\right],
\end{equation} 
where $r^{\ell}_{\epsilon,R,n,k}$ is the minimax risk, the expectation is taken over the randomness in the outputs $Y^{n}=\left[Y_1,\ldots,Y_n\right]$ with $Y_i\sim\mathbf{M}_i$, and $\ell:\mathbb{R}^{k}\times\mathbb{R}^{k}\to\mathbb{R}_{+}$ is a loss function  that measures the distance between two distributions in $\Delta_k$. Unless otherwise stated, we adopt as loss function the 1-norm, namely $\ell=\ell_1$ and the squared 2-norm, namely $\ell=\ell_2^{2}$. Our task is to design private mechanisms $ Q_1,\hdots,Q_n$ that minimize the minimax risk estimation, namely,
\begin{equation}~\label{eqn1_1}
\begin{aligned}
r^{\ell}_{\epsilon,R,n,k}&=\inf_{\lbrace Q_i\in\mathcal{Q}_{\left(\epsilon,R\right)}\rbrace}\ r^{\ell}_{\epsilon,R,n,k}\left(Q^{n}\right)\\
&=\inf_{\lbrace Q_i\in\mathcal{Q}_{\left(\epsilon,R\right)}\rbrace} \inf_{\hat{\mathbf{p}}}\sup_{\mathbf{p}\in\Delta_k}\mathbb{E}\left[\ell\left(\hat{\mathbf{p}}\left(Y^{n}\right),\mathbf{p}\right)\right],
\end{aligned}
\end{equation} 
where $\mathcal{Q}_{\left(\epsilon,R\right)}$ denotes the set of mechanisms that satisfy $\left(\epsilon,R\right)$-LDP. Observe that when $R\to\infty$, the problem~\eqref{eqn1_1} is reduced to the standard LDP distribution estimation studied previously in~\cite{duchi2013local, kairouz2016discrete,Ye2018,acharya2018hadamard}. The difference in the  formulation in~\eqref{eqn1_1} is the randomness constraint. 

\textbf{\em  LDP heavy hitter estimation:}
In  heavy hitter estimation, the input samples $X^{n}=[X_1,\hdots,X_n]$ do not have an associated distribution. Furthermore, the analyst is interested in estimating the frequency of each element $x\in\mathcal{X}$ with the infinity norm being the loss function (i.e., $\ell=\ell_{\infty}$). Frequency of each element $x\in\mathcal{X}$ is defined by $f\left(x\right)=\frac{\sum_{i=1}^{n}\mathbbm{1}\left(X_i=x\right)}{n}$. We then want to calculate
\begin{equation}~\label{eqn1_1_1}
\begin{aligned}
  &r^{\ell_{\infty}}_{hh,\epsilon,R,n,k}=  &\inf_{\lbrace Q_i\in\mathcal{Q}_{\left(\epsilon,R\right)}\rbrace} \inf_{\hat{\mathbf{p}}}\sup_{X^{n}\in\mathcal{X}^{n}}\mathbb{E}\left[\max_{x\in\mathcal{X}}|\hat{p}_x\left(Y^{n}\right)-f\left(x\right)|\right],\nonumber
\end{aligned}
\end{equation} 
where the expectation is taken over the randomness in the outputs $Y^{n}=\left[Y_1,\ldots,Y_n\right]$ and $\hat{\mathbf{p}}$ denotes the estimator of the analyst. Note, again, that in this case we do not make any distributional assumptions on $X_1,\hdots,X_n$.

\textbf{\em Multi-level privacy:} Consider now the general case of $d$ analysts each operating at a different level of privacy $\epsilon_1 > \ldots > \epsilon_d$. All analysts observe the users' public outputs  $Y^{n}$; additionally, analyst $j$ may also observe some side information on the user randomness. The question we ask is: what is the minimum amount of randomness $U$ per user required to maintain the privacy of each user while achieving the minimum risk estimation for each analyst?

\textbf{\em Sequence of distribution (or heavy hitter) estimation:}
We assume that each user $i$  has a random key $U_i$  to preserve the privacy of a sequence of $T$ independent samples $X_{i}^{\left(1\right)},\ldots,X_{i}^{\left(T\right)}$, where the $t$'th samples for $t\in\left[T\right]$ at all users are drawn i.i.d.\ from an unknown distribution $\mathbf{p}^{\left(t\right)}$.\footnote{As mentioned earlier, for heavy hitter estimation, the samples $X_{i}^{\left(1\right)},\ldots,X_{i}^{\left(T\right)}$ do not have an associated distribution.}  At time $t$, the $i$'th user  generates an output $Y_{i}^{(t)}$ that  may be a function of the random key $U_i$ and all input samples $\lbrace X_{i}^{\left(m\right)}\rbrace_{m=1}^{t}$.
Each of the $d$ analysts uses the outputs $Y_{i}^{(t)}, i\in[n], t\in[T]$ 
to estimate $T$ distributions $\mathbf{p}^{\left(1\right)},\ldots,\mathbf{p}^{\left(T\right)}$ (or estimate the heavy hitters).

A private mechanism $Q$ with a sequence of inputs $X^{T}=\left( X^{\left(1\right)},\ldots,X^{\left(T\right)}\right)$ and a sequence of outputs $Y^{T}=\left( Y^{\left(1\right)},\ldots,Y^{\left(T\right)}\right)$ is said to satisfy $\epsilon$-DP, if for every neighboring databases $\mathbf{x},\mathbf{x}^{\prime}\in\mathcal{X}^{T}$, we have
\begin{equation}\label{eqn1_2}
\sup_{\mathbf{y}\in\mathcal{Y}^{T}}\frac{Q\left(\mathbf{y}|\mathbf{x}\right)}{Q\left(\mathbf{y}|\mathbf{x}^{\prime}\right)}\leq \exp\left(\epsilon\right),
\end{equation}
where $Q\left(\mathbf{y}|\mathbf{x}\right)=\text{Pr}\left[Y^{T}=\mathbf{y}|X^{T}=\mathbf{x}\right]$;  and we say that two databases, $\mathbf{x}=\left(x^{\left(1\right)},\ldots,x^{\left(T\right)}\right)$ and $\mathbf{x}^{\prime}=\left(x^{\prime\left(1\right)},\ldots,x^{\prime\left(T\right)}\right)\in\mathcal{X}^{T}$ are neighboring, if there exists an index $t\in\left[T\right]$, such that $x^{\left(t\right)}\neq x^{\prime\left(t\right)}$ and $x^{\left(l\right)}= x^{\prime\left(l\right)}$ for $l\neq t$. Observe that when $T=1$, the definition of $\epsilon$-DP in~\eqref{eqn1_2} coincides with the definition of $\epsilon$-LDP in~\eqref{eqn1_3}.
 We are interested in  the question: Is there  a private mechanism that uses a smaller amount of randomness than $T$ times the amount of randomness used for a single data sample? In other words, can we perhaps reuse the randomness over time while preserving privacy? 

\section{Main Results} \label{Res} 
This section formally presents our main results. First, we characterize the minimax risk estimation under randomness and privacy constraints in Theorems~\ref{Th2_1} and~\ref{Th2_2} for single-level privacy ($d=1$). Then, we propose in Theorem~\ref{Th2_3} a new LDP privacy mechanism that provides a hierarchical access to users' samples with different privacy levels (multi-level privacy $d>1$). We present in Theorem~\ref{Th2_4} the necessary and sufficient conditions on the randomness to design an LDP mechanism with input recoverability requirement. Finally, we present in Theorem~\ref{Th2_6} the necessary and sufficient conditions on the randomness to preserve privacy of a sequence of samples under a recoverability constraint.

\subsection{Single-level Privacy, $d=1$}
We here study the fundamental trade-off between randomness and utility for a fixed privacy level $\epsilon$. In the following theorem, we derive a lower bound on the minimax risk estimation $r^{\ell_{2}^{2}}_{\epsilon,R,n,k}$ and $r^{\ell_{1}}_{\epsilon,R,n,k}$ defined in~\eqref{eqn1_1}.
\begin{theorem}~\label{Th2_1} For every $\epsilon,R\geq0$ and $k,n\in\mathbb{N}$, the minimax risk under $\ell_2$-norm loss is bounded by
\begin{equation}
r^{\ell_{2}^{2}}_{\epsilon,R,n,k}\geq \tau=
\begin{cases}
\frac{k\left(e^{\epsilon}+1\right)^2}{16ne^{\epsilon}\left(e^{\epsilon}-1\right)^{2}}& \text{if}\ R\geq H_2\left(\frac{e^{\epsilon}}{e^{\epsilon}+1}\right),\\
\frac{ke^{\epsilon}}{16 n p_R^2 \left( e^{\epsilon}-1\right)^{2}}& \text{if}\ R< H_2\left(\frac{e^{\epsilon}}{e^{\epsilon}+1}\right),
\end{cases}
\end{equation}
where $p_{R}\leq 0.5$ is the inverse of  the binary entropy function $p_R=H_2^{-1}\left(R\right)$. The minimax risk under $1$-norm loss is bounded by $r^{\ell_{1}}_{\epsilon,R,n,k}\geq \sqrt{k\tau/8}$.
\end{theorem}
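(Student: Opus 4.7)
My plan is to derive the lower bound by combining the classical Assouad reduction for minimax estimation with a strong-data-processing-style inequality for $\epsilon$-LDP mechanisms that additionally incorporates the entropy budget $R$. I would use the standard Assouad family $\mathbf{p}_\nu = \frac{1}{k}\mathbf{1} + \delta\sum_{j=1}^{k/2}\nu_j(e_{2j-1}-e_{2j})$ indexed by $\nu\in\{\pm 1\}^{k/2}$, a set of perturbations of the uniform distribution by $\pm\delta$ on $k/2$ disjoint coordinate pairs. Assouad's lemma lower-bounds $r^{\ell_2^2}_{\epsilon,R,n,k}$ by a constant times $k\delta^2\cdot(1-\max_{\nu\sim\nu'}\|\mathbf{M}_\nu^{\otimes n}-\mathbf{M}_{\nu'}^{\otimes n}\|_{\text{TV}})$, and via Pinsker plus tensorization this reduces to upper-bounding the per-user KL (or $\chi^2$) between the output marginals $\mathbf{M}_\nu$ and $\mathbf{M}_{\nu'}$ in terms of $\delta$, $\epsilon$, and $R$. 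The final bound is then obtained by optimizing $\delta$.

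For $R\geq H_2(e^\epsilon/(e^\epsilon+1))$ the randomness budget is slack: the canonical binary randomized response is $\epsilon$-LDP with entropy exactly $H_2(1/(e^\epsilon+1))$, saturating this threshold, so no new contraction beyond the classical one is available. The standard Duchi--Jordan--Wainwright LDP contraction $D_{KL}(\mathbf{M}_\nu\|\mathbf{M}_{\nu'})\leq 4\frac{(e^\epsilon-1)^2}{e^\epsilon}\|\mathbf{p}_\nu-\mathbf{p}_{\nu'}\|_{\text{TV}}^2$ then applies without modification, and optimizing $\delta$ inside Assouad recovers the stated $\tau=\frac{k(e^\epsilon+1)^2}{16ne^\epsilon(e^\epsilon-1)^2}$.

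The main new technical content is for $R<H_2(e^\epsilon/(e^\epsilon+1))$, where I would prove the strengthened contraction $\|Q(\cdot\mid i)-Q(\cdot\mid j)\|_{\text{TV}}\leq p_R(e^\epsilon-1)$ for every $i,j\in\mathcal{X}$. The argument has two steps. First, writing $p^\star(x)=\max_y Q(y\mid x)$ with mode $y^\star(x)$, data processing on $\mathbbm{1}(Y=y^\star(x))$ gives $H_2(p^\star(x))\leq H(Y\mid X=x)\leq R$, so $p^\star(x)\in[0,p_R]\cup[1-p_R,1]$; the low branch is excluded by a collision-probability comparison (Jensen yields $\sum_y Q(y\mid x)^2\geq 2^{-R}$, whereas $p^\star\leq p_R$ would force $\sum_y Q(y\mid x)^2\leq p_R$, and one checks $p_R<2^{-R}$ throughout this regime). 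Hence each $Q(\cdot\mid x)$ has a mode of probability $\geq 1-p_R$. Second, these modes must coincide across all inputs: if $y^\star(x)\neq y^\star(x')$, $\epsilon$-LDP would force both $Q(y^\star(x)\mid x')\geq e^{-\epsilon}(1-p_R)$ and $Q(y^\star(x')\mid x')\geq 1-p_R$, whose sum exceeds $1$ precisely when $p_R<1/(e^\epsilon+1)$, i.e., throughout this regime. Denoting the common mode $y^\star$, the tail masses $\sum_{y\neq y^\star}Q(y\mid i)$ and $\sum_{y\neq y^\star}Q(y\mid j)$ are each at most $p_R$; combining with the pointwise $\epsilon$-LDP bound $|Q(y\mid i)-Q(y\mid j)|\leq(e^\epsilon-1)\min(Q(y\mid i),Q(y\mid j))$ bounds the non-mode contribution to the TV sum by $(e^\epsilon-1)p_R$, and the mode term is bounded by the same quantity since $|Q(y^\star\mid i)-Q(y^\star\mid j)|=|\sum_{y\neq y^\star}(Q(y\mid j)-Q(y\mid i))|$. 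Propagating this sharper contraction through the KL/Assouad chain yields $\tau=\frac{ke^\epsilon}{16np_R^2(e^\epsilon-1)^2}$, and the $\ell_1$ bound follows from the same family with the Hamming rescaling (each misclassified coordinate pair contributing $\Theta(\delta)$ instead of $\Theta(\delta^2)$), giving $r^{\ell_1}\geq\sqrt{k\tau/8}$. The main obstacle I foresee is the mode-coincidence step, where the entropy-based elimination of the low branch and the $\epsilon$-LDP sum constraint must be threaded so that the transition $p_R=1/(e^\epsilon+1)$ emerges at exactly the threshold stated in the theorem.
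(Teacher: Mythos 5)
Your Assouad skeleton, the choice of perturbation family, and the treatment of the regime $R\geq H_2\bigl(\tfrac{e^\epsilon}{e^\epsilon+1}\bigr)$ track the paper's Section~\ref{LDP_AS}. In the constrained regime your route is genuinely different: the paper bounds the per-user divergence by posing the non-convex program \textbf{P1}, reducing to binary outputs (Lemma~\ref{lemmb_1}), and reading off the optimum from the feasible polytope, whereas you prove a structural fact about any $(\epsilon,R)$-LDP channel — every conditional $Q(\cdot\mid x)$ has a mode of mass at least $1-p_R$, and the modes coincide across inputs exactly when $p_R<\tfrac{1}{e^\epsilon+1}$, i.e.\ $R<H_2\bigl(\tfrac{e^\epsilon}{e^\epsilon+1}\bigr)$. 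That lemma is correct as you sketch it (the Rényi-2/min-entropy argument rules out the low branch, and the LDP sum constraint forces a common mode), and it yields the total-variation contraction $\|Q(\cdot\mid i)-Q(\cdot\mid j)\|_{\mathrm{TV}}\leq p_R(e^\epsilon-1)$ with the threshold emerging exactly where the theorem places it.

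The gap is the last step, ``propagating this sharper contraction through the KL/Assouad chain.'' A TV bound enters the per-user divergence \emph{linearly} in $p_R$, not quadratically: the quantity that actually controls the KL term in the chain is $\sum_y\frac{(Q(y\mid i)-Q(y\mid j))^2}{Q(y\mid i)+Q(y\mid j)}$, and the best you can extract from your contraction is $\sum_y\frac{(Q(y\mid i)-Q(y\mid j))^2}{Q(y\mid i)+Q(y\mid j)}\leq \frac{e^\epsilon-1}{e^\epsilon+1}\sum_y|Q(y\mid i)-Q(y\mid j)|\leq \frac{2p_R(e^\epsilon-1)^2}{e^\epsilon+1}$. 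Optimizing $\delta$ against this gives $\tau\asymp \frac{k}{n\,p_R\,(e^\epsilon-1)^2}\cdot e^{O(\epsilon)}$, which is weaker than the stated $\frac{ke^{\epsilon}}{16np_R^2(e^\epsilon-1)^2}$ by a factor of order $p_R$, i.e.\ unboundedly weaker as $R\to0$. This is not a fixable looseness in the TV argument: the extremal binary channel your own lemma identifies ($Q(y_1\mid i)=p_R$, $Q(y_1\mid j)=p_R/e^\epsilon$, remaining mass on the common mode) already makes the non-mode term of the above sum equal to $p_R\frac{(e^\epsilon-1)^2}{e^\epsilon(e^\epsilon+1)}$, linear in $p_R$. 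So to reach the theorem's $p_R^{-2}$ you would have to establish a bound on this divergence that is quadratic in $p_R$ — which is precisely what the paper's Lemma~\ref{lemma_1} asserts via the explicit solution of \textbf{P1} — and your example shows that no TV-based post-processing of your structural lemma can deliver it. (Indeed, your calculation sits in direct tension with the value the paper claims for the optimum of \textbf{P1} at that same vertex; you should evaluate the objective there yourself before trusting either constant.) As written, the proposal proves a valid but strictly weaker lower bound, not the stated one.
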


The main contribution in our proof (see Section~\ref{LDP_AS})
is a formulation of a non-convex optimization problem to bound the minimax risk  under privacy and randomness constraints, and obtaining a tight bound on its solution for every value of privacy level $\epsilon$ and randomness $R$. 
\begin{remark} 
In~\cite{Ye2018}, the authors derive the following lower bound on the minimax risk estimation without randomness constraints ($R\to\infty$)
\begin{equation}~\label{eqn2_1}
r^{\ell_{2}^{2}}_{\epsilon,\infty,n,k}\geq \left\{ \begin{array}{ll}
\frac{k\left(e^{\epsilon}+1\right)^2}{512n\left(e^{\epsilon}-1\right)^{2}}& \text{for}\ e^{\epsilon}< 3, \\
\frac{k}{64 n \left( e^{\epsilon}-1\right)}& \text{for}\ e^{\epsilon}\geq 3.
\end{array}
\right.
\end{equation}
For $\epsilon=\mathcal{O}(1)$ and $R\geq H_2\left(\frac{e^{\epsilon}}{e^{\epsilon}+1}\right)$ (which includes  $R\to\infty$  as well), our lower bound from Theorem~\ref{Th2_1} gives $r^{\ell_{2}^{2}}_{\epsilon,R,n,k}=\Omega\left(\frac{k}{n\epsilon^{2}}\right)$, which 
coincides with (\ref{eqn2_1}). However, our lower bound is tighter for all values of $\epsilon\in\left[0,\infty\right)$ with smaller constant factors. 
\end{remark}

We next show that there exists an achievable scheme for all values of $\epsilon,R\geq0$ that matches (up to a constant factor) the lower bound given in Theorem~\ref{Th2_1} for $\epsilon=\mathcal{O}\left(1\right)$ and $R\geq 0$.
\begin{theorem}~\label{Th2_2} For any $\epsilon,R\geq0$, there exists $\left(\epsilon,R\right)$-LDP mechanisms $Q_1,\hdots,Q_n$ and an estimator $\hat{\mathbf{p}}$ such that the error $\mathcal{E}:=\sup_{\mathbf{p}\in\Delta_k}\mathbb{E}\left[\|\hat{\mathbf{p}}\left(Y^{n}\right)-\mathbf{p}\|_{2}^{2}\right]$ is bounded by
\begin{equation}~\label{eqn2_3}
\mathcal{E} \leq \eta=
\left\{ \begin{array}{ll}
\frac{2k \left(e^{\epsilon}+1\right)^{2}}{n \left(e^{\epsilon}-1\right)^2} & \text{if}\ R\geq H_2\left(\frac{e^{\epsilon}}{e^{\epsilon}+1}\right), \\
\frac{2k e^{2\epsilon}}{np_R^{2} \left(e^{\epsilon}-1\right)^2} & \text{if}\ R< H_2\left(\frac{e^{\epsilon}}{e^{\epsilon}+1}\right).
\end{array}
\right.
\end{equation}
The error under $\ell_1$-norm loss is bounded by $\sup_{\mathbf{p}\in\Delta_k}\mathbb{E}\left[\|\hat{\mathbf{p}}\left(Y^{n}\right)-\mathbf{p}\|_{1}\right]\leq \sqrt{k\eta}$.
\end{theorem}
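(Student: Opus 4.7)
The plan is to construct explicit $(\epsilon, R)$-LDP mechanisms by adapting the Hadamard response of~\cite{acharya2018hadamard} so that they consume exactly $R$ bits of randomness, and then to compute the variance of a Hadamard-inverse estimator in both the large- and small-randomness regimes. Let $K$ be the smallest power of $2$ with $K \geq k$, let $H$ be the $K \times K$ Hadamard matrix, and assign each user $i$ a column index $z_i \in [K]$ deterministically so that every column is used by roughly $n/K$ users (this assignment spends no randomness). User $i$ then computes $b_i = H_{X_i, z_i} \in \{-1,+1\}$ and applies a single binary channel to $b_i$, where the channel is chosen based on the regime.

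In the large-randomness regime $R \geq H_2(e^\epsilon/(e^\epsilon+1))$, user $i$ outputs $Y_i = b_i$ with probability $p = e^\epsilon/(e^\epsilon+1)$ and $Y_i = -b_i$ otherwise; here $H(Y_i \mid X_i) = H_2(p)$ is exactly the threshold, and the privacy ratio is $p/(1-p) = e^\epsilon$. In the small-randomness regime $R < H_2(e^\epsilon/(e^\epsilon+1))$, I replace the symmetric channel by the skewed channel with $\Pr[Y_i = 1 \mid b_i = 1] = p_R$ and $\Pr[Y_i = 1 \mid b_i = -1] = p_R/e^\epsilon$, where $p_R = H_2^{-1}(R) \leq 1/2$. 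Then $H(Y_i \mid X_i) \leq H_2(p_R) = R$, the ratio $\Pr[Y=1\mid b=1]/\Pr[Y=1\mid b=-1] = e^\epsilon$ by construction, and the complementary ratio $(1 - p_R/e^\epsilon)/(1 - p_R) \leq e^\epsilon$ reduces to $p_R \leq e^\epsilon/(e^\epsilon+1)$, which holds automatically in this regime.

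For the estimator, a direct computation yields $\mathbb{E}[Y_i \mid z_i = z] = a + c\,(H^{\top} \mathbf{p})_z$ for affine constants $(a,c)$ that depend on the regime; one finds $c = (e^\epsilon - 1)/(e^\epsilon + 1)$ in the large-$R$ case and $c = p_R(e^\epsilon - 1)/e^\epsilon$ in the small-$R$ case. Form $\hat{Y}_z$ by averaging the $Y_i$ with $z_i = z$, debias by $a/c$, and apply $(1/K) H$ to recover $\hat{\mathbf{p}}$. Since $|Y_i| \leq 1$, $\mathrm{Var}(\hat{Y}_z) \leq K/n$, and the Hadamard step is a scaled isometry, so summing per-coordinate variances gives $\sup_{\mathbf{p}} \mathbb{E}\|\hat{\mathbf{p}} - \mathbf{p}\|_2^2 = O(k/(nc^2))$, which matches both branches of $\eta$. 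The $\ell_1$ bound then follows from $\|v\|_1 \leq \sqrt{k}\,\|v\|_2$ (Cauchy--Schwarz) together with Jensen's inequality.

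The main obstacle is that once $R$ drops below $H_2(e^\epsilon/(e^\epsilon+1))$, no symmetric binary channel can be simultaneously $\epsilon$-LDP and satisfy the randomness budget, so the skewed channel is the essential new ingredient. Its cost is that the effective signal $c$ degrades from the standard $(e^\epsilon - 1)/(e^\epsilon + 1)$ down to $p_R(e^\epsilon - 1)/e^\epsilon$, which is precisely the extra $1/p_R^2$ factor appearing in $\eta$. Matching the lower bound of Theorem~\ref{Th2_1} up to constants therefore reduces to checking that this skewed channel saturates the entropy budget exactly and that the Hadamard decoding does not lose additional factors of $k$ or $K$.
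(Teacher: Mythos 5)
Your proposal is correct and follows essentially the same route as the paper: it adapts the binary Hadamard response of \cite{acharya2019communication}, keeps the symmetric channel with bias $e^{\epsilon}/(e^{\epsilon}+1)$ when $R\geq H_2\left(\frac{e^{\epsilon}}{e^{\epsilon}+1}\right)$ and switches to the skewed channel with parameter $q=p_R$ (and flip probability $p_R/e^{\epsilon}$) when $R$ is below the threshold, then inverts via the Hadamard matrix and bounds the variance exactly as in the cited Lemma. The only cosmetic difference is your $\pm 1$ parameterization of the bit $b_i=H_{X_i,z_i}$ versus the paper's set-membership formulation $X_i\in B_i$, which are equivalent.
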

We prove Theorem~\ref{Th2_2} constructively in Section~\ref{LDP-AC}, by adapting the Hadamard response scheme given in~\cite{acharya2019communication} to our setting of limited randomness.
Theorems~\ref{Th2_1} and \ref{Th2_2} together imply the following characterization for $r^{\ell_{2}^{2}}_{\epsilon,R,n,k}$ and $r^{\ell_{1}}_{\epsilon,R,n,k}$, for the case when $\epsilon=\mathcal{O}(1)$:
\begin{corollary}~\label{Cor2_1} For $\epsilon=\mathcal{O}\left(1\right)$ and $R\geq0$, we have
\begin{equation}
r^{\ell_{2}^{2}}_{\epsilon,R,n,k}=
\begin{cases}
\Theta\left(\frac{k}{n\epsilon^{2}}\right) & \text{if}\ R\geq H_2\left(\frac{e^{\epsilon}}{e^{\epsilon}+1}\right),\\
\Theta\left(\frac{k}{np_R^2\epsilon^{2}}\right) & \text{if}\ R < H_2\left(\frac{e^{\epsilon}}{e^{\epsilon}+1}\right),
\end{cases}
\end{equation}
and $r^{\ell_{1}}_{\epsilon,R,n,k}=\sqrt{k r^{\ell_{2}^{2}}_{\epsilon,R,n,k}}$. 
\end{corollary}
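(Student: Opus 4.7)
The plan is to derive the corollary as a direct consequence of Theorems~\ref{Th2_1} and~\ref{Th2_2}, which already provide matching lower and upper bounds up to constants. The only remaining work is to simplify the exponential expressions in those bounds under the assumption $\epsilon = \mathcal{O}(1)$, and to verify that the ratio of the upper bound in~\eqref{eqn2_3} to the lower bound in Theorem~\ref{Th2_1} stays bounded in both regimes of $R$.

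First, I would record the elementary estimates that drive the simplification. For any $\epsilon$ bounded above by a constant $C$, we have $e^\epsilon = \Theta(1)$, so $e^\epsilon + 1 = \Theta(1)$; moreover, the inequalities $\epsilon \le e^\epsilon - 1 \le \epsilon\, e^\epsilon \le e^C \epsilon$ give $e^\epsilon - 1 = \Theta(\epsilon)$, hence $(e^\epsilon - 1)^2 = \Theta(\epsilon^2)$. These estimates immediately collapse all the exponential prefactors in the high-randomness case: both $\frac{k(e^\epsilon+1)^2}{16 n e^\epsilon (e^\epsilon-1)^2}$ (lower bound) and $\frac{2k(e^\epsilon+1)^2}{n(e^\epsilon-1)^2}$ (upper bound) are $\Theta(k/(n\epsilon^2))$, matching up to constants. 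In the low-randomness regime $R < H_2(e^\epsilon/(e^\epsilon+1))$, the same estimates turn $\frac{k e^\epsilon}{16 n p_R^2 (e^\epsilon-1)^2}$ and $\frac{2k e^{2\epsilon}}{n p_R^2 (e^\epsilon - 1)^2}$ into $\Theta(k/(n p_R^2 \epsilon^2))$, again matching. Thus $r^{\ell_2^2}_{\epsilon,R,n,k}$ is sandwiched between two quantities of the same asymptotic order in each of the two regimes of $R$, which yields the claimed $\Theta$-characterization.

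For the $\ell_1$ statement, I would combine the lower bound $r^{\ell_1}_{\epsilon,R,n,k} \ge \sqrt{k\tau/8}$ from Theorem~\ref{Th2_1} with the upper bound $\sup_{\mathbf{p}}\mathbb{E}\|\hat{\mathbf{p}}-\mathbf{p}\|_1 \le \sqrt{k\eta}$ from Theorem~\ref{Th2_2}. Since the preceding paragraph shows that $\tau$ and $\eta$ are of the same order (i.e., both $\Theta(r^{\ell_2^2}_{\epsilon,R,n,k})$), it follows that $\sqrt{k\tau}$ and $\sqrt{k\eta}$ are both $\Theta(\sqrt{k\, r^{\ell_2^2}_{\epsilon,R,n,k}})$, which gives the stated identity $r^{\ell_1}_{\epsilon,R,n,k} = \sqrt{k\, r^{\ell_2^2}_{\epsilon,R,n,k}}$ at the level of $\Theta$-asymptotics.

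There is no real obstacle here; the content of the corollary is essentially packaging. The only subtlety worth double-checking is the continuity of the characterization across the threshold $R = H_2(e^\epsilon/(e^\epsilon+1))$: at that boundary one has $p_R = e^\epsilon/(e^\epsilon+1) = \Theta(1)$, so the second-regime expression $\Theta(k/(np_R^2\epsilon^2))$ reduces to $\Theta(k/(n\epsilon^2))$, agreeing with the first regime. This consistency check confirms that the piecewise $\Theta$ description is well-posed and that the proof is complete.
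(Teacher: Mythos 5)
Your proposal is correct and matches the paper's treatment: the corollary is obtained exactly by combining the lower bound of Theorem~\ref{Th2_1} with the upper bound of Theorem~\ref{Th2_2} and using $e^{\epsilon}=\Theta(1)$, $e^{\epsilon}-1=\Theta(\epsilon)$ for $\epsilon=\mathcal{O}(1)$ to collapse the constants in both regimes of $R$, with the $\ell_1$ claim following from the corresponding $\sqrt{k\tau/8}$ and $\sqrt{k\eta}$ bounds. The consistency check at the threshold $R=H_2\left(\frac{e^{\epsilon}}{e^{\epsilon}+1}\right)$ is a nice touch but not needed for correctness.
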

We next provide a comparison between well-known mechanisms from randomness perspective. Table~\ref{t1} describe the amount of randomness required to implement different $\epsilon$-LDP mechanisms: RAPPOR~\cite{Rappor}, Randomized Response (RR)~\cite{warner1965randomized}, Hadamard Response (HR)~\cite{acharya2018hadamard}, and Binary Hadamard (BH)~\cite{acharya2019communication}.

\begin{table}[t!]
\centering
\begin{tabular}{ |c || c | c | c | c |   }
\hline
 & RAPPOR & RR & HR & BH \\ 
 \hline\hline
 Randomness per user ($R$ in bits)  & $kH_2\left(\frac{e^{\epsilon}}{e^{\epsilon}+1}\right)$  & $\log\left(k-1+e^{\epsilon}\right)-\frac{\epsilon e^{\epsilon}}{k-1+e^{\epsilon}}$ & $\leq\log\left(2k\frac{3e^{\epsilon}-1}{e^{\epsilon}}\right)-\frac{\epsilon e^{\epsilon}}{3e^{\epsilon}-1}$ & $H_2\left(\frac{e^{\epsilon}}{e^{\epsilon}+1}\right)$ \\  
 \hline
 Minimax risk ($r_{\epsilon,R,n,k}^{\ell_{2}^{2}}$) & $\mathcal{O}\left(\frac{k}{n\epsilon^2}\right)$ & $\mathcal{O}\left(\frac{k^2}{n\epsilon^2}\right)$ & $\mathcal{O}\left(\frac{k}{n\epsilon^2}\right)$ & $\mathcal{O}\left(\frac{k}{n\epsilon^2}\right)$  \\
 \hline  
\end{tabular}
\caption{Randomness requirement to implement each private mechanism and its corresponding minimax risk under $\ell_{2}^{2}$ loss function for $\epsilon=\mathcal{O}\left(1\right)$.}~\label{t1}
\end{table}

Observe that all private mechanisms are order optimal in the high privacy regime except for the RR scheme. However, only the BH scheme uses the smallest amount of randomness $R=H_2\left(\frac{e^{\epsilon}}{e^{\epsilon}+1}\right)$ per user, while the other mechanisms require a larger amount of randomness. Table~\ref{t1} considers only the regime of randomness $R\geq H_2\left(\frac{e^{\epsilon}}{e^{\epsilon}+1}\right)$, since the privacy-utility trade-off when the amount of randomness $R< H_2\left(\frac{e^{\epsilon}}{e^{\epsilon}+1}\right)$ has not been studied before. Corollary~\ref{Cor2_1} characterizes the privacy-utility trade-offs for all regions of randomness $R$. 

\begin{remark}\label{remark_critical-rand} Observe that when $R< H_2\left(\frac{e^{\epsilon}}{e^{\epsilon}+1}\right)$, there exists a trade-off between $R$ and $r^{\ell_{2}^{2}}_{\epsilon,R,n,k}$ -- as $R$ increases, $r^{\ell_{2}^{2}}_{\epsilon,R,n,k}$ decreases proportionally to $1/p_R^{2}$. However, when $R\geq H_2\left(\frac{e^{\epsilon}}{e^{\epsilon}+1}\right)$, the minimax risk is not affected by $R$. Hence, $R=H_2\left(\frac{e^{\epsilon}}{e^{\epsilon}+1}\right)$ is a critical point that defines the minimum amount of randomness required for each user to generate an $\epsilon$-LDP mechanism, while achieving the optimal utility at the analyst.
\end{remark}
\begin{remark}
Corollary~\ref{Cor2_1} also characterizes the number of users $n$ (sample complexity) required to estimate the distribution $\mathbf{p}$ with estimation error at most $\alpha$ for given privacy level $\epsilon$ and randomness $R$ bits per user is (where $k$ is the input alphabet size):
\begin{equation}
n
=
\begin{cases}
\Theta\left(\frac{k}{\alpha \epsilon^2}\right) & \text{if}\ R\geq H_2\left(\frac{e^{\epsilon}}{e^{\epsilon}+1}\right),\\
\Theta\left(\frac{k}{\alpha p_{R}^{2}\epsilon^2}\right) & \text{if}\ R < H_2\left(\frac{e^{\epsilon}}{e^{\epsilon}+1}\right).
\end{cases}
\end{equation} 
A remark analogous to Remark~\ref{remark_critical-rand} also holds here.
\end{remark}

\subsection{Multi-level Privacy, $d>1$}\label{multilevel-privacy}
Here, we study the case of $d$ different analysts, with privacy levels $\epsilon_1>\cdots>\epsilon_d$, and $\epsilon_j=\mathcal{O}\left(1\right)$ for $j\in\left[d\right]$. A trivial scheme is to use the $d=1$ scheme multiple times, separately for each analyst: each user $i\in\left[n\right]$ generates $d$  samples $\left(Y_{i}^{1},\ldots,Y_{i}^{d}\right)$ from its input sample $X_i$. The $j$th  sample $Y_{i}^{j}$ is delivered privately to the $j$th analyst. Note that the $j$th  sample must be generated from an $\epsilon_j$-LDP. It then follows from Corollary~\ref{Cor2_1} that the minimum risk for the $j$th analyst is given by $r^{\ell_{2}^{2}}_{\epsilon_j,\infty,n,k}=\Theta\left(\frac{k}{n\epsilon_j^{2}}\right) $, which requires each user to have $R_j\geq H_2\left(\frac{e^{\epsilon_j}}{e^{\epsilon_j}+1}\right)$ bits of randomness, and results in a total amount of randomness $$R_{\text{total}}^{\text{trivial}}=\sum_{j=1}^{d} H_2\left(\frac{e^{\epsilon_j}}{e^{\epsilon_j}+1}\right).$$ We propose a new scheme, in which each user generates a single output that is publicly accessible by all analysts; each analyst is given a part of the random key that was used to privatize the data, and leverages this key to reduce the perturbation of the public output. The next theorem is proved in Section~\ref{Privlv}.

\begin{theorem}~\label{Th2_3} There exists a private mechanism using a total amount of randomness given by
$R_{\text{total}}^{\text{proposed}}=\sum_{j=1}^{d} H_2\left(q_{j}\right)$,
  such that the $j$th analyst achieves the minimum risk estimation $r^{\ell_{2}^{2}}_{\epsilon_j,\infty,n,k}=\Theta\left(\frac{k}{n\epsilon_j^{2}}\right)$, while preserving privacy of each user with privacy level $\epsilon_j$ for  $j\in[d]$. Here, for every $j\in[d]$, $q_j$ is defined as follows 
  (where $z_j=\frac{1}{e^{\epsilon_j}+1}$):
\begin{equation}~\label{eqn2_2}
q_j=
\begin{cases}
z_j&\text{if}\ j=1, \\
\frac{z_{j}-z_{j-1}}{1-2z_{j-1}}&\text{if}\ j>1.
\end{cases}
\end{equation} 
\end{theorem}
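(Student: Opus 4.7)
The plan is to build the mechanism by successively adding noise to a level-$\epsilon_1$ output so that each additional noise step produces a \emph{more} private (smaller $\epsilon_j$) output, and then give each analyst just enough keys to ``peel off'' the extra layers down to their permitted privacy level. Concretely, I would start from the Binary Hadamard (BH) mechanism of Theorem~\ref{Th2_2} configured at privacy level $\epsilon_1$, and denote its single input-dependent randomized bit by $Z_1$. Then for $j = 2, \ldots, d$ I would draw an independent key $U_j \sim \mathrm{Bern}(q_j)$ and set $Z_j = Z_{j-1} \oplus U_j$. The public output each user broadcasts is $Y := Z_d$, and analyst $j$ is additionally given $U_{j+1}, \ldots, U_d$, which lets them reconstruct $Z_j = Y \oplus U_{j+1} \oplus \cdots \oplus U_d$. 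The per-user randomness is $\sum_{j=1}^{d} H(U_j) = \sum_{j=1}^{d} H_2(q_j)$, since the layer-$1$ BH noise can itself be realized by an independent $\mathrm{Bern}(q_1) = \mathrm{Bern}(z_1)$ bit.

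Next I would verify the recursion that justifies the formula~\eqref{eqn2_2}. Let $P_j := \Pr[U_1 \oplus \cdots \oplus U_j = 1]$. Because the XOR of independent $\mathrm{Bern}(p)$ and $\mathrm{Bern}(q)$ variables is $\mathrm{Bern}(p + q - 2pq)$, we get $P_j = P_{j-1} + q_j - 2 P_{j-1} q_j$ with $P_0 = 0$. Starting from $q_1 = z_1$ gives $P_1 = z_1$, and the expression $q_j = (z_j - z_{j-1})/(1 - 2 z_{j-1})$ is designed so that $P_{j-1} = z_{j-1}$ forces $P_j = z_j$ (a one-line algebra check, using $z_{j-1} < 1/2$ so the denominator is positive). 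Hence the reconstructed $Z_j$ has exactly the noise distribution of the BH mechanism with flip probability $z_j = 1/(e^{\epsilon_j}+1)$, so by Theorem~\ref{Th2_2} analyst $j$ achieves the optimal minimax risk $\Theta(k/(n\epsilon_j^2))$ using the estimator from that theorem.

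For privacy it is essential that analyst $j$'s \emph{full} view $(Y, U_{j+1}, \ldots, U_d)$, not just $Z_j$, satisfies $\epsilon_j$-LDP. This is the step that needs care, but it follows from independence: conditioning on any realization of $U_{j+1}, \ldots, U_d$ (independent of $X$), the map $(Z_j, u_{j+1}, \ldots, u_d) \mapsto (Z_j \oplus u_{j+1} \oplus \cdots \oplus u_d, u_{j+1}, \ldots, u_d)$ is a bijection, so the worst-case likelihood ratio between any two inputs $x, x'$ for the view $(Y, U_{j+1}, \ldots, U_d)$ equals the one for $Z_j$ alone, which is bounded by $e^{\epsilon_j}$ by construction.

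The main obstacle I expect is making sure the binary XOR chain meshes correctly with the $k$-ary Hadamard construction: BH produces only a single input-dependent bit (its column index is public/shared randomness), so the chain operates on that bit while leaving the shared part intact, which is exactly what lets the recursion remain purely binary and keeps the randomness cost at $\sum_j H_2(q_j)$ independent of $k$. A secondary check is that $q_j \in [0,1]$, which holds because $z_j, z_{j-1} < 1/2$ forces both $z_j \geq z_{j-1}$ (since $\epsilon_j < \epsilon_{j-1}$) and $z_j + z_{j-1} \leq 1$, and that the $U_j$'s can be drawn independently across layers and across users so that the per-user entropy budget adds as claimed.
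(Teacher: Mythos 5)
Your proposal is correct and follows essentially the same route as the paper: a cascading XOR of independent $\mathrm{Bern}(q_j)$ keys on top of the binary Hadamard response at level $\epsilon_1$, with the recursion $P_j = P_{j-1}+q_j-2P_{j-1}q_j$ playing the role of the paper's inductive Lemma on the conditional law of $Y_i^j$, and utility inherited from the single-level Hadamard analysis. Your explicit check that the analyst's \emph{joint} view $(Y,U_{j+1},\ldots,U_d)$ — not just the reconstructed $Z_j$ — satisfies $\epsilon_j$-LDP is a welcome extra step of rigor that the paper leaves implicit.
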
  
\begin{remark} 
Note that $z_j>z_{j-1}$ as $\epsilon_{j-1}> \epsilon_{j}$. 
Moreover, we also have $z_j=1/\left(e^{\epsilon_j}+1\right)< 0.5$ for all $j\in\left[d\right]$. 
As a result, we can show that for $j>1$, we have
\begin{equation}
q_j=\frac{z_j-z_{j-1}}{1-2z_{j-1}}=z_j-\frac{z_{j-1}\left(1-2z_{j}\right)}{1-2z_{j-1}}<z_j.
\end{equation}
Hence, we get that $H_2\left(q_j\right)<H_2\left(z_j\right)$ holds for all $j>1$. Therefore,
our proposed scheme uses a strictly smaller
amount of randomness than the trivial scheme.
\end{remark}

\subsection{Private Recoverability, $d=2$}
We here consider a legitimate analyst with permission to access the data $\lbrace X_i\rbrace_{i=1}^{n}$, i.e., $\epsilon_1\to\infty$, and an untrusted analyst with privacy level $\epsilon_2<\infty$. The $i$th user uses a random private key $U_i$ and her mechanism $Q_i$ to generate an output $Y_i$ that is publicly accessible by both analysts. 
\begin{ddd}[\textbf{LDP-Rec mechanisms}]\label{LDP-Rec}
We say that a private mechanism $Q$ is $\epsilon$-LDP-Rec, if it is an $\epsilon$-LDP mechanism and it is possible to recover the input $X$ from output $Y$ and the key $U$.
\end{ddd}
 We derive necessary and sufficient conditions on the random keys $\lbrace U_i \rbrace $ and the  mechanisms $\lbrace Q_i  \rbrace$, such that the legitimate analyst can recover  $ X_i$ from observing $U_i$ and $Y_i$, while preserving privacy level $\epsilon_2$ against the untrusted analyst who does not have access to the keys.

\begin{figure}[t]
  \centerline{\includegraphics[scale=0.4,trim={3.5cm 9.8cm 15cm 1.6cm},clip]{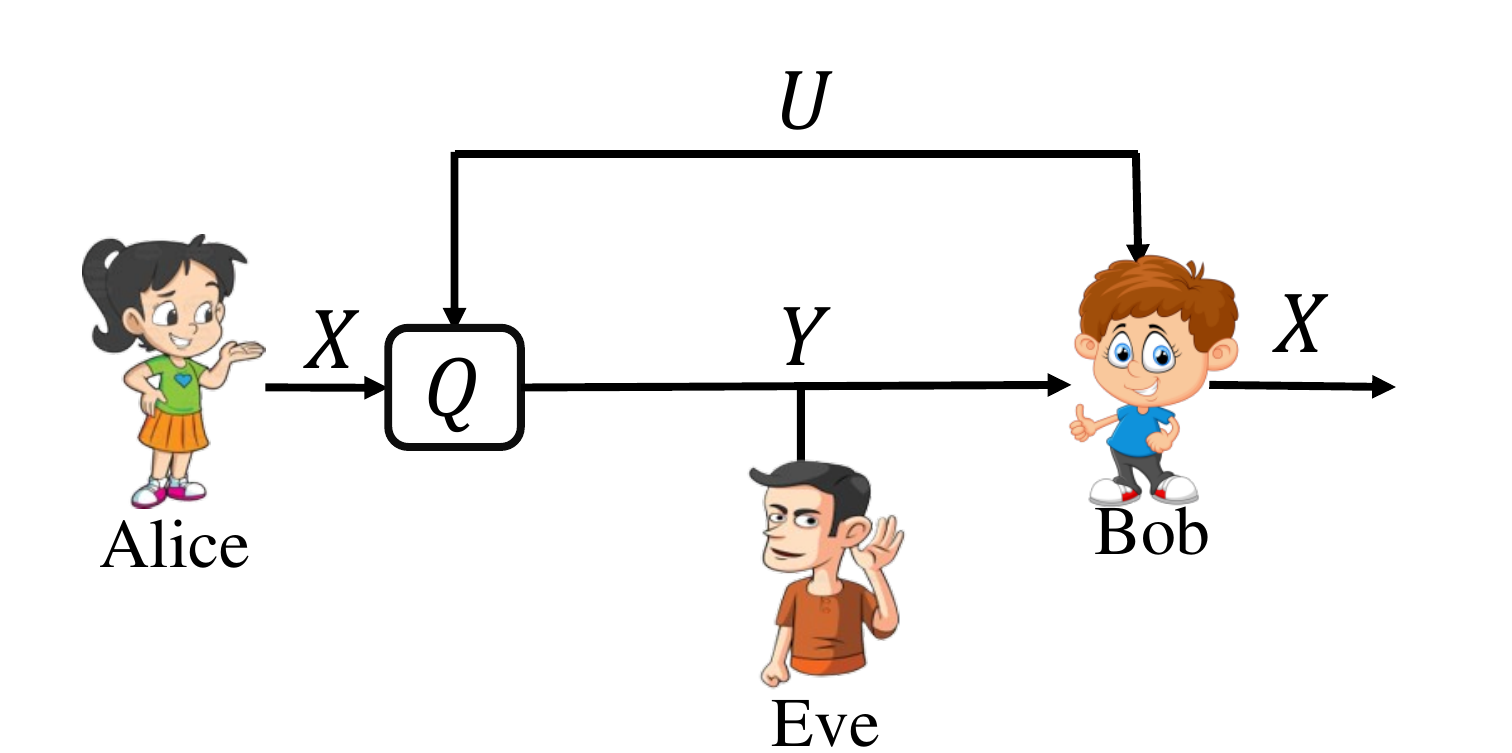}}
 \caption{Private-Recoverability: Alice has data $X$. An $\epsilon$-LDP-Rec mechanism $Q$ is applied to $X$ using a random key $U$ to generate output $Y$. Bob is capable to recover $X$ from $Y$ and $U$. Eve only observes $Y$.}
\label{Fig2_1}
\end{figure}

We first consider a simplified setting as shown in Figure~\ref{Fig2_1}. Alice (an arbitrary user \footnote{Since the input samples $X_1,\ldots,X_n$ are i.i.d., and the random keys $U_1,\ldots,U_n$ are independent random variables, it is sufficient to study the private-recoverable mechanism for any single user.}) has a sample $X\in\mathcal{X}$. Alice wants to send her sample $X$ to Bob (the legitimate analyst) while keeping her sample $X$ private against Eve (the untrusted analyst) with differential privacy level $\epsilon$. Eve has access to the message between Alice and Bob. However, Alice has a random key $U$ shared with Bob that Eve does not have access to. Let $Y$ be the output of the private mechanism $Q$ used by Alice. The following theorem (which we prove in Section~\ref{Recov-A}) provides necessary and sufficient conditions on the random key $U$ and the privatized output $Y$ to generate an $\epsilon$-LDP-Rec mechanism.

\begin{remark}\label{private-recover-general}
Observe that in the simplified model in Figure~\ref{Fig2_1}, we do not impose any assumptions on the input $X$. Furthermore, we do not impose any assumptions about the task for Eve. Hence, our model and results in Theorem~\ref{Th2_4} are applicable to any task for Eve including distribution estimation, heavy hitter estimation, or learning from sample $X$.
\end{remark}

\begin{theorem}~\label{Th2_4}
Let $Q$ be an $\epsilon$-LDP-Rec mechanism that uses a random key $U\in\mathcal{U}$ and an input $X\in\mathcal{X}$ to produce a privatized output $Y\in\mathcal{Y}$. The following conditions are necessary and sufficient to allow recovery of $X$ from $(U,Y)$:\\

(1)  $|\mathcal{U}|\geq |\mathcal{Y}| \geq |\mathcal{X}|$. \\

(2) The entropy of the random key must satisfy $H\left(U\right)\geq H\left(U_{\min}^{s^{*}}\right)$,
where $s^{*}=\arg\min\limits_{s\in\lbrace \ceil{l},\floor{l}\rbrace}H\left(U_{\min}^{s}\right)$ for $l=k\frac{e^{\epsilon}\left(\epsilon-1\right)+1}{\left(e^{\epsilon}-1\right)^2}$ and $U_{\min}^{s}$ is a random variable with support size equal to $|\mathcal{X}|=k$ and has the following distribution:

\begin{align*}
\mathbf{q}_{\min}^{s}=[1/t,\hdots,1/t,e^{\epsilon}/t,\hdots,e^{\epsilon}/t],
\end{align*}
where $t=(se^{\epsilon}+k-s)$, the first $k-s$ terms are equal to $1/t$ and the remaining $s$ terms are equal to $e^{\epsilon}/t$.
\end{theorem}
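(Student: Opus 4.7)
I would split the argument into three parts: (i) necessity of the cardinality condition (1); (ii) necessity of the entropy lower bound (2); and (iii) a matching construction for sufficiency.

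For (1), the plan is to use the representation from Section~\ref{Prelm}: any $\epsilon$-LDP mechanism with key $U$ is equivalent to declaring, for each realisation $u\in\mathcal{U}$, a deterministic map $\phi_u:\mathcal{X}\to\mathcal{Y}$ with $\phi_u(x)=y$ iff $u\in\mathcal{U}_{y,x}$, so that $Y=\phi_U(X)$. Recoverability of $X$ from $(U,Y)$ is exactly the statement that every $\phi_u$ is injective, giving $|\mathcal{Y}|\ge|\mathcal{X}|$. For each fixed $x$ the sets $\{\mathcal{U}_{y,x}\}_{y\in\mathcal{Y}}$ partition $\mathcal{U}$; $\epsilon$-LDP rules out an empty block (otherwise some $Q(y|x)=0$ would make the ratio in~\eqref{eqn1_3} infinite for some neighbour), so $|\mathcal{U}|\ge|\mathcal{Y}|$.

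For (2) the plan is to reduce the problem to a one-dimensional optimisation over two-level probability vectors. I would first use $U\perp X$ and $Y=\phi_U(X)$ to obtain $H(U)\ge H(Y\mid X=x)$ for every $x$, and then combine $\epsilon$-LDP with the recoverability identity $\sum_{x'}Q(y|x')\le 1$ (the sets $\mathcal{U}_{y,x'}$ are disjoint across $x'$ because of injectivity of $\phi_u$) to derive the column bound $Q(y|x)\le e^{\epsilon}/(k+e^{\epsilon}-1)$. Next I would collapse $U$ to its ``type'' $\phi_U$ (this can only lower $H(U)$) and, via a symmetrisation over row/column permutations, reduce to the canonical case $|\mathcal{U}|=|\mathcal{Y}|=k$ in which $Q$ is row-circulant: each row is a cyclic shift of a common vector $\mathbf{q}=(q_1,\dots,q_k)$ and $\epsilon$-LDP becomes the single constraint $\max_i q_i/\min_i q_i\le e^{\epsilon}$. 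Minimising $H(\mathbf{q})$ on the resulting convex polytope, I would invoke concavity of entropy to push the minimiser to an extreme point, which has at most two distinct levels: $s$ entries equal to $e^{\epsilon}/t$ and $k-s$ entries equal to $1/t$ with $t=se^{\epsilon}+k-s$. This is precisely $\mathbf{q}_{\min}^{s}$, with entropy $H(\mathbf{q}_{\min}^{s})=\log t-s e^{\epsilon}\epsilon/t$; differentiating in $s$ and setting to zero yields $t=ke^{\epsilon}\epsilon/(e^{\epsilon}-1)$, i.e.\ $s=l=k(e^{\epsilon}(\epsilon-1)+1)/(e^{\epsilon}-1)^2$, and integrality forces $s^{*}\in\{\lfloor l\rfloor,\lceil l\rceil\}$ with the smaller entropy.

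For (iii) sufficiency, the plan is an explicit construction: identify $\mathcal{X}=\mathcal{Y}=\mathcal{U}=\{0,\ldots,k-1\}$, sample $U\sim\mathbf{q}_{\min}^{s^{*}}$, and set $Y=X\oplus U$ (addition mod $k$). Each row of $Q$ is a cyclic shift of $\mathbf{q}_{\min}^{s^{*}}$, so every column-ratio is exactly $e^{\epsilon}$ (yielding $\epsilon$-LDP), and $X=Y\ominus U$ gives recoverability; this achieves equality in both (1) and (2).

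The main obstacle is the reduction to the row-circulant, $|\mathcal{Y}|=k$ case in part (2): I need to argue that any $\epsilon$-LDP-Rec mechanism can be transformed—by collapsing $U$ to its induced injection, by trimming unused outputs when $|\mathcal{Y}|>k$, and by averaging over column permutations—into a mechanism whose key has no larger entropy and still satisfies both the privacy and recoverability constraints. Once this symmetric normal form is established, the remaining concave minimisation on a one-parameter family of two-level distributions is routine calculus, as is the final integer optimisation over $s\in\{\lfloor l\rfloor,\lceil l\rceil\}$.
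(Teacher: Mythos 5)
Your outline coincides with the paper's proof in most places: the necessity of condition (1) via the injectivity of the maps $\phi_u$ and the non-emptiness of every block $\mathcal{U}_{yx}$, the observation that once $|\mathcal{U}|=|\mathcal{Y}|=k$ the recoverability constraint forces each column of $Q$ to contain every key value exactly once (so that $\epsilon$-LDP becomes $q_{\max}/q_{\min}\le e^{\epsilon}$ on the key distribution itself), the concave minimization of $H(\mathbf{q})$ over the resulting polytope landing on a two-level vertex, the calculus giving $\tilde{s}=k\frac{e^{\epsilon}(\epsilon-1)+1}{(e^{\epsilon}-1)^2}$, and the sufficiency construction $Y=X\oplus U$ are all exactly the paper's steps (Lemma~\ref{lemm5_1} and Section~\ref{Recov-A}).

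The genuine gap is the reduction from an arbitrary key with $|\mathcal{U}|>k$ to the canonical case $|\mathcal{U}|=|\mathcal{Y}|=k$, which you correctly identify as the main obstacle but for which your proposed tools do not work. Collapsing $U$ to its induced injection $\phi_U$ lowers entropy but leaves you with a key supported on up to $k!$ bijections, not $k$ symbols; there are no ``unused outputs'' to trim (every input maps to every output with positive probability by the privacy constraint); and symmetrising by averaging over column permutations goes in the wrong direction, since mixing distributions can only \emph{increase} entropy, so it cannot produce a surrogate key $U'$ with $H(U')\le H(U)$. The paper needs a dedicated argument here (Lemma~\ref{lemm5_2}): fix an output $y$ whose preimage contains the most probable key value, replace $U$ by the coarser random variable whose atoms are the $k$ probabilities $Q(y|x)$ together with the leftover key values, verify that the top $k$ atoms of this coarsening still satisfy the ratio bound $e^{\epsilon}$, and then repeatedly trim the smallest atom and renormalise, each trim provably decreasing entropy (Lemma~\ref{lemmf_1}). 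Nothing in your sketch substitutes for this. A second, smaller point: the column bound $Q(y|x)\le e^{\epsilon}/(k+e^{\epsilon}-1)$ combined with $H(U)\ge H(Y|X=x)$ cannot rescue the argument on its own --- minimising row entropy subject only to that cap yields a distribution with at most $\lceil (k+e^{\epsilon}-1)/e^{\epsilon}\rceil$ nonzero atoms whose entropy is strictly below $H(U_{\min}^{s^{*}})$ already for $k=3$, $e^{\epsilon}=2$ --- so the full two-level structure with ratio exactly $e^{\epsilon}$ really does require the reduction plus the Latin-square/column argument.
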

We now discuss the effect of $\epsilon$ on the structure of optimal distribution $\mathbf{q}_{\min}^{s^*}$ for $U_{\min}^{s^*}$:
{\sf (i)} When $\epsilon \gg \log(k)$, the optimal $s^*=1$, and the corresponding $\mathbf{q}_{\min}^{1}$ has its first $k-1$ terms equal to $1/(e^{\epsilon}+k-1)$ and the last term equal to $e^{\epsilon}/(e^{\epsilon}+k-1)$. This distribution is equivalent to the one used in the Randomized Response (RR) model proposed in~\cite{warner1965randomized}. 
{\sf (ii)} When $\epsilon\to0$, the optimal $s^*$ is around $k/2$, and the corresponding $\mathbf{q}_{\min}^{k/2}$ has its first $k/2$ terms equal to $2/k(e^{\epsilon}+1)$ and the remaining $k/2$ terms equal to $2e^{\epsilon}/k(e^{\epsilon}+1)$.
{\sf (iii)} When $\epsilon=0$, the distribution $q_{\min}^s$ becomes uniform (irrespective of the value of $s$). Thus, when $\epsilon$ decreases, the distribution $\mathbf{q}_{\min}^{s}$ approaches to the uniform distribution. On the other hand, when $\epsilon$ increases, the distribution $\mathbf{q}_{\min}^{s}$ becomes skewed. It turns out that the minimum randomness required to generate an $\epsilon$-LDP-Rec mechanism for input recoverability is a non-increasing function of $\epsilon$. In other words, more privacy requires more randomness.

\begin{remark} Consider the cryptosystem introduced by Shannon in~\cite{shannon1949}, where Alice wants to send a secure message $X$ to Bob using a shared random key $U$. Let $Y$ be the encrypted message sent to Bob. Eve eavesdrops the channel between Alice and Bob and observes $Y$. This cryptosystem achieves {\em perfect secrecy} if and only if $I\left(X;Y\right)=0$. 
Shannon showed that perfect secrecy requires $H\left(U\right)\geq H\left(X\right)$. Since the distribution of $X$ is not known to any node (Alice, Bob, and Eve), this implies $H\left(U\right)\geq\max_{p_X\in\Delta_k}H(X)=\log k$. We can easily verify that the $\epsilon$-LDP-Rec mechanism satisfies a cryptosystem with secrecy measure $\max_{\mathbf{p}\in\Delta_k}I\left(X;Y\right)\leq \epsilon$. Hence, a perfect secrecy system with unknown input distribution is a $0$-LDP-Rec mechanism, which is a special case of our problem. Moreover, the $\epsilon$-LDP-Rec mechanism with data recovery is a cryptosystem leaking an amount of information measured by $\max_{\mathbf{p}\in\Delta_k}I\left(X;Y\right)\leq \epsilon$. 
\end{remark}

Observe that Theorem~\ref{Th2_4} does not provide performance guarantees for Eve, it only guarantees privacy for Alice with respect to Eve, and recoverability for Bob. Hence, we can ask the question: Does there exist an $\epsilon$-LDP-Rec mechanism using the smallest amount of randomness and guaranteeing the smallest error for distribution estimation or heavy hitter estimation for Eve (the untrusted analyst)? In the following theorem (which we prove in Section~\ref{Recov-B}), we show that such a mechanism exists.

\begin{theorem}~\label{Th2_5} The Hadamard Response mechanism from~\cite{acharya2018hadamard} satisfies private-recoverability, and is utility-wise order-optimal for distribution estimation and heavy hitter estimation while using an order-optimal amount of randomness. 
\end{theorem}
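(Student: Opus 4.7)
The plan is to exhibit an implementation of the Hadamard Response (HR) mechanism of Acharya et al.\ that (a) is invertible once the realization of the random key is revealed, so it satisfies Definition~\ref{LDP-Rec}; (b) uses an amount of randomness matching, up to a constant factor, the lower bound $H(U_{\min}^{s^*})$ from Theorem~\ref{Th2_4}; and (c) inherits the order-optimal utility guarantees for distribution and heavy hitter estimation already known for HR in the unrestricted-randomness setting. First I would recall the structure of HR for input alphabet $\mathcal{X}=[k]$: it uses a $K\times K$ Hadamard matrix with $K=\Theta(k)$ and associates with each input $x$ a ``codebook'' $C_x\subseteq[K]$ with $|C_x|\approx K/2$; the conditional law $Q(y\mid x)$ is proportional to $e^{\epsilon}$ when $y\in C_x$ and to $1$ otherwise, and the output alphabet has $m=|\mathcal{Y}|=\Theta(k)$.

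Next, I would realize HR as $Y=f(X,U)$ with an invertible $f$ in the sense required by private-recoverability. I take $\mathcal{U}=\mathcal{Y}$ (so $|\mathcal{U}|=m=\Theta(k)$) and use the representation from~\eqref{eqn2_4}: partition $\mathcal{U}$ into subsets $\mathcal{U}_{yx}$ such that $u\in\mathcal{U}_{yx}$ deterministically maps input $x$ to output $y$. The key structural property of the Hadamard codebook I would invoke is that for every $u\in[m]$, the set $\{x\in[k]:u\in C_x\}$ has size exactly $k/2$ (a standard Hadamard balancedness fact). This allows me to let $f(x,u)$ be the unique codeword of a small auxiliary permutation applied to $u$ so that, for each fixed $u$, the mapping $x\mapsto f(x,u)$ is an injection into $\mathcal{Y}$ whose induced marginal $Q(y\mid x)=\sum_{u\in\mathcal{U}_{yx}}q_u$ reproduces the HR law. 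Injectivity in $x$ per fixed $u$ is precisely what yields recoverability: given $(u,y)$, a single table lookup returns $x$.

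The entropy of $U$ is then the content of the matching. With $\mathcal{U}$ of size $\Theta(k)$ and $q_u\in\{1/t,\,e^{\epsilon}/t\}$ in proportions $\approx k/2$ and $\approx k/2$ with $t=(k/2)e^{\epsilon}+k/2$, the distribution of $U$ coincides with $\mathbf{q}_{\min}^{s}$ at $s\approx k/2$, which is exactly the minimizer $s^*$ of Theorem~\ref{Th2_4} in the high-privacy regime $\epsilon=O(1)$; hence $H(U)=\Theta\bigl(H(U_{\min}^{s^*})\bigr)$ in that regime. In the low-privacy regime $\epsilon\gg\log k$, where the theorem identifies $s^*=1$, HR degenerates to a randomized-response style rule and the same argument (with a singleton large-mass point) gives $H(U)=\Theta\bigl(H(U_{\min}^{1})\bigr)$. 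Order-optimality of randomness across both regimes then follows from Theorem~\ref{Th2_4}.

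For utility, since the marginal law of $Y$ given $X$ coincides with the standard HR law by construction, I would invoke directly the utility analyses in~\cite{acharya2018hadamard,acharya2019communication}: HR with an unbiased debiasing estimator achieves $\ell_2^2$ risk $\Theta(k/(n\epsilon^2))$ for distribution estimation and $\ell_{\infty}$ error $\Theta\bigl(\sqrt{\log k/(n\epsilon^2)}\bigr)$ for heavy hitter estimation, both order-optimal. No new estimator analysis is needed because Eve's observation $Y^n$ has exactly the HR distribution, and our construction only adds correlated side information to Bob. The main technical obstacle I anticipate is the combinatorial step of matching the partition $\{\mathcal{U}_{yx}\}$ to the Hadamard codebook so that the per-$u$ map is injective in $x$ while simultaneously reproducing the HR marginals; this will reduce to balancedness and regularity properties of the Hadamard matrix, which hold by construction, so after this bookkeeping the three claims assemble immediately.
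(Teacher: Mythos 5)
Your proposal follows essentially the same route as the paper's proof: represent HR through an input-independent key $U$ with the two-valued distribution $\mathbf{q}^{\text{HR}}$, argue recoverability from the resulting deterministic, per-key-injective map $Y=f(X,U)$, compare $H(U)$ against the lower bound of Theorem~\ref{Th2_4}, and import the known utility guarantees from~\cite{acharya2018hadamard,acharya2019communication}. Two details are inaccurate, though neither breaks the order-level conclusion in the regime $\epsilon=\mathcal{O}(1)$. First, the HR of~\cite{acharya2018hadamard} uses codebooks of size $s=b/2\approx k/(2\min\{e^{\epsilon},2k\})$ inside an output alphabet of size $K=\Theta(k)$, so the key has $s$ heavy and $K-s$ light values; your ``$\approx K/2$ heavy and $\approx K/2$ light'' description is accurate only as $\epsilon\to 0$. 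Second, the HR key distribution does not ``coincide exactly'' with $\mathbf{q}_{\min}^{s^{*}}$: the minimizer $s^{*}\approx k\frac{e^{\epsilon}(\epsilon-1)+1}{(e^{\epsilon}-1)^{2}}$ differs from HR's $s$ (roughly by a factor of $\epsilon$), so instead of claiming coincidence of distributions you must, as the paper does in~\eqref{eqn5_6}--\eqref{eqn5_7}, upper-bound $H^{\text{HR}}(U)$ and lower-bound $H^{\min}(U)$ separately and compare; for $\epsilon=\mathcal{O}(1)$ both are $\Theta(\log k)$, so your conclusion stands after this repair.
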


\subsection{Sequence of Distribution (or Heavy Hitter) Estimation}
We again start from the setting in Figure~\ref{Fig2_1}, but with the modification that Alice (an arbitrary user) wants to send to Bob (a legitimate analyst) $T$ independent samples $X^{T}=\left(X^{\left(1\right)},\ldots,X^{\left(T\right)}\right)$, where $X^{\left(t\right)}\in\mathcal{X}$, while keeping them private against Eve (an untrusted analyst) with differential privacy level $\epsilon$. 
Eve  has access to the sequence of outputs  $Y^{T}=\left(Y^{\left(1\right)},\ldots,Y^{\left(T\right)}\right)$ that Alice produces,  but not to the random key $U$ that Alice and Bob share.  Note that each output $Y^{\left(t\right)}$ might be a function of all input samples $X_1^{t}=\left(X^{\left(1\right)},\ldots,X^{\left(t\right)}\right)$ and the key $U$. Furthermore, the output $Y^{\left(t\right)}$ can take values from a set $\mathcal{Y}^{\left(t\right)}$ that is not required to be the same as $\mathcal{Y}^{\left(t^{\prime}\right)}$ for $t\neq t^{\prime}$. Let $\mathcal{Y}^{T}=\mathcal{Y}^{\left(1\right)}\times\cdots\times \mathcal{Y}^{\left(T\right)}$. The following theorem is proved in Section~\ref{Recov_GDP}.

We can define $\epsilon$-DP-Rec mechanisms in the same way as we defined $\epsilon$-LDP-Rec mechanisms in Definition~\ref{LDP-Rec}: A mechanism $Q$ is $\epsilon$-DP-Rec, if it satisfies \eqref{eqn1_2}, and allows the recovery of input $X$ from the output $Y$ and the key $U$.
\begin{theorem}~\label{Th2_6}
Let $Q$ be an $\epsilon$-DP-Rec mechanism that uses a random key $U\in\mathcal{U}$ and an input database $X^{T}\in\mathcal{X}^{T}$ to create an output $Y^{T}\in\mathcal{Y}^{T}$. The following conditions  are necessary and sufficient to  allow recovery of the input $X^{T}$ from $(U,Y^{T})$.\\
(1) $|\mathcal{U}|\geq |\mathcal{Y}^{T}| \geq |\mathcal{X}^{T}|$. \\
(2) The entropy of the random key must satisfy $H\left(U\right)\geq  T \min\limits_{s^{*}\in\lbrace \ceil{l},\floor{l}\rbrace} H\left(U_{\min}^{s^{*}}\right)$, where $U_{\min}^{s}$ is the same random variable with support size $|\mathcal{X}|=k$, as defined in Theorem~\ref{Th2_4}.
\end{theorem}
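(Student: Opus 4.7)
The plan is to extend Theorem~\ref{Th2_4} from one sample to $T$ samples by a sequential argument that grants each coordinate its own slice of the key's entropy and then combines the slices via the chain rule.

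\textbf{Sufficiency.} I take $U = (U^{(1)}, \ldots, U^{(T)})$ with the $U^{(t)}$ mutually independent copies of the minimum-entropy key $U_{\min}^{s^{*}}$ from Theorem~\ref{Th2_4}, and at time $t$ apply the single-sample $\epsilon$-LDP-Rec mechanism of Theorem~\ref{Th2_4} to $X^{(t)}$ using the fresh key $U^{(t)}$. Neighboring databases differ in exactly one coordinate $t_{0}$, so the ratio $\Pr[Y^{T} = y^{T} \mid X^{T}] / \Pr[Y^{T} = y^{T} \mid X'^{T}]$ collapses to the single-sample ratio at $t_{0}$, which is bounded by $e^{\epsilon}$; recoverability and the cardinality bounds are inherited coordinate by coordinate; and $H(U) = \sum_{t} H(U^{(t)}) = T \cdot H(U_{\min}^{s^{*}})$ by independence.

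\textbf{Necessity.} Condition~(1) follows exactly as in Theorem~\ref{Th2_4}: injectivity of $\phi_{u}: \mathcal{X}^{T} \to \mathcal{Y}^{T}$ for every $u$ (needed for recovery) yields $|\mathcal{Y}^{T}| \geq |\mathcal{X}^{T}|$; and finite $\epsilon$ makes every $y^{T}$ reachable from every $x^{T}$, so for any fixed $x^{T}$ the sets $\{\mathcal{U}_{y^{T}, x^{T}}\}_{y^{T}}$ partition $\mathcal{U}$ into $|\mathcal{Y}^{T}|$ nonempty parts, giving $|\mathcal{U}| \geq |\mathcal{Y}^{T}|$. For condition~(2), since $U$ is independent of $X^{T}$ and $Y^{T}$ is a deterministic function of $(X^{T}, U)$, the chain rule gives $H(U) \geq H(Y^{T} \mid X^{T})$, so it suffices to prove the stronger statement $H(Y^{T} \mid X^{T}) \geq T \cdot H(U_{\min}^{s^{*}})$ by induction on $T$. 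For the inductive step I use $Y^{(1)} = f_{1}(X^{(1)}, U)$ and decompose $H(Y^{T} \mid X^{T}) = H(Y^{(1)} \mid X^{(1)}) + H(Y_{2}^{T} \mid Y^{(1)}, X^{T})$. Conditioning on any past $(X^{(1)} = x^{(1)}, Y^{(1)} = y^{(1)})$ of positive probability, the residual $(T-1)$-coordinate mechanism driven by the posterior key $U \mid X^{(1)}, Y^{(1)}$ remains $\epsilon$-DP (because $Y^{(1)}$ depends on $X^{T}$ only through $X^{(1)}$, so ratios for neighbors in coordinates $\geq 2$ are unaffected by the conditioning) and remains recoverable (sequential recovery of $X_{2}^{T}$); the induction hypothesis then bounds the second term by $(T-1) H(U_{\min}^{s^{*}})$.

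\textbf{Main obstacle.} The crux is the base case $T = 1$: showing that $H(Y \mid X = x) \geq H(U_{\min}^{s^{*}})$ for every single-sample $\epsilon$-LDP-Rec mechanism. Theorem~\ref{Th2_4} controls the key entropy but not directly the output entropy, so I would reuse the same non-convex minimization technique, now applied to $H(Q(\cdot \mid x))$ over probability distributions on $|\mathcal{Y}| \geq k$ elements subject to the $\epsilon$-LDP ratio constraints imposed by the $k-1$ other conditional distributions $Q(\cdot \mid x')$. The optimum is attained by the same balanced-skewed distribution $\mathbf{q}_{\min}^{s^{*}}$, and once the base case is secured the induction above closes the proof.
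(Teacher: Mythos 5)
Your sufficiency argument (product of $T$ independent copies of $U_{\min}^{s^{*}}$, applied coordinate-wise) and your proof of condition (1) are fine and match the paper. The gaps are both in the necessity of condition (2). First, the temporal chain-rule induction cannot charge $H(U_{\min}^{s^{*}})$ to the first output. Nothing in the hypotheses makes $X^{(1)}\mapsto Y^{(1)}$ an $\epsilon$-LDP-\emph{Rec} mechanism on its own: the setup explicitly allows $Y^{(t)}$ to depend on all of $X^{(1)},\ldots,X^{(t)}$, so a legal mechanism can make $Y^{(1)}$ constant and release an injective encoding of the whole database through $Y^{(T)}$ (joint recoverability is preserved). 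Then $H(Y^{(1)}\mid X^{(1)})=0$, the residual $(T-1)$-step induction hypothesis only returns $(T-1)H(U_{\min}^{s^{*}})$, and your decomposition falls short of $T\,H(U_{\min}^{s^{*}})$ even though the theorem's conclusion still holds. The paper avoids this entirely by never decomposing along outputs: it fixes an output $\mathbf{y}$, forms the normalized column distribution $\mathbf{P}^{\mathbf{y}}$ over the $k^{T}$ \emph{inputs}, and peels off the first \emph{input} coordinate (Lemma~\ref{lemm6_2}), which is insensitive to how information is scheduled across the $Y^{(t)}$'s.

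Second, the base case you flag as the main obstacle is not resolvable by "the same non-convex minimization applied to $H(Q(\cdot\mid x))$." The row $Q(\cdot\mid x)$ does not satisfy a within-vector ratio bound $\max_y Q(y\mid x)/\min_y Q(y\mid x)\le e^{\epsilon}$ (the LDP constraint couples different rows, not entries of one row), so the optimization of Theorem~\ref{Th2_4} simply does not apply to it, and its optimum is not $\mathbf{q}_{\min}^{s^{*}}$. The constraints that actually force the row entropy up come from recoverability through the columns: disjointness of the key sets gives $s_{y}=\sum_{x}Q(y\mid x)\le 1$ with $\sum_{y}s_{y}=k$. The paper's pivotal step is the exact transposition $\frac{1}{k}\sum_{x}H(Y\mid X=x)=\frac{1}{k}\sum_{y}\bigl[s_{y}H(\mathbf{P}^{y})-s_{y}\log s_{y}\bigr]$, after which each $\mathbf{P}^{y}$ \emph{does} satisfy the max/min $\le e^{\epsilon}$ constraint and Theorem~\ref{Th2_4}'s minimization applies, followed by a concave minimization over the $s_{y}$'s. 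Note also that this only yields the \emph{average} over $x$ of $H(Y\mid X=x)$ being at least $H(U_{\min}^{s^{*}})$, whereas you assert the pointwise bound for every $x$, which is stronger than what is established and not obviously true. Without the row-to-column exchange your base case is unproved, and with the temporal decomposition your inductive step is unsound; both need to be replaced by (something equivalent to) the column-distribution argument.
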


\begin{figure}[t!]
 \centering
 \begin{subfigure}[t]{0.49\linewidth}
 \centerline{\includegraphics[scale=0.32]{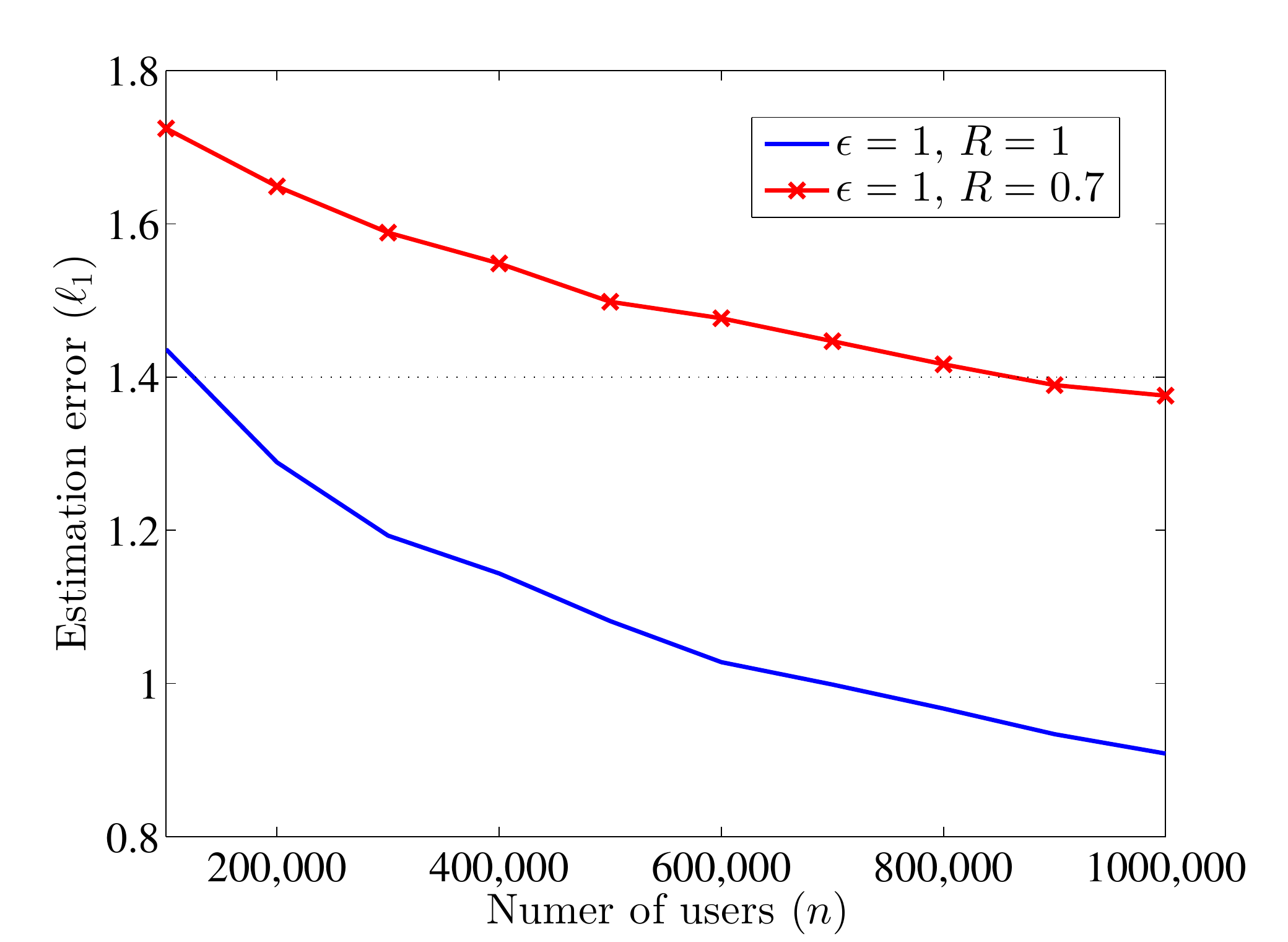}}
    \caption{$\ell_1$-estimation error for input alphabet size $k=1000$, privacy level $\epsilon=1$, and $\mathbf{p}=\text{Geo}\left(0.8\right)$.}
    ~\label{Fig4_1}
 \end{subfigure}
 \hfill
\begin{subfigure}[t]{0.49\linewidth}
    \centerline{\includegraphics[scale=0.407,width=7cm,height=5cm]{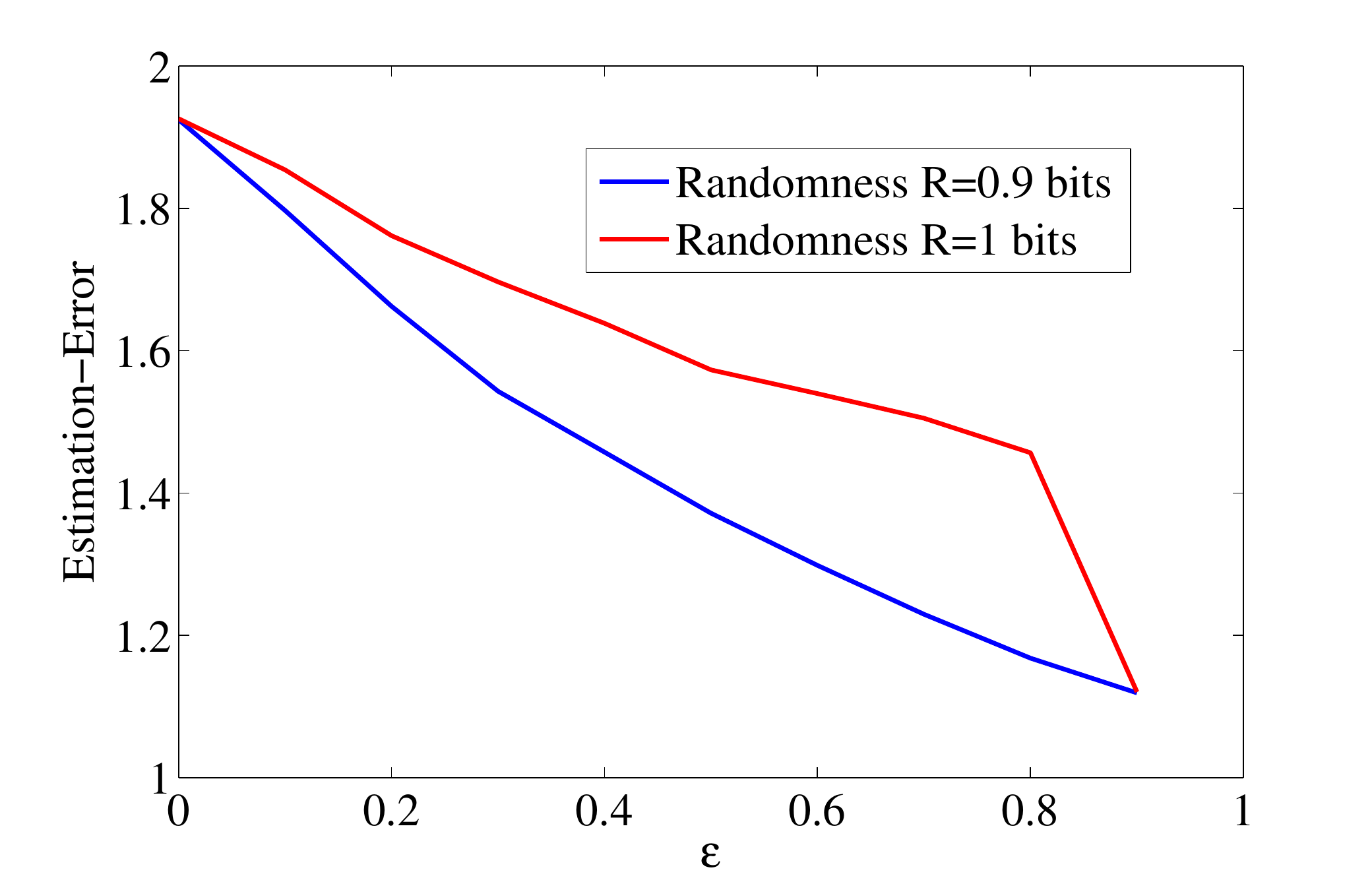}}
 \caption{Estimation error for input alphabet size $k=1000$, number of users $n=500000$, and $\mathbf{p}=\text{Geo}\left(0.8\right)$.}
 ~\label{Fig_2}
 \end{subfigure}
 \caption{Single-level privacy}
\end{figure}

Theorem~\ref{Th2_6} shows that the minimum amount of randomness required to preserve privacy of $T$ samples is equal to $T$ times the amount of randomness required to preserve privacy of a single sample. That is, for $\epsilon$-DP-Rec, it is optimal to use an $\epsilon$-LDP-Rec mechanism $T$ times. 

\begin{remark}
Observe that Theorem~\ref{Th2_6} is applicable in a $n$-user setting (by setting $T=n$), where user $i$ has a single sample $X^{(i)}$, and all users have access to a shared random key $U$. So we have that shared randomness among users does not help in reducing the overall required amount of randomness. 
\end{remark}

\section{Numerical Evaluation} \label{Num}

In this section, we numerically validate our theoretical results through simulation.

\begin{figure}[t!]
 \centering
 \begin{subfigure}[t]{0.49\linewidth}
    \centerline{\includegraphics[scale=0.32]{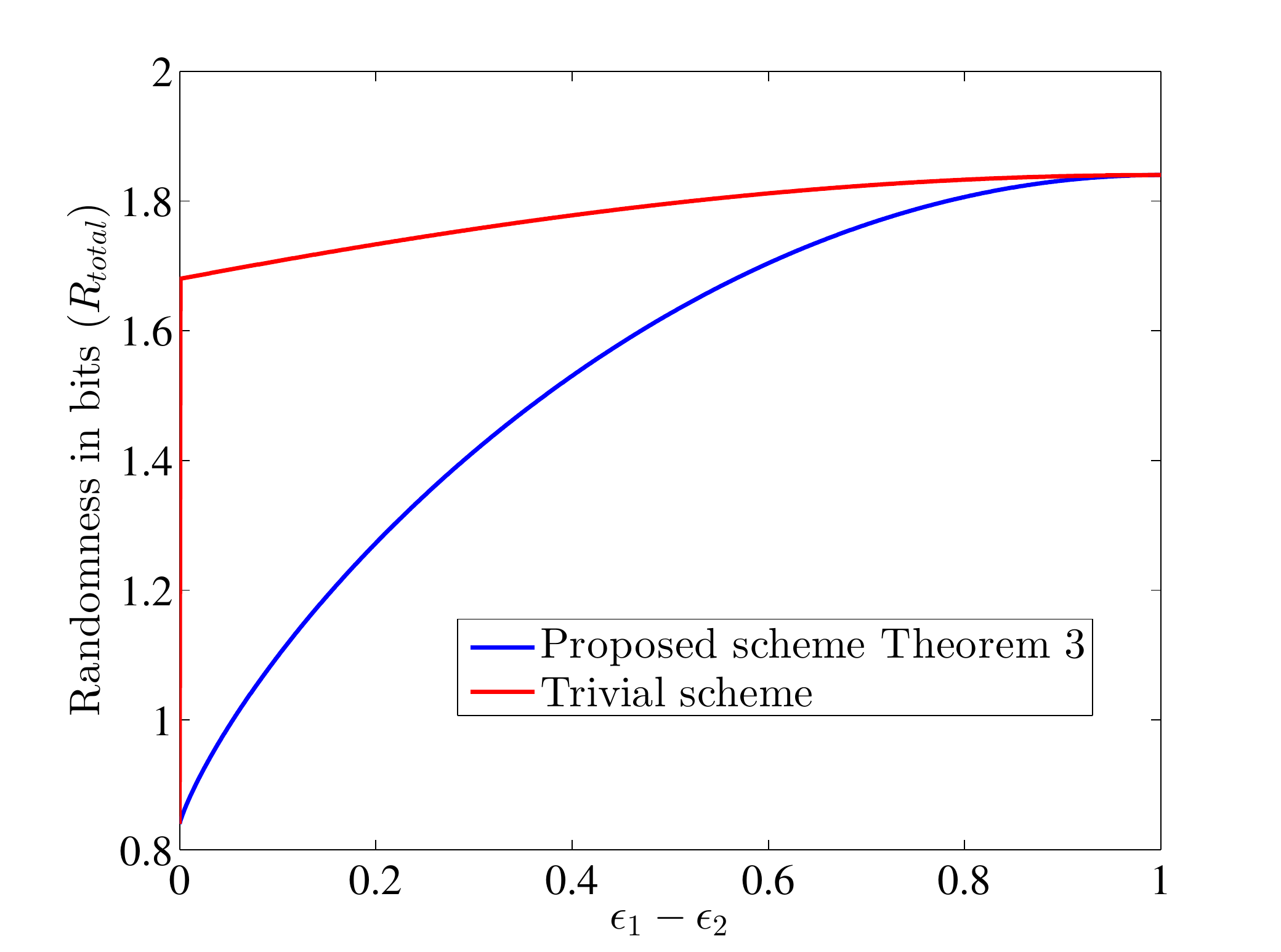}}
    \caption{Comparison between our privacy scheme proposed in Theorem~\ref{Th2_3} and the trivial scheme for two privacy levels $\epsilon_1=1$ and $\epsilon_2=[0.01:1]$.}
    ~\label{Fig4_2a}
 \end{subfigure}
 \hfill
\begin{subfigure}[t]{0.49\linewidth}
    \centerline{\includegraphics[scale=0.32]{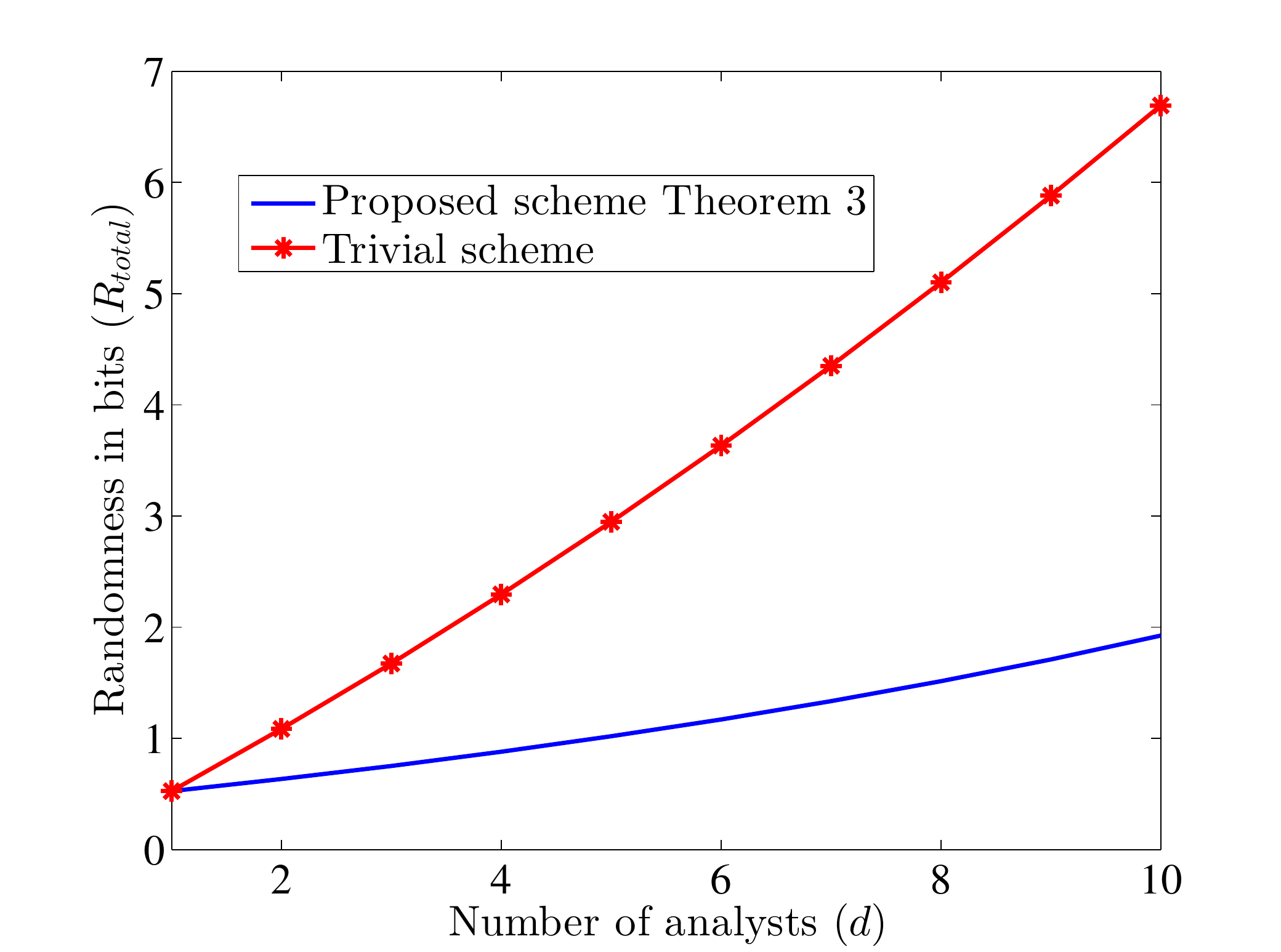}}
    \caption{ Comparison between our privacy scheme proposed in Theorem~\ref{Th2_3} and the trivial scheme for $d$ privacy levels $\epsilon_1=2$ and $\epsilon_j=\epsilon-0.1j$ for $j\in\left[2:d\right]$.}
    ~\label{Fig4_2b}
 \end{subfigure}
 \caption{Multi-level privacy}\vspace{-6mm}
\end{figure}

\textbf{Single-level privacy:} In this part, we investigate the performance of the estimator presented in Theorem~\ref{Th2_2} for a single-level privacy. Each point is obtained by averaging over $20$ runs. In Figure~\ref{Fig4_1}, we plot the estimation error for the $\ell=\ell_1$ loss function ($\|\mathbf{p}-\hat{p}\left(Y^{n}\right)\|_1$) for estimating a discrete distribution $\mathbf{p}\in\Delta_{k}$. The input size is $k=1000$, the number of users is $n\in\left[10^{5}:10^{6}\right]$, and the privacy level is $\epsilon=1$ for two values of randomness $R\in\lbrace 0.7,1\rbrace$ bits per user. The input samples are drawn from a Geometric distribution with parameter $q=0.8$ ($\text{Geo}\left(0.8\right)$), in which $p_i= C q^{i-1}\left(1-q\right)$ for $i\in\left[k\right]$, where $C$ is a normalization term. Figure~\ref{Fig4_1} shows that the number of users required to achieve a certain estimation error increases as the amount of randomness per user decreases. For instance, to achieve an $\ell_1$-error equal to $1.4$, we need $n\approx 150,000$ users if $R=1$ bits per user, while we need $n\approx 850,000$ users if $R=0.7$ bits per user. 

Figure~\ref{Fig_2} depicts the $\ell_1$ estimation error as a function of the privacy level $\epsilon$ for input size $k=1000$ and number of users $n=500000$ for two different values of randomness $R\in\lbrace 1,0.6\rbrace$ bits per user. As we discussed in Theorem~$1$, for each privacy level $\epsilon$, there is a critical point of randomness $R=H\left(e^{\epsilon}/\left(e^{\epsilon}+1\right)\right)$. When each user has $R<H\left(e^{\epsilon}/\left(e^{\epsilon}+1\right)\right)$ bits of randomness, then the $\ell_1$ estimation loss increases as the randomness $R$ decreases. While when each user has $R\geq H\left(e^{\epsilon}/\left(e^{\epsilon}+1\right)\right)$ bits of randomness, the estimation error is not affected by the amount of randomness $R$. In Figure~\ref{Fig_2}, we find that the $\ell_1$ error depends on the randomness $R$ for all $\epsilon<0.8$, since we have $R=0.9<H\left(e^{\epsilon}/\left(e^{\epsilon}+1\right)\right)$ for all $\epsilon<0.8$.

\textbf{Multi-level privacy:} Figure~\ref{Fig4_2a} and Figure~\ref{Fig4_2b} compare our proposed scheme in Theorem~\ref{Th2_3} with the trivial scheme with respect to the total amount of randomness used. In the trivial scheme, each user generates $d$ different privatized samples, one for each analyst. In Figure~\ref{Fig4_2a} we consider two privacy levels $\epsilon_1=1$ and $\epsilon_2\leq\epsilon_1$. We find that when $\epsilon_1-\epsilon_2$ is small, then the trivial scheme requires approximately twice the total amount of randomness used in our  scheme. However, when $\epsilon_1- \epsilon_2$ is large, then our scheme and the trivial scheme use similar amounts of randomness. In Figure~\ref{Fig4_2b}, we consider $d\in\left[1:10\right]$,   $\epsilon_1=2$ and $\epsilon_j=\epsilon_1 - 0.1 j$, for $j\in\lbrace 2,\ldots,d\rbrace$. We find that the gap between  the amount of randomness used in our  scheme and the trivial scheme  increases with $d$.

\textbf{Private-recoverability:} Observe that each user needs $\log\left(k\right)$ bits to store her input sample $X\in\left[ k\right]$, since she does not know the distribution $X\sim\mathbf{p}$. In private-recoverability, we can recover  $X$ from observing  $Y$ and $U$; hence, we only need to store $U$. Figure~\ref{Fig4_3} plots the number of bits required to store $U$ (see Theorem~\ref{Th2_4}) as a function of the privacy level $\epsilon$ and different values of input size $k\in\lbrace 10,100,1000\rbrace$. The black lines represent the $\log\left(k\right)$ bits required to store $X$ (an additional secure copy). Note that the amount of bits needed to store $U$ is strictly smaller than $\log\left(k\right)$ for $\epsilon>0$, and decreases as the privacy level $\epsilon$ increases. Observe that the gain in Fig~\ref{Fig4_3} is per user. Hence, the total amount of saving in storage would be considerable when the number of users is large and $\epsilon>0$. For example, when $\epsilon=5$, alphabet size $k=2,4,10$, we get gain in efficiency $\frac{\log\left(k\right)-H\left(U\right)}{\log\left(k\right)}$ of $94.2\%$, $91.4\%$, and $85\%$ respectively.

\begin{figure}[t!]
   \centerline{\includegraphics[scale=0.35]{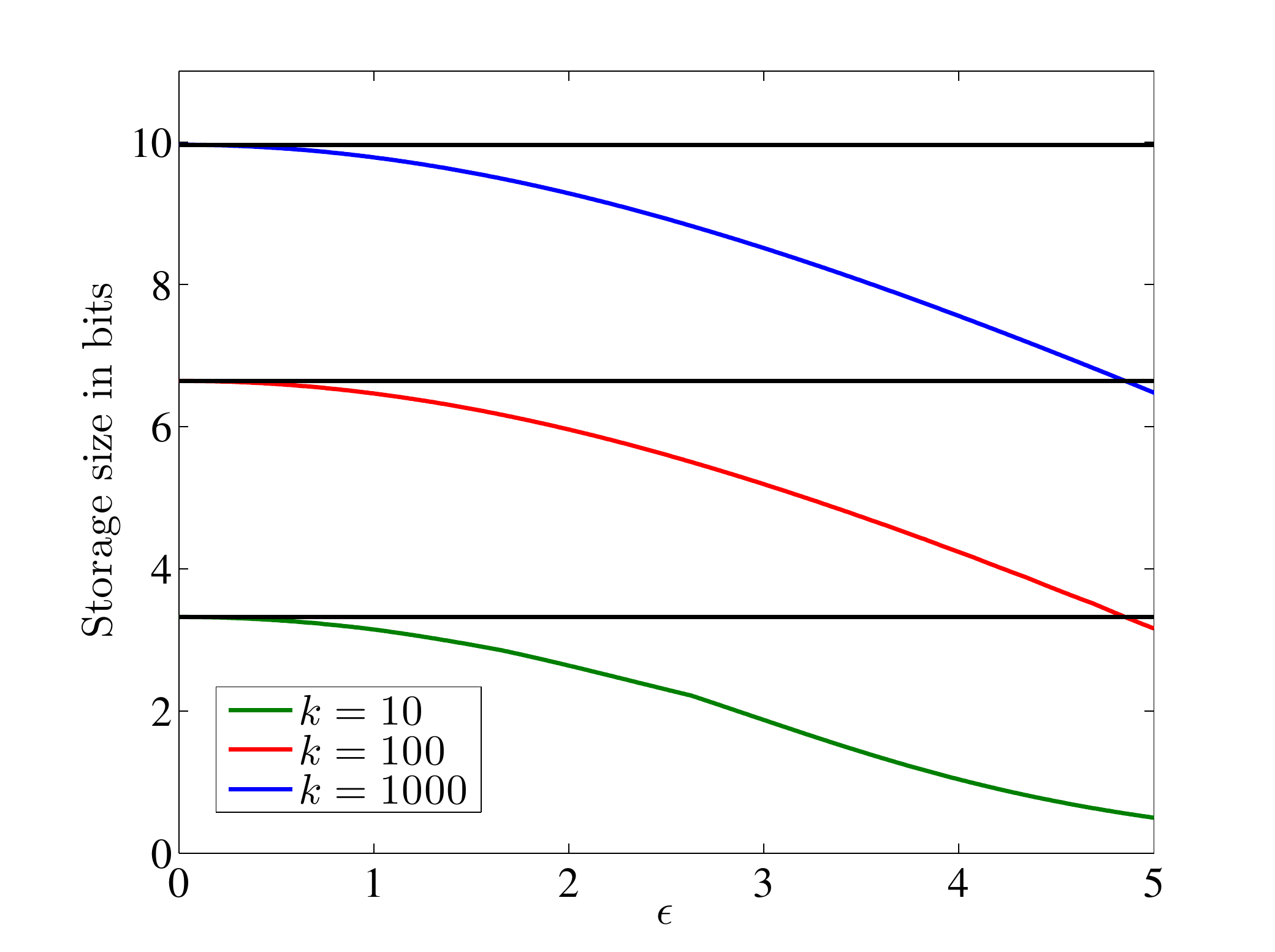}}
    \caption{Comparison between  storage required for $X$ and a random key $U$, for input alphabet sizes $k\in\lbrace 10,100,1000\rbrace$. The black lines represent $\log\left(k\right)$.}
    ~\label{Fig4_3}\vspace{-6mm}
 \end{figure}

\section{Single-level Privacy (Proofs of Theorem~\ref{Th2_1} and Theorem~\ref{Th2_2})} \label{LDP} 

\subsection{Lower Bound on The Minimax Risk Estimation Using Assouad's Method}~\label{LDP_AS}
Now we prove the lower bound on the minimax risk given in Theorem~\ref{Th2_1} (see page~\pageref{Th2_1}). We first follow similar steps as in~\cite{duchi2018minimax,Ye2018} to reduce the minimax problem into multiple binary testing problems using Assouad's method. We note that~\cite{duchi2018minimax,Ye2018} do not consider a randomness constraint. Hence, we formulate an optimization problem to obtain a lower bound on the minimax risk estimation with a randomness constraint. Finding a tight bound on the solution of this problem is the main step in our proof. We also provide an alternative proof of Theorem~\ref{Th2_1} by using Fisher information, which leads to a tight bound for $\ell=\ell_{2}^{2}$ with smaller constant factors (see Appendix~\ref{LDP_F}).

Let $|\mathcal{X}|=k$ be the input alphabet size. Let $\lbrace \mathbf{p}^{\nu}\rbrace$ be a set of distributions parameterized by $\nu=\left(\nu_1,\ldots,\nu_{k/2}\right)\in\mathcal{V}=\lbrace -1,1\rbrace^{k/2}$. The distribution $\mathbf{p}^{\nu}=\left(p_{1}^{\nu},\ldots,p_{k}^{\nu}\right)$ is given by:
\begin{equation}
p_{j}^{\nu}=\left\{\begin{array}{ll}
\frac{1}{k}+\delta \nu_j& \text{if}\ j\in\lbrace 1,\ldots,k/2\rbrace\\
\frac{1}{k}-\delta \nu_{j-k/2} & \text{if}\ j\in\lbrace k/2+1,\ldots,k\rbrace\\
\end{array} \right.,
\end{equation}
where $0\leq\delta\leq 1/k$ is a parameter that will be chosen later. Let $Y^{n}=\left[Y_1,\ldots,Y_n\right]$ and $\mathcal{Y}^{n}=\mathcal{Y}_1\times\cdots\times\mathcal{Y}_n$. Following~\cite{duchi2018minimax}, for any loss function $\ell\left(\hat{\mathbf{p}},\mathbf{p}\right)=\sum_{j=1}^{k}\phi\left(\hat{p}_j-p_j\right)$, where $\phi:\mathbb{R}\to\mathbb{R}_{+}$ is a symmetric function, we have\footnote{Observe that for loss function $\ell=\ell_{2}^{2}$, we have $\phi\left(x\right)=x^{2}$, and for loss function $\ell=\ell_{1}$, we have $\phi\left(x\right)=|x|$.}
\begin{equation}
\ell\left(\hat{\mathbf{p}}\left(y^{n}\right),\mathbf{p}^{\nu}\right)=\sum_{j=1}^{k}\phi\left(\hat{p}_{j}\left(y^{n}\right)-p_j^{\nu}\right)\geq\phi\left(\delta\right)\sum_{j=1}^{k/2}\mathbbm{1}\left(\text{sgn}\left(\hat{p}_j\left(y^{n}\right)-\frac{1}{k}\right)\neq \nu_j\right),
\end{equation}
where $\text{sgn}\left(x\right)=1$ if $x\geq 0$ and $\text{sgn}\left(x\right)=0$ otherwise. Suppose that user $i$ chooses a private mechanism $Q_i\in\mathcal{Q}_{\left(\epsilon,R\right)}$ that generates an output $Y_i\in\mathcal{Y}_i$. Let $\mathbf{M}_{i}^{\nu}$ be the output distribution on $\mathcal{Y}_i$ for an input distribution $\mathbf{p}^{\nu}$ on $\mathcal{X}$ defined by
\begin{equation}~\label{eqn3_30}
\mathbf{M}_{i}^{\nu}\left(y\right)=\sum_{j=1}^{k}Q_{i}\left(y|X_i=j\right)p_{j}^{\nu}.
\end{equation}
Let $\mathbf{M}^{n}_{+j}$ and $\mathbf{M}^{n}_{-j}$ denote the marginal distribution on $\mathcal{Y}^{n}$ conditioned on $\nu_j=+1$ and $\nu_j=-1$, respectively, where
\begin{align*}
\mathbf{M}^{n}_{+j}\left(y^{n}\right) &= \frac{1}{|\mathcal{V}|}\sum_{\nu:\nu_j=+1}\prod_{i=1}^{n}\mathbf{M}^{\nu}_i\left(y_i\right) \\
\mathbf{M}^{n}_{-j}\left(y^{n}\right) &= \frac{1}{|\mathcal{V}|}\sum_{\nu:\nu_j=-1}\prod_{i=1}^{n}\mathbf{M}^{\nu}_i\left(y_i\right).
\end{align*}
Thus, the minimax risk can be bounded using the following lemma whose proof is presented in Appendix~\ref{AppK}. 
\begin{lemma}~\label{lemm3_4} For the family of distributions $\left\{ \mathbf{p}^{\nu}:\nu\in\mathcal{V}=\lbrace -1,1\rbrace^{k/2}\right\}$, and a loss function $\ell\left(\hat{\mathbf{p}},\mathbf{p}\right)=\sum_{j=1}^{k}\phi\left(\hat{p}_j-p_j\right)$ defined above, we have
\begin{equation}~\label{eqn3_10}
r^{\ell}_{\epsilon,R,n,k}\geq\phi\left(\delta\right)\frac{k}{2}\left(1-\sqrt{\frac{n}{2}\sup_{j\in\left[k/2\right]}\sup_{i\in\left[n\right]}\sup_{\nu:\nu_j=1}\sup_{Q_i\in\mathcal{Q}_{\left(\epsilon,R\right)}} D_{\text{KL}}\left(\mathbf{M}^{\nu}_{i}||\mathbf{M}^{\nu-2e_j}_{i}\right)}\right)
\end{equation}
\end{lemma}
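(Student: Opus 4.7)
My plan is to follow the classical Assouad framework while carefully tracking the randomness-constrained mechanism choice. The starting point is that for each $\nu\in\mathcal{V}$ the distribution $\mathbf{p}^\nu$ lies in $\Delta_k$, so replacing the supremum over $\mathbf{p}\in\Delta_k$ by the average over $\mathcal{V}$ yields
\[ r^{\ell}_{\epsilon,R,n,k} \;\geq\; \inf_{Q^n,\hat{\mathbf{p}}}\,\frac{1}{|\mathcal{V}|}\sum_{\nu\in\mathcal{V}} \mathbb{E}_{Y^n\sim\prod_i\mathbf{M}^\nu_i}\!\bigl[\ell(\hat{\mathbf{p}}(Y^n),\mathbf{p}^\nu)\bigr]. \]
Using the coordinate-decomposable bound $\ell(\hat{\mathbf{p}},\mathbf{p}^\nu)\geq \phi(\delta)\sum_{j=1}^{k/2}\mathbbm{1}\{\operatorname{sgn}(\hat p_j-1/k)\neq \nu_j\}$ stated just before the lemma, the average over $\nu$ decouples into $k/2$ binary hypothesis tests; grouping by the $j$-th coordinate identifies each with a Bayes test between the mixtures $\mathbf{M}^n_{+j}$ and $\mathbf{M}^n_{-j}$. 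The Neyman--Pearson lemma then lower bounds the error by $\tfrac12(1-\|\mathbf{M}^n_{+j}-\mathbf{M}^n_{-j}\|_{\mathrm{TV}})$.

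The next step is to convert total variation into a quantity that decouples across users. Pinsker's inequality gives $\|\mathbf{M}^n_{+j}-\mathbf{M}^n_{-j}\|_{\mathrm{TV}}\leq\sqrt{\tfrac12 D_{\mathrm{KL}}(\mathbf{M}^n_{+j}\|\mathbf{M}^n_{-j})}$. Applying joint convexity of KL to the mixture representations of $\mathbf{M}^n_{\pm j}$, followed by the chain rule for KL (valid because the $Y_i$ are conditionally independent given $\nu$ once each user's mechanism is fixed), yields
\[ D_{\mathrm{KL}}(\mathbf{M}^n_{+j}\|\mathbf{M}^n_{-j}) \;\leq\; \frac{2}{|\mathcal{V}|}\sum_{\nu:\nu_j=1}\sum_{i=1}^{n} D_{\mathrm{KL}}(\mathbf{M}^\nu_i\|\mathbf{M}^{\nu-2e_j}_i). \]
I would then upper bound this double sum by $n\sup_{i,\nu:\nu_j=1} D_{\mathrm{KL}}(\mathbf{M}^\nu_i\|\mathbf{M}^{\nu-2e_j}_i)$.

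Finally, to assemble the lower bound I would use the identity $\min_j(1-f_j)=1-\max_j f_j$ together with monotonicity of the square root: taking the supremum over $j\in[k/2]$ of the per-coordinate KL bound produces a quantity independent of which coordinate is being tested. Pushing the outer $\inf_{Q^n}$ through the $1-\sqrt{\cdot}$ wrapper is the only direction-sensitive step, and it turns the infimum over mechanisms into a supremum over $Q_i\in\mathcal{Q}_{(\epsilon,R)}$ sitting inside the KL term, which is precisely the form stated in the lemma. Summing the $k/2$ coordinate contributions and factoring out $\phi(\delta)$ completes the bound.

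The main obstacle I anticipate is purely bookkeeping rather than conceptual: correctly propagating the direction of inequalities through the $1-\sqrt{\cdot}$ transformation so that the outer $\inf_{Q^n,\hat{\mathbf{p}}}$ becomes a supremum over $Q_i$ inside the KL divergence, and verifying that the mixture/convexity/tensorization chain holds with the right constants (the factor $2/|\mathcal{V}|$ coming from conditioning on $\nu_j=1$). Everything else is standard Assouad--Pinsker machinery, with the only new feature being that the supremum over mechanisms is restricted to the randomness-constrained class $\mathcal{Q}_{(\epsilon,R)}$, which is exactly what the downstream argument in Theorem~\ref{Th2_1} exploits to obtain the $p_R$-dependent regime.
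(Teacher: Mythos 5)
Your proposal is correct and follows essentially the same route as the paper: Assouad's reduction to $k/2$ binary tests against the mixtures $\mathbf{M}^n_{\pm j}$, Pinsker's inequality, tensorization of the KL divergence across users, bounding the sum by $n$ times a supremum, and converting the outer infimum over mechanisms into a supremum over $Q_i\in\mathcal{Q}_{(\epsilon,R)}$ inside the KL term. The only (immaterial) difference is ordering: the paper first applies the triangle inequality to the total variation of the mixtures and then Pinsker term-by-term, whereas you apply Pinsker to the mixtures directly and then joint convexity of KL.
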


Fix arbitrary $i\in\left[n\right]$, $j\in\left[k/2\right]$ and $\nu\in\mathcal{V}$. We have
{\allowdisplaybreaks

\begin{align}
D_{\text{KL}}&\left(\mathbf{M}^{\nu}_{i}||\mathbf{M}^{\nu-2e_j}_{i}\right)\stackrel{\left(a\right)}{\leq} D_{\text{KL}}\left(\mathbf{M}^{\nu}_{i}||\mathbf{M}^{\nu-2e_j}_{i}\right)+D_{\text{KL}}\left(\mathbf{M}^{\nu-2e_j}_{i}||\mathbf{M}^{\nu}_{i}\right)\\
&=\sum_{y\in\mathcal{Y}_i} \left( \mathbf{M}^{\nu}_{i}\left(y\right)-\mathbf{M}^{\nu-2e_j}_{i}\left(y\right)\right)\log\left(\frac{\mathbf{M}^{\nu}_{i}\left(y\right)}{\mathbf{M}^{\nu-2e_j}_{i}\left(y\right)}\right)\nonumber\\
&\stackrel{\left(b\right)}{\leq}\sum_{y\in\mathcal{Y}_i} \frac{\left( \mathbf{M}^{\nu}_{i}\left(y\right)-\mathbf{M}^{\nu-2e_j}_{i}\left(y\right)\right)^2}{\mathbf{M}^{\nu-2e_j}_{i}\left(y\right)}\stackrel{\left(c\right)}{=}\sum_{y\in\mathcal{Y}_i} \delta^2\frac{\left( Q_i\left(y|j\right)-Q_i\left(y|j+k/2\right)\right)^2}{\sum_{j^{\prime}=1}^{k}Q_i\left(y|j^{\prime}\right)p^{\nu-2e_j}_{j^{\prime}}}\nonumber\\
&\stackrel{\left(d\right)}{\leq}2\delta^2 e^{\epsilon}\sum_{y\in\mathcal{Y}_i} \frac{\left( Q_i\left(y|j\right)-Q_i\left(y|j+k/2\right)\right)^2}{Q_i\left(y|j\right)+Q_i\left(y|j+k/2\right)}~\label{eqn3_31},
\end{align} 
}
where step $\left(a\right)$ follows from the fact that $D_{\text{KL}}\left(.||.\right)$ is not negative. Step $\left(b\right)$ follows from the inequality $\log\left(x\right)\leq x-1$. Step $\left(c\right)$ follows from the definition of $\mathbf{M}^{\nu}_{i}$ in~\eqref{eqn3_30}. Step $\left(d\right)$ follows from bounding the denominator as follows: 
\begin{equation}
\begin{aligned}
\sum\limits_{j^{\prime}=1}^{k}Q_i\left( y|j^{\prime}\right)p_{j^{\prime}}^{\nu-2e_j}&\geq e^{-\epsilon}\frac{Q_i\left( y|j\right)+Q_i\left( y|j+k/2\right)}{2}\sum_{j^{\prime}=1}^{k}p_{j^{\prime}}^{\nu-2e_j}\\
&=e^{-\epsilon}\frac{Q_i\left( y|j\right)+Q_i\left( y|j+k/2\right)}{2},
\end{aligned}
\end{equation} 
where we use the fact that $Q_i\left( y|j^{\prime}\right)\geq e^{-\epsilon} Q_i\left( y|j\right)$ and $Q_i\left( y|j^{\prime}\right)\geq e^{-\epsilon} Q_i\left( y|j+k/2\right),\ \forall j^{\prime}\in\left[k\right]$.

\begin{lemma}~\label{lemma_1}  
For any randomized mechanism $Q\in\mathcal{Q}_{\left(\epsilon,R\right)}$ that generates an output $Y\in\mathcal{Y}$, we have
\begin{equation}~\label{eqna_1}
\begin{aligned}
\sup_{Q\in\mathcal{Q}_{\left(\epsilon,R\right)}}\sum_{y\in\mathcal{Y}}\frac{\left(Q\left(y|j\right)-Q\left(y|j+k/2\right)\right)^2}{Q\left(y|j\right)+Q\left(y|j+k/2\right)}\leq\left\{\begin{array}{ll}
2\frac{\left(e^\epsilon-1\right)^2}{\left( e^\epsilon+1\right)^2}& \text{if}\ R\geq H_2\left(\frac{e^\epsilon}{e^\epsilon+1}\right)\\
2\frac{p_R^2\left(e^{\epsilon}-1\right)^2}{ e^{2\epsilon}}& \text{if}\ R< H_2\left(\frac{e^\epsilon}{e^\epsilon+1}\right)
\end{array}
\right.
\end{aligned}
\end{equation}
\end{lemma}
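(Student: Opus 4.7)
The plan is to translate the supremum into a constrained optimization problem over two distributions $a := Q(\cdot|j)$ and $b := Q(\cdot|j+k/2)$ on $\mathcal{Y}$, and solve it in the two regimes of $R$ separately. I would first reparametrize via $s_y := a_y + b_y$ and $t_y := (a_y-b_y)/(a_y+b_y)$, so that the objective becomes $\sum_y s_y t_y^2$; the LDP condition $a_y/b_y \in [e^{-\epsilon},e^{\epsilon}]$ becomes $|t_y| \leq T := (e^{\epsilon}-1)/(e^{\epsilon}+1)$; and normalization yields $\sum_y s_y = 2$ together with $\sum_y s_y t_y = 0$. The randomness constraint keeps its form $H(a), H(b) \leq R$. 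For the high-randomness regime $R \geq H_2(e^{\epsilon}/(e^{\epsilon}+1))$, the entropy constraint is inactive: $\sum_y s_y t_y^2 \leq T^2 \sum_y s_y = 2T^2$, which is exactly the first case of the stated bound and is tight at the binary extremal mechanism.

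For the low-randomness regime $R < H_2(e^{\epsilon}/(e^{\epsilon}+1)) \leq 1$, I would exploit the concentration forced by low entropy. Using the two standard bounds $H(a) \geq \log(1/\max_y a_y)$ and $H(a) \geq H_2(\max_y a_y)$, the constraint $H(a) \leq R < 1$ forces the mode $a_{y^*} := \max_y a_y$ to exceed $1/2$, and hence $a_{y^*} \geq 1-p_R$; analogously $b_{y^{**}} \geq 1-p_R$ at $y^{**} := \arg\max_y b_y$. A short contradiction using LDP pins down $y^* = y^{**}$: otherwise $b_{y^*} \leq p_R$ from the mass budget, yet $b_{y^*} \geq a_{y^*}/e^{\epsilon} \geq (1-p_R)/e^{\epsilon}$ from LDP, forcing $p_R \geq 1/(e^{\epsilon}+1)$ and contradicting the regime. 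I would then split $\sum_y s_y t_y^2$ into the mode and off-mode parts: the identity $|a_{y^*}-b_{y^*}| = |\sum_{y \neq y^*}(b_y-a_y)| \leq T \sum_{y \neq y^*} s_y \leq 2Tp_R$ bounds the mode term by $(2Tp_R)^2/s_{y^*}$, while $\sum_{y \neq y^*} s_y t_y^2$ is controlled by combining $|t_y| \leq T$ with the tail-mass bound $\sum_{y \neq y^*} s_y \leq 2p_R$ derived from $H(a), H(b) \leq R$.

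The principal difficulty lies in squeezing the claimed $p_R^2$ factor out of the off-mode contribution. A crude bound $\sum_{y \neq y^*} s_y t_y^2 \leq T^2 \sum_{y \neq y^*} s_y \leq 2T^2 p_R$ yields only $O(p_R)$, which falls short of the $O(p_R^2)$ target. Closing this gap requires using both entropy constraints simultaneously---intuitively, the extremal pair $(a,b)$ cannot have the tail of $a$ differ multiplicatively from the tail of $b$ by the full LDP factor without violating one of the entropy inequalities. I would handle this either by a Lagrangian analysis of the extremal $(a,b)$ satisfying both $H(a) \leq R$ and $H(b) \leq R$, or by a Cauchy--Schwarz type refinement that converts the tail-mass budget $p_R$ into a second factor of $p_R$ in the off-mode sum.
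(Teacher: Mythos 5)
Your high-randomness argument is correct and is actually cleaner than the paper's: the paper first proves (by merging output symbols and using convexity of $(x-y)^2/(x+y)$) that the supremum is attained with a binary output alphabet, and then inspects the boundary of the resulting two-dimensional feasible region, whereas your reparametrization $s_y=a_y+b_y$, $t_y=(a_y-b_y)/(a_y+b_y)$ with $|t_y|\leq T=(e^{\epsilon}-1)/(e^{\epsilon}+1)$ gives $\sum_y s_yt_y^2\leq 2T^2$ in one line without touching the entropy constraint. Your mode analysis in the low-randomness regime is also sound as far as it goes: a common mode $y^*$ with $a_{y^*},b_{y^*}\geq 1-p_R$, a mode term of order $p_R^2$, and an off-mode term at most $2T^2p_R$.

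The gap you flag at the end, however, cannot be closed, because the $p_R^2$ bound in the second case of the lemma is false; your $O(p_R)$ estimate for the off-mode contribution is the truth. Consider the binary mechanism with $Q(y_1|j)=p_R$ and $Q(y_1|j+k/2)=p_Re^{-\epsilon}$ (complementary probabilities on $y_2$). It is feasible: $H_2(p_R)=R$, $H_2(p_Re^{-\epsilon})<R$, and all likelihood ratios lie in $[e^{-\epsilon},e^{\epsilon}]$. Its objective value is
\[
\frac{p_R^2\left(1-e^{-\epsilon}\right)^2}{p_R\left(1+e^{-\epsilon}\right)}+\frac{p_R^2\left(1-e^{-\epsilon}\right)^2}{2-p_R\left(1+e^{-\epsilon}\right)}
=\frac{p_R\left(e^{\epsilon}-1\right)^2}{e^{\epsilon}\left(e^{\epsilon}+1\right)}+O\left(p_R^2\right),
\]
which is $\Theta(p_R)$ and strictly exceeds $2p_R^2(e^{\epsilon}-1)^2/e^{2\epsilon}$ throughout the regime $0<p_R<1/(e^{\epsilon}+1)$. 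In your notation this mechanism has a single off-mode output with $t_{y_1}=T$ and $s_{y_1}=p_R(1+e^{-\epsilon})$, so it saturates your crude off-mode bound up to a constant; no Lagrangian or Cauchy--Schwarz refinement can manufacture a second factor of $p_R$. The source of the discrepancy is in the paper's own proof: after reducing to binary outputs it identifies exactly this extremal point $(q^{*}_{1,j},q^{*}_{1,j+k/2})=(p_R,p_R/e^{\epsilon})$, but the final substitution into the objective effectively drops the denominators $q_{l,j}+q_{l,j+k/2}$, which equal $1$ in the high-randomness case (where the stated bound is therefore correct) but equal $p_R(1+e^{-\epsilon})$ and $2-p_R(1+e^{-\epsilon})$ here. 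The correct supremum in the second regime is $2T^2s/(2-s)$ with $s=p_R(1+e^{-\epsilon})$, i.e., $\Theta(p_R)$, and the $1/p_R^2$ factor it feeds into the lower bound of Theorem~\ref{Th2_1} should accordingly be $1/p_R$. So rather than trying to close the gap, you should report that the two-sided $\Theta(k/(np_R^2\epsilon^2))$ characterization does not follow from this lemma.
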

 This lemma presents an upper  bound on equation~\eqref{eqn3_31} as a function of the randomness $R$ for any private mechanism $Q\in\mathcal{Q}_{\left(\epsilon,R\right)}$. To prove this lemma, we first show that the optimization problem~\eqref{eqna_1} is non-convex due to the randomness constraint. We then prove that the maximum value of this function~\eqref{eqna_1} is obtained when the output of the mechanism $Q\in\mathcal{Q}_{\left(\epsilon,R\right)}$ is binary. Then, we obtain a tight bound numerically for the binary output. 
\begin{proof}[Proof of Lemma \ref{lemma_1}]
Without loss of generality assume that $\mathcal{Y}=\lbrace y_1,\ldots,y_m\rbrace$ with $|\mathcal{Y}|=m$. For ease of notation, we write $Q\left(y_l|j\right)=q_{l,j}$ and $Q\left(y_l|j+k/2\right)=q_{l,j+k/2}$. The problem~\eqref{eqna_1} can be formulated as follows
\begin{align}
\textbf{P1:} \qquad&\max_{\lbrace q_{l,j},q_{l,j+k/2}\rbrace_{l=1}^{m}} \sum_{l=1}^{m}\frac{\left(q_{l,j}-q_{l,j+k/2}\right)^2}{q_{l,j}+q_{l,j+k/2}}~\label{eqnb_1}\\
\text{s.t.}&\quad H\left(\left[q_{1,j},\ldots,q_{m,j}\right]\right)\leq R,\qquad H\left(\left[q_{1,j+k/2},\ldots,q_{m,j+k/2}\right]\right)\leq R~\label{eqnb_4}\\
&\quad e^{-\epsilon}\leq \frac{q_{l,j}}{q_{l,j+k/2}}\leq e^{\epsilon},\hspace{3cm} \forall l\in\left[m\right]\nonumber\\
&\quad q_{l,j}\geq 0,\qquad q_{l,j+k/2}\geq 0,\hspace{2cm} \forall l\in\left[m\right]\nonumber\\
&\quad \sum_{l=1}^{m}q_{l,j}=1,\qquad \sum_{l=1}^{m}q_{l,j+k/2}=1\nonumber ~\label{eqnb_7}
\end{align}
Note that the objective function~\eqref{eqnb_1} is jointly convex in both $\lbrace q_{l,j} \rbrace_{l=1}^{m}$ and $\lbrace q_{l,j+k/2}\rbrace_{l=1}^{m}$. However, the optimization problem \textbf{P1} is non-convex due to two reasons. First, we maximize a convex function, and second the entropy constraints~\eqref{eqnb_4} are sub-level sets of a concave function and are non-convex constraints. However, we can solve the optimization problem~\textbf{P1} by exploiting the results of Lemma~\ref{lemmb_1} below. 
\begin{lemma}~\label{lemmb_1} The optimal solution of the non-convex optimization problem \textbf{P1} is obtained when the output size is $m=2$.
\end{lemma}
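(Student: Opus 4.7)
The plan is to reparameterize the problem so the objective's structure becomes transparent, then argue by a local perturbation that no configuration with $m \geq 3$ outputs can be optimal. Set $t_l = q_{l,j} + q_{l,j+k/2}$ and $r_l = q_{l,j}/t_l$; the LDP constraint becomes $r_l \in [z, 1-z]$ with $z = 1/(e^\epsilon+1)$, the marginal constraints give $\sum_l r_l t_l = \sum_l (1-r_l) t_l = 1$ (so $\sum_l t_l = 2$), and the objective rewrites as $\sum_l t_l (2r_l-1)^2$. The crucial feature is that $r \mapsto (2r-1)^2$ is strictly convex on $[z, 1-z]$, and this convexity will drive the whole argument.

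First I would observe that two outputs sharing a common ratio can always be merged into a single output: by positive homogeneity of $t(2r-1)^2$ in $t$ the objective is unchanged, and by the entropy grouping inequality both marginal entropies weakly decrease, so the merged configuration remains feasible. We may therefore assume all ratios are distinct. Now suppose for contradiction the optimum has $m \geq 3$ distinct ratios $r_1 < r_2 < \cdots < r_m$. Fix an interior index $i$ and consider the perturbation that transfers mass $\eta$ away from output $i$, redistributing $\alpha\eta$ to output $i-1$ and $(1-\alpha)\eta$ to output $i+1$, where $\alpha = (r_{i+1}-r_i)/(r_{i+1}-r_{i-1}) \in (0,1)$. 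This choice preserves $\sum_l t_l$ and $\sum_l r_l t_l$ exactly, and by strict convexity of $(2r-1)^2$ (together with $r_{i-1} \neq r_{i+1}$), the objective strictly increases to first order in $\eta$.

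The main obstacle will be ensuring that such a perturbation can be chosen to remain feasible with respect to the two entropy constraints $H(\{r_l t_l\}) \leq R$ and $H(\{(1-r_l) t_l\}) \leq R$. When the entropies are slack any sufficiently small $\eta > 0$ works; when they are tight I would enlarge the class of admissible perturbations by also perturbing the ratios $r_l$ themselves, which is permissible whenever $r_l$ lies in the interior of $[z, 1-z]$. A dimension count then shows that for $m \geq 3$ the tangent space of marginal-preserving perturbations has dimension $2m - 2$, so after removing the two linear conditions imposed by the tight entropy constraints there remain $2m - 4 \geq 2$ feasible directions; examining the gradient of the objective on this feasible cone yields a strictly ascending direction, contradicting optimality. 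The most delicate case requiring care is when several outputs have ratios pinned to the LDP boundary $\{z, 1-z\}$, which removes some admissible $r_l$-perturbations; this case is handled by combining the pure mass redistribution with the earlier merging step applied to outputs sharing an extreme ratio, which is the technical heart of the argument.
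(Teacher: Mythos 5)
Your reparameterization ($t_l = q_{l,j}+q_{l,j+k/2}$, $r_l = q_{l,j}/t_l$, objective $\sum_l t_l(2r_l-1)^2$ with $\sum_l t_l r_l=\sum_l t_l(1-r_l)=1$) is correct, and so are your two building blocks: merging outputs that share a ratio preserves the objective exactly (degree-one homogeneity in $t$) and weakly decreases both row entropies by the grouping property, and transferring mass from an interior ratio $r_i$ onto the convex combination of $r_{i-1},r_{i+1}$ with the same mean strictly increases the objective by strict convexity of $(2r-1)^2$. This is a genuinely different route from the paper, which simply merges the last two outputs and argues the objective cannot decrease. Your caution about merging is in fact well placed: $(x-y)^2/(x+y)$ is jointly convex and positively homogeneous of degree one, hence subadditive, so merging two outputs with \emph{distinct} ratios strictly decreases this objective; a perturbation argument of the kind you propose is what one needs if one declines to rely on such a merge.

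There is, however, a genuine gap exactly where you locate ``the technical heart'': the case of active entropy constraints. Counting that the marginal-preserving directions consistent with two tight (linearized) entropy constraints form a space of dimension $2m-4\ge 2$ does not produce an ascent direction. At a constrained maximizer the gradient of the objective may lie entirely in the span of the active constraint gradients, in which case no first-order feasible direction increases the objective; and because the entropy sublevel sets are non-convex, you cannot fall back on ``a convex function on a feasible line segment is maximized at an endpoint'' without first exhibiting such a segment inside the feasible set. The sub-case you flag as most delicate---some ratios pinned at $\{z,1-z\}$, at least one distinct interior ratio, and both entropy constraints tight---is precisely where the admissible perturbations shrink below your count (boundary ratios may only move inward), and ``this case is handled by combining the mass redistribution with the merging step'' is an assertion rather than an argument: the outputs involved in the redistribution have distinct ratios, so the merging step does not apply to them. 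To close the lemma you would need either an explicit feasible non-descent direction in every such configuration, or a reduction that controls the two entropies directly rather than through a tangent-space count.
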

The proof of Lemma~\ref{lemmb_1} is presented in Appendix~\ref{AppB}. 
Since the output alphabet is binary, we can efficiently plot the feasible region of~\textbf{P1} for $m=2$ as depicted in Figure~\ref{Figb_1}. Since we maximize a convex function, the optimal solution is at the boundary of the feasible set. Furthermore, the objective function~\eqref{eqnb_1} is symmetric on $q_{1,j},\ q_{1,j+k/2}$ for $m=2$. As a result, the optimal solution is given by. 

\begin{figure*}[t!]
\begin{subfigure}{0.5\linewidth}
    \centerline{\includegraphics[scale=0.355]{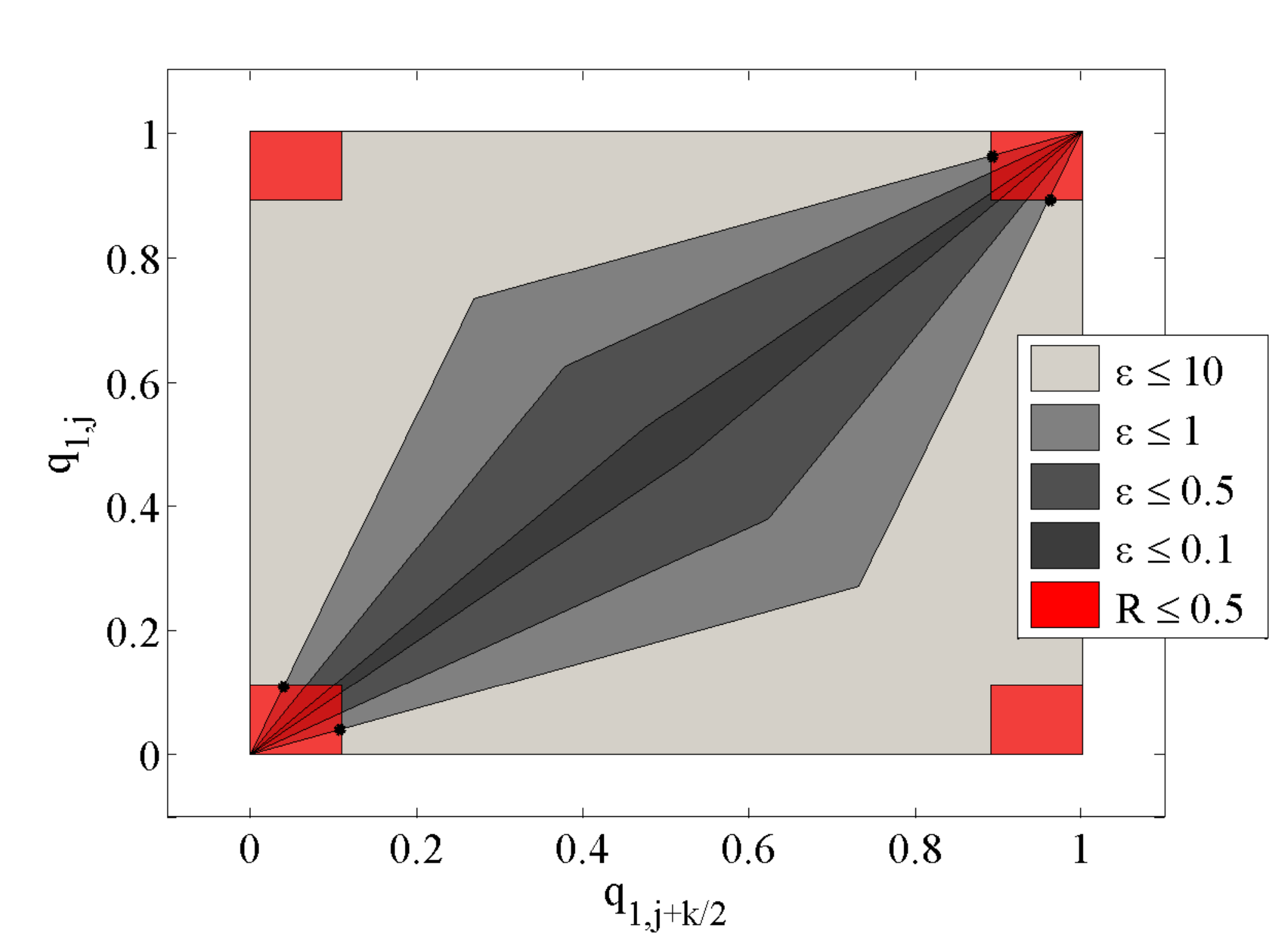}}
 \end{subfigure}
 \begin{subfigure}{0.49\linewidth}
    \centerline{\includegraphics[scale=0.37]{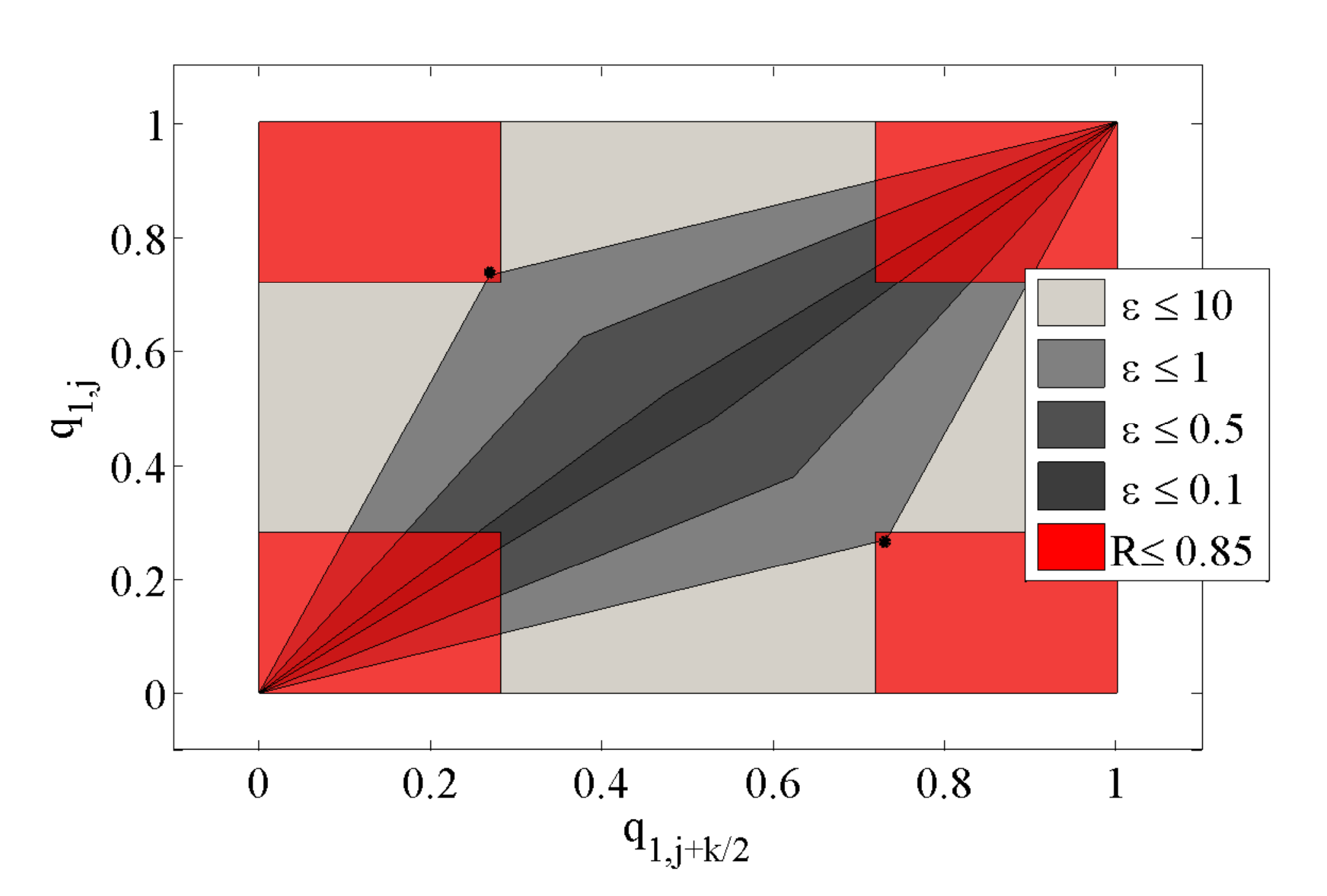}}
 \end{subfigure}
 \caption{The feasible region of the optimization problem~\textbf{$P1$} for $m=2$. In $\left(a\right)$, we have $R=0.5<H_2\left(\frac{e^{\epsilon}}{e^{\epsilon}+1}\right)$ for $\epsilon=1$, and hence the optimal point is one of the black points.  In $\left(b\right)$, we have $R=0.85>H_2\left(\frac{e^{\epsilon}}{e^{\epsilon}+1}\right)$ for $\epsilon=1$, and hence the optimal point is one of the black vertices.}
 ~\label{Figb_1}
\end{figure*}

\begin{equation}~\label{eqnb_2}
q_{1,j}^{*}=\left\{\begin{array}{ll}
\frac{e^\epsilon}{e^\epsilon+1}& \text{if}\ R\geq H_2\left(\frac{e^\epsilon}{e^\epsilon+1}\right)\\
p_R& \text{if}\ R< H_2\left(\frac{e^\epsilon}{e^\epsilon+1}\right)
\end{array}
\right.,\qquad 
q_{1,j+k/2}^{*}=\left\{\begin{array}{ll}
\frac{1}{e^\epsilon+1}& \text{if}\ R\geq H_2\left(\frac{e^\epsilon}{e^\epsilon+1}\right)\\
\frac{p_R}{e^{\epsilon}}& \text{if}\ R< H_2\left(\frac{e^\epsilon}{e^\epsilon+1}\right)
\end{array}
\right.,
\end{equation}
where $q_{2,j}^{*}=1-q^{*}_{1,j}$, and $q_{2,j+k/2}^{*}=1-q_{1,j+k/2}^{*}$. Substituting from~\eqref{eqnb_2} into the objective function~\eqref{eqnb_1}, we get
\begin{equation}
 \sum_{l=1}^{m}\frac{\left(q_{l,j}-q_{l,j+k/2}\right)^2}{q_{l,j}+q_{l,j+k/2}}\leq \left\{\begin{array}{ll}
2\frac{\left(e^\epsilon-1\right)^2}{\left( e^\epsilon+1\right)^2}& \text{if}\ R\geq H_2\left(\frac{e^\epsilon}{e^\epsilon+1}\right)\\
2\frac{p_R^2\left(e^{\epsilon}-1\right)^2}{ e^{2\epsilon}}& \text{if}\ R< H_2\left(\frac{e^\epsilon}{e^\epsilon+1}\right)
\end{array}
\right.
\end{equation} 
Hence, the proof is completed for Lemma~\ref{lemma_1}.
\end{proof}
Using the bound from Lemma~\ref{lemma_1} in \eqref{eqn3_31} and taking supremum over all $Q_i\in\mathcal{Q}_{\epsilon,R}$, we get
\begin{equation}~\label{eqn3_8}
\begin{aligned}
\sup_{Q_i\in\mathcal{Q}_{\left(\epsilon,R\right)}} D_{\text{KL}}\left(\mathbf{M}^{\nu}_{i}||\mathbf{M}^{\nu-2e_j}_{i}\right)&\leq 2\delta^2 e^{\epsilon} \sup_{Q_i\in\mathcal{Q}_{\left(\epsilon,R\right)}} \sum_{y\in\mathcal{Y}_i} \frac{\left( Q_i\left(y|j\right)-Q_i\left(y|j+k/2\right)\right)^2}{Q_i\left(y|j\right)+Q_i\left(y|j+k/2\right)}\\
&=2\delta^2 e^{\epsilon} \left\{\begin{array}{ll}
2\frac{\left(e^\epsilon-1\right)^2}{\left( e^\epsilon+1\right)^2}& \text{if}\ R\geq H_2\left(\frac{e^\epsilon}{e^\epsilon+1}\right)\\
2\frac{p_R^2\left(e^{\epsilon}-1\right)^2}{ e^{2\epsilon}}& \text{if}\ R< H_2\left(\frac{e^\epsilon}{e^\epsilon+1}\right)
\end{array}
\right.
\end{aligned}
\end{equation}
Substituting from~\eqref{eqn3_8} into~\eqref{eqn3_10}, we get
\begin{equation}
\begin{aligned}
r^{\ell}_{\epsilon,R,n,k}&\geq  \left\{\begin{array}{ll}
\phi\left(\delta\right)\frac{k}{2}\left(1-\sqrt{2\delta^2 ne^{\epsilon} \frac{\left(e^{\epsilon}-1\right)^2}{\left(e^{\epsilon}+1\right)^2}}\right)& \text{if}\ R\geq H_2\left(\frac{e^\epsilon}{e^\epsilon+1}\right)\\
\phi\left(\delta\right)\frac{k}{2}\left(1-\sqrt{2\delta^2 n \frac{p_R^2\left(e^{\epsilon}-1\right)^2}{e^{\epsilon}}}\right)& \text{if}\ R< H_2\left(\frac{e^\epsilon}{e^\epsilon+1}\right)
\end{array}
\right.
\end{aligned}
\end{equation}
By setting $\delta^2=\frac{\left(e^{\epsilon}+1\right)^2}{8ne^{\epsilon}\left(e^{\epsilon}-1\right)^2}$ if $R\geq H_2\left(\frac{e^\epsilon}{e^\epsilon+1}\right)$ and $\delta^2=\frac{e^{\epsilon}}{8np_R^2\left(e^{\epsilon}-1\right)^2}$ if $R\geq H_2\left(\frac{e^\epsilon}{e^\epsilon+1}\right)$, we get
\begin{equation}
\begin{aligned}
r^{\ell}_{\epsilon,R,n,k}&\geq  \left\{\begin{array}{ll}
\phi\left(\sqrt{\frac{\left(e^{\epsilon}+1\right)^2}{8ne^{\epsilon}\left(e^{\epsilon}-1\right)^2}}\right)\frac{k}{4}& \text{if}\ R\geq H_2\left(\frac{e^\epsilon}{e^\epsilon+1}\right)\\
\phi\left(\sqrt{\frac{e^{\epsilon}}{8np_R^2\left(e^{\epsilon}-1\right)^2}}\right)\frac{k}{4}& \text{if}\ R< H_2\left(\frac{e^\epsilon}{e^\epsilon+1}\right)
\end{array}
\right.
\end{aligned}
\end{equation}
For the loss function $\ell=\ell_{2}^{2}$, we set $\phi\left(x\right)=x^{2}$ and for $\ell=\ell_{1}$, we set $\phi\left(x\right)=|x|$. This completes the proof of Theorem~\ref{Th2_1} with a slightly worse constant of 32 instead of 16 in the denominator. We provide a different proof of Theorem~\ref{Th2_1} in Appendix~\ref{LDP_F} using Fisher information that gives the exact bound as stated in  Theorem~\ref{Th2_1}.

\subsection{Upper Bound on The Minimax Risk Estimation Using Hadamard Response}~\label{LDP-AC}
In this section, we prove Theorem~\ref{Th2_2} (see page~\pageref{Th2_2}) by proposing a private mechanism by adapting the Hadamard response given in~\cite{acharya2019communication}, where each user answers to a yes-no question such that the probability of telling the truth depends on the amount of randomness~$R$. 
Each user $i\in\left[n\right]$ has a binary output $Y_i\in\lbrace 0,1\rbrace$. The $\left(\epsilon,R\right)$-LDP mechanism of the $i$-th user is defined by
\begin{equation}~\label{eqn3_9}
Q\left(Y_i=1|X\right)=\left\{\begin{array}{ll}
q& \text{if}\ X\in B_i\\
\frac{q}{e^{\epsilon}}& \text{if}\ X\notin B_i\\
\end{array}\right.
\end{equation}
where $B_i\subset\left[k\right]$ is a subset of inputs, and $q$ is a probability value that will be determined later such that $H_2\left(q\right)\leq R$.  Let $K=2^{\ceil{\log\left(k\right)}}$ denote the smallest power of $2$ larger than $k$, and $H_{K}$ be the $K\times K$ Hadamard matrix. In the following, we assume an extended distribution $\overline{\mathbf{p}}$ over the set $\mathcal{X}=\left[K\right]$ with $|\mathcal{X}|=K$ that is obtained by zero-padding the original distribution $\mathbf{p}$ with $\left( K-k\right)$ zeros, i.e., $\overline{\mathbf{p}}=\left[\overline{p}_1,\ldots,\overline{p}_{K}\right]=\left[p_1,\ldots,p_k,0,\ldots,0\right]$. 
For $j\in\left[K\right]$, let $B^{j}$ be a set of row indices that have $1$ in the $j$-th column of the Hadamard matrix $H_{K}$. For example, when $K=4$, the Hadamard matrix is given by
\begin{equation}
H_{4}=\begin{bmatrix}
1&1&1&1\\
1&-1&1&-1\\
1&1&-1&-1\\
1&-1&-1&1\\
\end{bmatrix}
\end{equation}
Hence, $B^{1}=\lbrace 1,2,3,4\rbrace$, $B^{2}=\lbrace 1,3\rbrace$, $B^{3}=\lbrace 1,2\rbrace$, and $B^{4}=\lbrace 1,4\rbrace$. We divide the users into $K$ sets ($\mathcal{US}_1,\ldots,\mathcal{US}_K$), where each set contains $n/K$ users. For each user $i\in\mathcal{US}_j$, we set  $B_i=B^{j}$. Let $p\left(B^{j}\right)=\text{Pr}\left[X\in B^{j}\right]=\sum_{x\in B^{j}}\overline{p}_x$, and $s_j=\text{Pr}\left[Y_i=1\right]$ for $i\in\mathcal{U}_j$. Then, we can easily see that 
\begin{equation}
\begin{aligned}
s_j&=p\left(B^{j}\right)q+\left(1-p\left(B^{j}\right)\right)\frac{q}{e^{\epsilon}}\\
&=p\left(B^{j}\right)q\left(\frac{e^{\epsilon}-1}{e^{\epsilon}}\right)+\frac{q}{e^{\epsilon}}
\end{aligned}
\end{equation}
Let $\hat{s}_j=\frac{1}{|\mathcal{US}_j|}\sum_{i\in\mathcal{US}_j}\mathbbm{1}\left\{Y_i=1\right\}$ denote the estimate of $s_j$. Then, we can estimate $p\left(B^{j}\right)$ as $\hat{p}\left(B^{j}\right)=\frac{e^{\epsilon}}{q\left(e^{\epsilon}-1\right)}\left(\hat{s}_j-\frac{q}{e^{\epsilon}}\right)$. Observe that the relation between the distribution $\overline{\mathbf{p}}$ and $\mathbf{p}\left(B\right)=\left[p\left(B^{1}\right),\ldots,p\left(B^{K}\right)\right]$ is given by~\cite[Eq.~$13$]{acharya2019communication}
\begin{equation}
\mathbf{p}\left(B\right)=\frac{H_{K}\overline{\mathbf{p}}+\mathbf{1}_{K}}{2},
\end{equation}
where $\mathbf{1}_K$ denotes a vector of $K$ ones. Hence, we can estimate the distribution $\overline{\mathbf{p}}$ as
\begin{equation}
\hat{\overline{\mathbf{p}}}=H_{K}^{-1}\left(2\hat{\mathbf{p}}\left(B\right)-\mathbf{1}_{K}\right)=\frac{1}{K}H_{K}\left(2\hat{\mathbf{p}}\left(B\right)-\mathbf{1}_{K}\right).
\end{equation}
\begin{lemma}~\label{lemm3_3} For arbitrary $\mathbf{p}\in\Delta_k$, we have
\begin{equation}
\mathbb{E}\left[\|\mathbf{p}-\hat{\mathbf{p}}\|_{2}^{2}\right]\leq \frac{2k e^{2\epsilon}}{n q^{2}\left(e^{\epsilon}-1\right)^2}.
\end{equation}
\end{lemma}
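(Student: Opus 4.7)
The plan is to exploit the orthogonality of the Hadamard matrix to transfer the analysis to the coordinates $\hat{\mathbf{p}}(B) = [\hat p(B^1), \ldots, \hat p(B^K)]$, and then bound the variance of each coordinate using elementary properties of a Bernoulli mean.

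First I would write
\begin{equation*}
\hat{\overline{\mathbf{p}}} - \overline{\mathbf{p}} \;=\; \tfrac{2}{K} H_{K}\bigl(\hat{\mathbf{p}}(B) - \mathbf{p}(B)\bigr),
\end{equation*}
which follows directly from the linearity of the estimator and the definition of $\overline{\mathbf{p}}$. Since $H_K$ is a Hadamard matrix, $H_K^{\top} H_K = K\, I_K$, so applying this identity to the above vector gives $\|\hat{\overline{\mathbf{p}}} - \overline{\mathbf{p}}\|_2^{2} = \tfrac{4}{K}\,\|\hat{\mathbf{p}}(B) - \mathbf{p}(B)\|_2^{2}$. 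This is the key reduction: an $\ell_2$ bound on the $K$-dimensional padded estimator reduces to a sum of $K$ scalar mean-squared errors, up to a $4/K$ factor.

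Next I would bound each scalar term. Because $\hat p(B^j) = \tfrac{e^{\epsilon}}{q(e^{\epsilon}-1)}(\hat s_j - q/e^{\epsilon})$ is an affine function of $\hat s_j$, and $\hat s_j$ is the empirical mean of $n/K$ i.i.d.\ Bernoulli$(s_j)$ random variables, I get
\begin{equation*}
\mathbb{E}\bigl[(\hat p(B^{j}) - p(B^{j}))^{2}\bigr] \;=\; \frac{e^{2\epsilon}}{q^{2}(e^{\epsilon}-1)^{2}} \cdot \frac{s_j(1-s_j)}{n/K} \;\leq\; \frac{e^{2\epsilon}}{q^{2}(e^{\epsilon}-1)^{2}} \cdot \frac{K}{4n},
\end{equation*}
using $s_j(1-s_j)\leq 1/4$. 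Summing over $j\in[K]$ and multiplying by $4/K$ yields
\begin{equation*}
\mathbb{E}\bigl[\|\hat{\overline{\mathbf{p}}} - \overline{\mathbf{p}}\|_2^{2}\bigr] \;\leq\; \frac{K e^{2\epsilon}}{n q^{2}(e^{\epsilon}-1)^{2}}.
\end{equation*}

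To finish, I would use $K = 2^{\lceil \log k\rceil} \leq 2k$, and observe that because $\overline{\mathbf{p}}$ is the zero-padding of $\mathbf{p}$, the estimator $\hat{\mathbf{p}}$ obtained by truncating $\hat{\overline{\mathbf{p}}}$ to its first $k$ entries satisfies $\|\mathbf{p}-\hat{\mathbf{p}}\|_2^{2} \leq \|\overline{\mathbf{p}} - \hat{\overline{\mathbf{p}}}\|_2^{2}$, which gives the stated bound. I do not anticipate a major obstacle here: the proof is essentially bookkeeping around the Hadamard identity and a Bernoulli variance; the only small subtlety is ensuring that the final truncation (and any simplex projection one might perform to make $\hat{\mathbf{p}}$ a valid distribution) is nonexpansive in $\ell_2$, which follows from convexity of $\Delta_k$.
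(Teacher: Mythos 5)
Your proof is correct and follows exactly the argument the paper defers to (it cites~\cite[Theorem~5]{acharya2019communication} rather than writing it out): the Hadamard identity $H_K^{\top}H_K = K I_K$ reduces the $\ell_2^2$ error to $\frac{4}{K}\sum_j \mathbb{E}[(\hat p(B^j)-p(B^j))^2]$, each term is a rescaled Bernoulli variance bounded by $\frac{K}{4n}\cdot\frac{e^{2\epsilon}}{q^2(e^{\epsilon}-1)^2}$, and $K\le 2k$ plus the nonexpansive truncation give the stated bound. No gaps.
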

The proof is exactly the same as the proof in~\cite[Theorem~$5$]{acharya2019communication}. By setting $q=\frac{e^{\epsilon}}{e^{\epsilon}+1}$ if $R\geq H_2\left(\frac{e^{\epsilon}}{e^{\epsilon}+1}\right)$ and $q=p_R$ if $R< H_2\left(\frac{e^{\epsilon}}{e^{\epsilon}+1}\right)$, we get
\begin{equation}
r_{\epsilon,R,n,k}^{\ell_{2}^{2}}\leq\left\{ \begin{array}{ll} \frac{2k \left(e^{\epsilon}+1\right)^{2}}{n \left(e^{\epsilon}-1\right)^2} & \text{if}\ R\geq H_2\left(\frac{e^{\epsilon}}{e^{\epsilon}+1}\right),\\
\frac{2k e^{2\epsilon}}{np_R^{2} \left(e^{\epsilon}-1\right)^2} & \text{if}\ R< H_2\left(\frac{e^{\epsilon}}{e^{\epsilon}+1}\right).
\end{array}
\right.
\end{equation}
The difference in our mechanism is that we design the private mechanism~\eqref{eqn3_9} for all values of randomness $R$. This completes the proof of Theorem~\ref{Th2_2}.

\section{Multi-level Privacy (Proof of Theorem~\ref{Th2_3})} 

\label{Privlv}

This section proves Theorem~\ref{Th2_3} (see page~\pageref{Th2_3}) by establishing a new technique using a smaller amount of randomness than the trivial scheme mentioned in Section~\ref{multilevel-privacy} while achieving the minimum risk estimation for each analyst. Our proposed mechanism for multi-level privacy (where $\epsilon_1>\hdots>\epsilon_d$) is a cascading mechanism, where in each step, we add a random key to the output of the previous step (see Figure~\ref{Fig5_1}, for example). The common output of the mechanism is accessible by all analysts. However, each analyst would have a different privacy level depending on the amount of randomness shared with it. Thus, each analyst uses the shared random key to partially undo the randomization of the common output to get less privacy and higher utility. Let $z_j=\frac{1}{e^{\epsilon_j}+1}$ for $j\in\left[d\right]$. For $i\in\left[n\right]$, let $\lbrace U_{i}^{1},\ldots, U_i^{d}\rbrace$ be a set of $d$ Bernoulli random variables, where $U_{i}^{j}$ has a parameter $q_j=\text{Pr}\left[U_i^{j}=1\right]$ given by
\begin{equation}~\label{eqn4_4}
q_j=\left\{\begin{array}{ll}
z_j&\text{if}\ j=1,\\
\frac{z_{j}-z_{j-1}}{1-2z_{j-1}}&\text{if}\ j>1.\\
\end{array}\right.
\end{equation} 
We first use the Hadamard response proposed in~\cite{acharya2019communication} for getting the first step of our mechanism (see Section~\ref{LDP-AC} for more details). Let $H_{K}$ be the $K\times K$ Hadamard matrix. Let $B^{l}$  be a set of the row indices that have $1$ in the $l$-th column of Hadamard matrix $H_{K}$ for $l\in\left[K\right]$. We divide the users into $K$ sets ($\mathcal{US}_1,\ldots,\mathcal{US}_K$), where each set contains $n/K$ users. We assign a set $B_i=B^{l}$ representing a subset of inputs for each user $i\in\mathcal{US}_l$. Then, user $i$ generates a virtual output $Y_{i}^{1}\in\lbrace 0,1\rbrace$ as follows
\begin{equation}~\label{eqn4_2}
Y_{i}^{1}=\left\{\begin{array}{ll}
1& \text{if}\ \left(X_i\in B_i\ \text{and}\ U_{i}^{1}=0\right)\ \text{or}\ \left(X\notin B_i\ \text{and}\ U_{i}^{1}=1\right),\\
0& \text{otherwise}.\\
\end{array}\right.
\end{equation}  
Observe that the representation of $Y_{i}^{1}$ in~\eqref{eqn4_2} is exactly the same as in~\eqref{eqn3_9} by setting $q=\text{Pr}\left[U_{i}^{1}=0\right]=\frac{e^{\epsilon}}{e^{\epsilon}+1}$. We represent $Y_{i}^{1}$ with this form to explicitly show the random keys used to design the Hadamard scheme presented in Section~\ref{LDP-AC}. Let $Y_{i}^{j}$ be the virtual output generated by user $i$ for the $j$th analyst, which is given by
\begin{equation}
Y_{i}^{j}=Y_{i}^{1}\oplus U_{i}^{2}\oplus \ldots\oplus U_{i}^{j},
\end{equation} 
where $\oplus$ denotes the bitwise XOR. Hence, we add randomization to the first step of the Hadamard scheme. User $i$ transmits the output $Y_{i}^{d}$ to all analysts. The private scheme is shown in Figure~\ref{Fig5_1}.
\begin{figure}[t]
  \centerline{\includegraphics[scale=0.38,trim={0cm 10cm 13cm 0cm},clip]{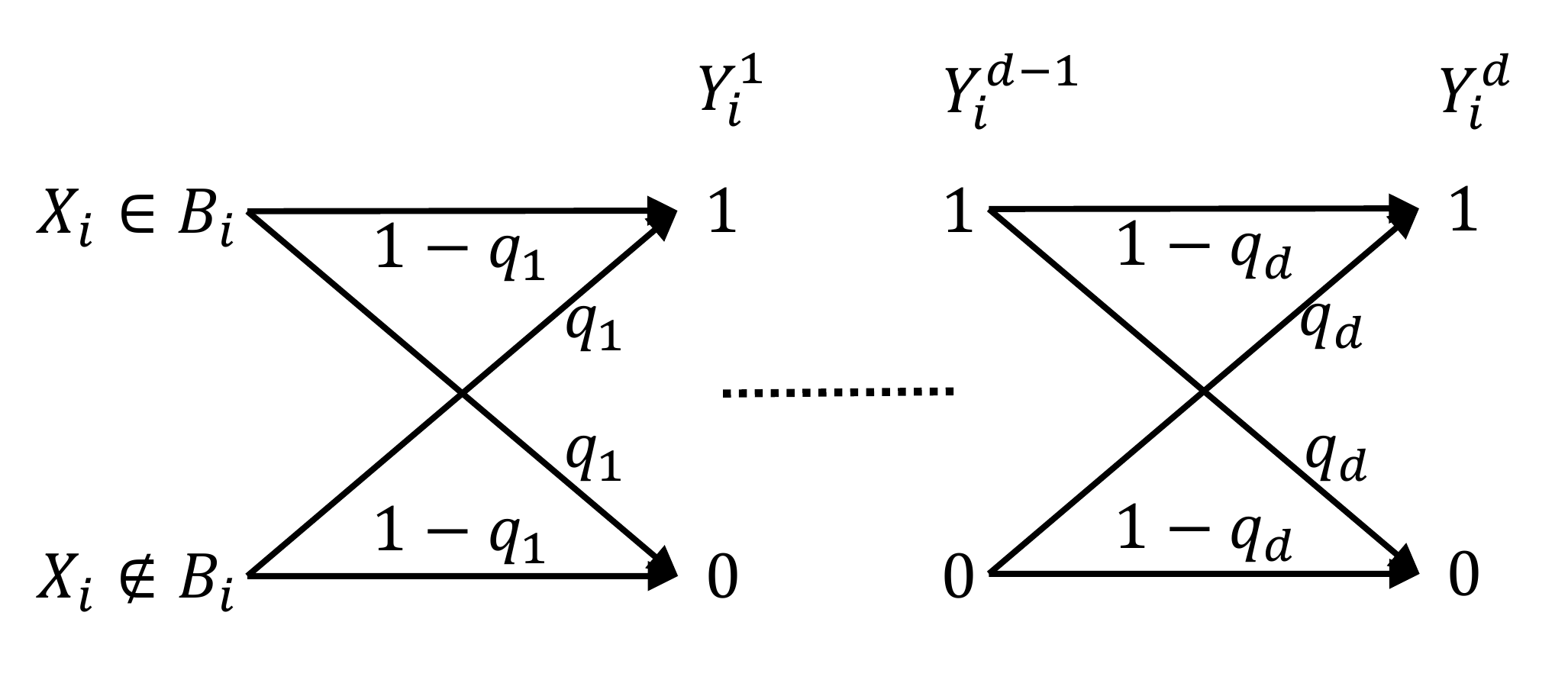}}
 \caption{Multiple privacy levels mechanism.}\vspace{-7mm}
~\label{Fig5_1}
\end{figure}

\begin{lemma}~\label{lemm4_1} 
The $j$th output of user $i$ satisfies $\epsilon_j$-LDP, i.e.,
\begin{equation}~\label{eqn4_1}
\sup_{y_i^{j}\in\lbrace 0,1\rbrace}\sup_{x_i,x_i^{\prime}\in\mathcal{X}}\frac{\text{Pr}\left[Y_{i}^{j}=y_{i}^{j}|X_i=x_i\right]}{\text{Pr}\left[Y_{i}^{j}=y_{i}^{j}|X_i=x_i^{\prime}\right]}\leq e^{\epsilon_j}
\end{equation}
\end{lemma}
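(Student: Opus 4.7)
The plan is to decompose $Y_i^j$ into an input-dependent indicator and an input-independent noise term, thereby reducing the LDP bound in \eqref{eqn4_1} to a ratio involving the distribution of a cumulative XOR of Bernoulli random keys.

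First, I would observe that the case analysis in \eqref{eqn4_2} is equivalent to the identity $Y_i^1 = \mathbbm{1}\left[X_i \in B_i\right] \oplus U_i^1$, which can be checked by enumerating the four combinations of $\mathbbm{1}[X_i \in B_i]$ and $U_i^1$. Composing with the subsequent XORs gives
\begin{equation*}
Y_i^j \;=\; \mathbbm{1}\left[X_i \in B_i\right] \;\oplus\; V_i^j, \qquad \text{where } V_i^j := U_i^1 \oplus U_i^2 \oplus \cdots \oplus U_i^j.
\end{equation*}
Since the keys $U_i^1,\ldots,U_i^j$ are independent of $X_i$, so is $V_i^j$, and therefore the conditional distribution of $Y_i^j$ given $X_i=x$ depends on $x$ only through the binary indicator $\mathbbm{1}[x\in B_i]$. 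Hence for any $x,x'\in\mathcal{X}$, the privacy ratio in \eqref{eqn4_1} is either $1$ (when $\mathbbm{1}[x\in B_i]=\mathbbm{1}[x'\in B_i]$) or equal to $\max\{\Pr[V_i^j=0],\Pr[V_i^j=1]\} / \min\{\Pr[V_i^j=0],\Pr[V_i^j=1]\}$ otherwise. Proving the lemma thus reduces to showing that $\Pr[V_i^j=1] = z_j = 1/(e^{\epsilon_j}+1)$, so that the worst-case ratio is exactly $(1-z_j)/z_j = e^{\epsilon_j}$.

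Second, I would prove the identity $\Pr[V_i^j=1]=z_j$ by induction on $j$, using the choice of $q_j$ in \eqref{eqn4_4}. Writing $r_j := \Pr[V_i^j=1]$ and using independence of $U_i^j$ from $V_i^{j-1}$, the XOR yields the recursion $r_j = r_{j-1}(1-q_j) + (1-r_{j-1})q_j = r_{j-1} + q_j(1-2r_{j-1})$, with base case $r_1 = q_1 = z_1$. For the inductive step, assuming $r_{j-1}=z_{j-1}$ and substituting $q_j = (z_j - z_{j-1})/(1-2z_{j-1})$, one obtains $r_j = z_{j-1} + (z_j - z_{j-1}) = z_j$, as desired. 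Along the way, I would note that since $\epsilon_{j-1}>\epsilon_j$ implies $z_j>z_{j-1}$ and $z_{j-1}<1/2$, we have $q_j\in(0,1)$, so each $U_i^j$ is a valid Bernoulli parameter and the construction is well-defined.

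The main obstacle is essentially the clean setup: once the decomposition $Y_i^j = \mathbbm{1}[X_i\in B_i]\oplus V_i^j$ is in place, the LDP ratio collapses to the simple expression $(1-z_j)/z_j$, and the inductive computation of $r_j$ is a direct consequence of the recursive definition of $q_j$. There is no genuinely hard analytic step; the key conceptual point is recognizing that the cascaded XORs telescope into a single Bernoulli noise term whose bias is tuned exactly to match the target privacy level $\epsilon_j$.
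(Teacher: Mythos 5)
Your proof is correct and takes essentially the same route as the paper's: both arguments reduce to an induction on $j$ establishing the recursion $r_j = r_{j-1}(1-q_j)+(1-r_{j-1})q_j$ and concluding that the relevant Bernoulli parameter equals $z_j$, so that the worst-case ratio is $(1-z_j)/z_j = e^{\epsilon_j}$. Your explicit decomposition $Y_i^j = \mathbbm{1}\left[X_i\in B_i\right]\oplus V_i^j$ is just a cleaner packaging of the paper's conditioning on $X_i\in B_i$ versus $X_i\notin B_i$, and your observation that $q_j\in(0,1)$ is a nice (if minor) sanity check the paper defers to a separate remark.
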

We prove Lemma~\ref{lemm4_1} in Appendix~\ref{AppD}. Note that each analyst has access to the public outputs $\lbrace Y_{1}^{d},\ldots,Y_{n}^{d}\rbrace$ which is $\epsilon_d$-LDP. Additionally, user $i$ sends a random key $L_{i}^{j}=U_{i}^{d}\oplus\ldots\oplus U_{i}^{j+1}$ to the $j$th analyst. Using the random keys $\lbrace L_{1}^{j},\ldots,L_{n}^{j}\rbrace$, the $j$th analyst can construct the private outputs $\lbrace Y_{1}^{j},\ldots,Y_{n}^{j}\rbrace$ which are $\epsilon_j$-LDP, where $Y_{i}^{j}=Y_{i}^{d}\oplus L_{i}^{j}$. Observe that the privatized output $Y_{i}^{j}$ has a conditional distribution given by
\begin{equation}
Q_{i}\left(Y_{i}^{j}|X_i\right)=\left\{\begin{array}{ll}
\frac{e^{\epsilon_j}}{e^{\epsilon_j}+1}& \text{if}\ X_i\in B_i\\
\frac{1}{e^{\epsilon_j}+1}& \text{if}\ X_i\not\in B_i\\
\end{array}
\right.
\end{equation}
which coincides with the private mechanism given in~\eqref{eqn3_9} with $q=\frac{e^{\epsilon_j}}{e^{\epsilon_j}+1}$. From Lemma~\ref{lemm3_3}, for privacy level $\epsilon_j=\mathcal{O}\left(1\right)$, we get that
\begin{equation}
r_{\epsilon,R,n,k}^{\ell_{2}^{2},j}=\mathcal{O}\left(\frac{k}{n\epsilon_{j}^{2}}\right),
\end{equation}
for analyst $j$, which coincides with the lower bound stated in Corollary~\ref{Cor2_1}. Observe that the total amount of randomness per user in the proposed mechanism is given by
\begin{equation}
\begin{aligned}
R_{\text{total}}^{\text{proposed}}&=\sum_{j=1}^{d} H\left(U^{j}\right)=\sum_{j=1}^{d} H_2\left(q_{j}\right)\leq R_{\text{total}}^{\text{trivial}},
\end{aligned}
\end{equation}
where $q_j$ is defined in~\eqref{eqn4_4}. 
Note that the last inequality is strict for $d>1$, which follows from the argument presented in Section~\ref{multilevel-privacy}.
This completes the proof of Theorem~\ref{Th2_3}.
  
\section{Private Recoverability (Proofs of Theorem~\ref{Th2_4} and Theorem~\ref{Th2_5})} 
\label{Recov} 

In this section, we prove Theorem~\ref{Th2_4} (see page~\pageref{Th2_4}) and Theorem~\ref{Th2_5} (see page~\pageref{Th2_5}).

\subsection{Proof of Theorem~\ref{Th2_4}}~\label{Recov-A}
This section proves the necessary and sufficient conditions on the random key $U$ and the privatized output $Y$ to design an $\epsilon$-LDP-Rec mechanism. We first prove that $|\mathcal{Y}|\geq |\mathcal{X}|$ is necessary to recover $X$ from $Y$ and $U$. We then prove that each input $x\in\mathcal{X}$ should be mapped with non-zero probability to every output $y\in\mathcal{Y}$; hence, we get $|\mathcal{U}|\geq |\mathcal{Y}|$, since each input $x\in\mathcal{X}$ can be mapped with non-zero probability to at most $|\mathcal{U}|$ outputs. The main part of our proof is bounding the randomness of the key $U$ in the second condition. We first prove in Lemma~\ref{lemm5_2} that for any $\epsilon$-LDP-Rec mechanism designed using a random key of size greater than the input size, there exists another $\epsilon$-LDP-Rec mechanism designed using a random key of size equal to the input size with the same or smaller amount of randomness. Thus, we can assume that $|\mathcal{U}|=|\mathcal{X}|$ and minimize the entropy of the random key $U$ over all possible distributions and under the $\epsilon$-LDP constraint. Since entropy is a concave function of the distribution, we get a non-convex problem. However, we can obtain an exact solution for the problem due to the structure of the privacy constraints that form a closed polytope. For the sufficiency part, we prove in Lemma~\ref{lemm5_1} that we can construct an $\epsilon$-LDP-Rec mechanism using the random key $U_{\min}^{s^{*}}$ defined in Theorem~\ref{Th2_4} that satisfies the two necessary conditions.

Before we proceed into the proof of Theorem~\ref{Th2_4}, we first present the following two lemmas whose proofs are given in Appendix~\ref{AppE} and Appendix~\ref{AppF}, respectively.
\begin{lemma}~\label{lemm5_1} For given a random key $U\in\mathcal{U}$ with size $|\mathcal{U}|=k$ having a distribution $\mathbf{q}=\left[q_1,\ldots,q_k\right]$ such that $\frac{q_{\max}}{q_{\min}}\leq e^{\epsilon}$, where $q_{\max}=\max\limits_{j\in\left[k\right]}q_j$ and $q_{\min}=\min\limits_{j\in\left[k\right]}q_j$, there exists an $\epsilon$-LDP-Rec mechanism with input $X\in\left[k\right]$ and an output $Y\in\left[k\right]$ designed using $U$.
\end{lemma}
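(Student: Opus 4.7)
The plan is to construct an explicit $\epsilon$-LDP-Rec mechanism by viewing the common index set $[k]$ as the cyclic group $\mathbb{Z}_k$ and defining the output as a ``one-time-pad''-style sum of the input and the key. Concretely, I would propose the deterministic map
\begin{equation*}
Y = (X + U) \bmod k,
\end{equation*}
where $X \in [k]$, $U \in \mathcal{U} = [k]$ is drawn according to $\mathbf{q}$, and $Y$ takes values in $[k]$. This mechanism uses $|\mathcal{U}|=k$ key symbols and produces $|\mathcal{Y}|=k$ outputs, matching the alphabet-size conditions required by Theorem~\ref{Th2_4}.

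The first thing to verify is recoverability: given the output $Y$ and the key $U$, Bob recovers $X = (Y - U) \bmod k$ deterministically, so the mechanism is indeed ``Rec''. The second thing is to compute the induced conditional distribution. For any input $x \in [k]$ and any output $y \in [k]$, there is a unique key value $u = (y-x) \bmod k$ that realizes the pair $(x,y)$, and therefore
\begin{equation*}
Q(y\,|\,x) \;=\; \Pr[U = (y-x) \bmod k] \;=\; q_{(y-x)\bmod k}.
\end{equation*}
In the notation of Section~\ref{Prelm}, this corresponds to the singleton partition $\mathcal{U}_{yx} = \{(y-x)\bmod k\}$, and $\bigcup_y \mathcal{U}_{yx} = \mathcal{U}$ for every $x$, as required.

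The final step is to check the $\epsilon$-LDP condition. For arbitrary inputs $x, x' \in [k]$ and any output $y$,
\begin{equation*}
\frac{Q(y\,|\,x)}{Q(y\,|\,x')} \;=\; \frac{q_{(y-x) \bmod k}}{q_{(y-x') \bmod k}} \;\le\; \frac{q_{\max}}{q_{\min}} \;\le\; e^{\epsilon},
\end{equation*}
where the last inequality is the hypothesis of the lemma. Taking the supremum over $y$ yields the $\epsilon$-LDP condition in~\eqref{eqn1_3}, so the constructed $Q$ is $\epsilon$-LDP. Combined with the recoverability observation, this shows that $Q$ is an $\epsilon$-LDP-Rec mechanism designed using $U$, completing the proof.

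I do not foresee a technical obstacle here: the content of the lemma is essentially that the hypothesis $q_{\max}/q_{\min} \le e^{\epsilon}$ is precisely the right condition to make a group-shift construction privacy-compliant. The only mild subtlety is making sure that the output alphabet is exactly $[k]$ (rather than a smaller set), which is why using a full group structure on $[k]$ is cleaner than, say, an arbitrary injection; using $(\mathbb{Z}_k,+)$ or any free transitive group action of $\mathcal{U}$ on $\mathcal{Y}$ would work identically.
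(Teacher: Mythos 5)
Your proposal is correct and is essentially identical to the paper's own proof in Appendix~\ref{AppE}, which likewise defines $Y = X \oplus U$ via a cyclic-group (modulo-$k$) shift, observes that recoverability follows from the invertibility of the group operation, and bounds the likelihood ratio by $q_{\max}/q_{\min} \le e^{\epsilon}$. No gaps.
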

This lemma shows that we can design an $\epsilon$-LDP mechanism with output size equal to the input size if we have a random key with size equal the input size and having a distribution such that $\frac{q_{\max}}{q_{\min}}\leq e^{\epsilon}$.
\begin{lemma}~\label{lemm5_2} Suppose that an $\epsilon$-LDP-Rec mechanism with an input $X\in\left[k\right]$ and an output $Y\in\mathcal{Y}$ is designed using a random key $U\in\mathcal{U}$ with size $|\mathcal{U}|=m> k$. Then there exists an $\epsilon$-LDP-Rec mechanism with an input $X\in\left[k\right]$ and an output $Y\in\left[k\right]$ designed using a random key $U^{\prime}\in\left[k\right]$ such that
$H\left(U\right)\geq H\left(U^{\prime}\right)$.
\end{lemma}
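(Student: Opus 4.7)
My plan is to reduce the task to producing a distribution $\mathbf{q}'$ on $[k]$ with $q'_{\max}/q'_{\min}\le e^{\epsilon}$ and $H(\mathbf{q}')\le H(U)$; invoking Lemma~\ref{lemm5_1} on any such distribution then immediately yields the required $\epsilon$-LDP-Rec mechanism with input and output alphabets $[k]$ and key in $[k]$. Since $Q$ and the law of $U$ do not depend on the input distribution, I am free to analyse the setup under any prior on $X$, and I would take $X$ uniform on $[k]$. A small but important preliminary: the $\epsilon$-LDP condition~\eqref{eqn1_3} forces any reachable output $y$ to be reachable from \emph{every} input (otherwise the ratio in \eqref{eqn1_3} is infinite), so I may assume without loss of generality that every $y\in\mathcal{Y}$ has $Q(y|x)>0$ for all $x$; in particular $\Pr[Y=y]>0$ under the uniform prior.

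The construction is a single-shot choice. I would pick $y_0\in\mathcal{Y}$ that minimises the conditional entropy $H(X\mid Y=y_0)$, and define $U'$ on $[k]$ by
\[
\Pr[U'=x]\ =\ \Pr[X=x\mid Y=y_0]\ =\ \frac{Q(y_0\mid x)}{\sum_{x'}Q(y_0\mid x')},\qquad x\in[k].
\]
This is supported on all of $[k]$ by the preceding pruning step. The ratio property is immediate: $\Pr[U'=x]\propto Q(y_0\mid x)$, hence $\max_x\Pr[U'=x]/\min_x\Pr[U'=x]=\max_x Q(y_0\mid x)/\min_x Q(y_0\mid x)\le e^{\epsilon}$ is exactly the $\epsilon$-LDP bound~\eqref{eqn1_3} applied in the column $y_0$.

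The entropy property is where the recoverability hypothesis is used. Since $X$ is a deterministic function of $(U,Y)$ we have $H(X\mid U,Y)=0$, which gives $H(X\mid Y)=I(X;U\mid Y)\le H(U\mid Y)\le H(U)$. Because $H(X\mid Y)$ is the expectation of $H(X\mid Y=y_0)$ over $y_0\sim p_Y$, the minimiser $y_0$ chosen above satisfies $H(X\mid Y=y_0)\le H(X\mid Y)\le H(U)$, and by construction $H(U')=H(X\mid Y=y_0)$, so $H(U')\le H(U)$. Plugging the distribution of $U'$ into Lemma~\ref{lemm5_1} then produces the promised $\epsilon$-LDP-Rec mechanism on $[k]$.

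I expect the most delicate step to be the entropy inequality $H(X\mid Y)\le H(U)$: the argument hinges on the recoverability identity $H(X\mid U,Y)=0$ together with the choice of the uniform prior on $X$, which is exactly what makes $H(X\mid Y=y_0)$ coincide with the Shannon entropy of the posterior distribution that defines $U'$. The remaining pieces — the ratio bound from \eqref{eqn1_3}, the support check from the pruning step, and the application of Lemma~\ref{lemm5_1} — are short and essentially mechanical. Notice that no iterative reduction on $|\mathcal{U}|$ is needed; the single-shot choice of $y_0$ collapses $U$ down to a key on $[k]$ in one step.
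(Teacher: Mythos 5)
Your proof is correct, and it takes a genuinely different route from the paper's. The paper argues combinatorially: it fixes an output column $y$ whose key set contains the most probable key value, forms an auxiliary distribution whose first $k$ entries are the column probabilities $Q(y|X=j)$ and whose remaining entries are the unused key masses, sorts it, and then repeatedly invokes a separate trimming lemma (drop the smallest atom and renormalize, which never increases entropy) to collapse the support down to $k$; the ratio bound is tracked through the sorting step via the careful choice of $y$. You instead place a uniform prior on $X$, observe that recoverability gives $H(X\mid U,Y)=0$, hence $H(X\mid Y)=I(X;U\mid Y)\leq H(U\mid Y)\leq H(U)$, and take $U'$ to be the posterior of $X$ given the output $y_0$ that minimizes $H(X\mid Y=y_0)$; the averaging argument then gives $H(U')\leq H(X\mid Y)\leq H(U)$, and the ratio bound $q'_{\max}/q'_{\min}\leq e^{\epsilon}$ is read off directly from the LDP condition in column $y_0$ (your pruning observation that every output must be reachable from every input is the same fact the paper uses elsewhere to show $|\mathcal{U}|\geq|\mathcal{Y}|$, and it guarantees full support of the posterior). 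Both proofs then terminate identically by feeding the resulting distribution into Lemma~\ref{lemm5_1}. What your approach buys is economy: it bypasses the trimming lemma (Lemma~\ref{lemmf_1}) and the bookkeeping around the leftover key mass entirely, replacing them with a three-line information-theoretic chain; what the paper's approach buys is that it stays entirely at the level of the key distribution without introducing a prior on $X$, which keeps it closer in style to the polytope argument used immediately afterwards for the second necessary condition.
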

Now, we are ready to prove Theorem~\ref{Th2_4}.
We prove the first necessary condition of Theorem~\ref{Th2_4} in two parts:
We can show $|\mathcal{Y}|\geq |\mathcal{X}|$ using the recoverability constraint and
$|\mathcal{U}|\geq |\mathcal{Y}|$ using the privacy constraint. We prove these in Appendix~\ref{AppF-2}.

From Lemma~\ref{lemm5_2} and the first necessary condition, we see that the $\epsilon$-LDP-Rec mechanism with the smallest amount of randomness is obtained when $|\mathcal{U}|=|\mathcal{Y}|=|\mathcal{X}|=k$. Hence, we restrict our attention to this case only. Let $U\in\left[k\right]$ be a random key having a distribution $\mathbf{q}=\left[q_1,\ldots,q_k\right]$. Without loss of generality, we assume that $q_1\leq q_2\leq \ldots\leq q_k$. Before we prove the necessity of the second condition, we claim that $q_k/q_1\leq e^{\epsilon}$. We prove this using both privacy and recoverability constraints in Appendix~\ref{AppF-2}.

Now, we are ready to prove the necessity of the second condition. Our objective is to find the minimum entropy of the random key $U$ with size $|\mathcal{U}|=k$ such that the private mechanism is $\epsilon$-LDP and the sample $X$ can be recovered from observing $Y$ and the random key $U$. The problem can be formulated as follows
\begin{align}~\label{eqn5_1}
\min\limits_{\mathbf{q}=\left[q_1,\ldots,q_k\right]}&\ H\left(U\right)=-\sum_{j=1}^{k}q_j\log\left(q_j\right)\\
s.t.,&\  1\leq\frac{q_j}{q_1}\leq e^{\epsilon}\ \forall j\in\left[k\right]~\label{eqn5_2}\\
& \sum_{j=1}^{k}q_j=1,
\,\,\, q_j\geq 0\ \forall j\in\left[k\right]~\label{eqn5_3}
\end{align}
where the constraint~\eqref{eqn5_2} is obtained from the claim proved above. Observe that the constraints~\eqref{eqn5_2}-\eqref{eqn5_3} form a closed polytope. Furthermore, the objective function~\eqref{eqn5_1} is a concave function on $\mathbf{q}$. Since we \textit{minimize} a concave function over a polytope, the global optimum point is one of the vertices of the polytope~\cite{rosen1983global}. Since we have a single equality constraint, a vertex has to satisfy at least $k-1$ inequality constraints with equality. Observe that none of the inequalities in~\eqref{eqn5_3} can be satisfied with equality, otherwise the privacy constraints in~\eqref{eqn5_2} would be violated. Thus, the optimal vertex is of the form $$\mathbf{q}=\left[\underbrace{q_1,\ldots,q_1}_{k-s\ \text{terms}},\underbrace{e^{\epsilon}q_1,\ldots,e^{\epsilon}q_1}_{s\ \text{terms}}\right]$$
such that $s$ of inequalities from $\frac{q_j}{q_1}\leq e^{\epsilon}$ are satisfied with equality and $\left( k-s-1\right)$ of inequalities from $1\leq \frac{q_j}{q_1}$ are satisfied with equality, where $s$ is a variable to be optimized. 
Hence, the optimal distribution has the form
\begin{equation}~\label{eqn5_8}
\mathbf{q}^{s}=\left[\underbrace{\frac{1}{se^{\epsilon}+k-s},\ldots,\frac{1}{se^{\epsilon}+k-s}}_{k-s\ \text{terms}},\underbrace{\frac{e^{\epsilon}}{se^{\epsilon}+k-s},\ldots,\frac{e^{\epsilon}}{se^{\epsilon}+k-s}}_{s\ \text{terms}}\right],
\end{equation}   
where $s$ is an integer parameter chosen to minimize the entropy as follows
\begin{equation}~\label{eqn5_4}
\begin{aligned}
s^{*}=&\arg\min_{s\in\left[k\right]} \sum_{j=1}^{k}q^{s}_{j}\log\left(\frac{1}{q^{s}_{j}}\right)
=\arg\min_{s\in\left[k\right]}\ \log\left(s\left(e^{\epsilon}-1\right)+k\right)-\frac{s\epsilon e^{\epsilon} }{ s\left(e^{\epsilon}-1\right)+k}\\
&=\arg\min_{s\in\left[k\right]}\ \log\left(s\left(e^{\epsilon}-1\right)+k\right)+\frac{\epsilon e^{\epsilon} k}{\left(e^{\epsilon}-1\right)\left( s\left(e^{\epsilon}-1\right)+k\right)}-\frac{\epsilon e^{\epsilon}}{e^{\epsilon}-1}.
\end{aligned}
\end{equation}
In order to solve the optimization problem~\eqref{eqn5_4}, we relax the problem by assuming $s$ is a real number taking values in $[0,k]$. The optimization problem in~\eqref{eqn5_4} is non-convex in for general values of $\epsilon$ and $k$. Thus, we get all local minima by setting the derivative to zero along with the boundary points $s\in\lbrace 0,k\rbrace$. Then we check all these critical points to obtain the global minimum point. However, we can see that at the boundary points $s\in\lbrace 0,k\rbrace$, the objective function is equal to $\log\left(k\right)$ which is the maximum entropy for any random variable with support size $k$. Hence, the optimal solution is one of the local minimums. We can verify that the objective function has only one local minimum point by setting the derivative with respect to $s$ to zero. Thus, we get
\begin{equation}~\label{eqn5_5}
\tilde{s}=k\frac{e^{\epsilon}\left(\epsilon-1\right)+1 }{\left(e^{\epsilon}-1\right)^{2}},
\end{equation}
where $\tilde{s}$ denotes the local minimum point. Since~\eqref{eqn5_4} is a continuous function in the real variable $s$, the optimal discrete point $s^{*}$ is within the local minimum $\tilde{s}$. Hence, we get the closest integer to the real value in~\eqref{eqn5_5}. As a result, we get
$$H\left(U\right)\geq H\left(U_{\min}^{s^{*}}\right),$$ where $s^{*}=\arg\min\limits_{s\in\lbrace \ceil{l},\floor{l}\rbrace}H\left(U_{\min}^{s}\right)$ for $l=k\frac{e^{\epsilon}\left(\epsilon-1\right)+1 }{\left(e^{\epsilon}-1\right)^{2}}$, and $U_{\min}^{s}$ is a random variable having a distribution $\mathbf{q}^{s^{*}}$ given in~\eqref{eqn5_8}. Hence, the proof of the necessary part is completed.

The sufficiency part is straightforward: Note that the random key $U_{\min}^{s^{*}}$ defined in Theorem~\ref{Th2_4} satisfies the necessary conditions, and Lemma~\ref{lemm5_1}, we can construct an $\epsilon$-LDP-Rec mechanism using the random key $U_{\min}^{s^{*}}$. Thus, these conditions are sufficient.

\subsection{Proof of Theorem~\ref{Th2_5}}~\label{Recov-B}
In this section, we show that the Hadamard response (HR) scheme proposed in~\cite{acharya2018hadamard} is, in fact, an $\epsilon$-LDP-Rec mechanism, where it is possible to recover the input $X$ from the output $Y$ and randomness $U$. Furthermore, we show that it is order optimal from a randomness perspective\footnote{We mention that the Hadamard mechanism in~\cite{acharya2018hadamard} is symmetric with non-binary outputs, while the Hadamard response in~\cite{acharya2019communication} has only binary outputs.}.

We briefly describe the HR mechanism, and then analyze its performance. We refer to~\cite{acharya2018hadamard} for more details. The HR mechanism is parameterized by two parameters: $K$ denotes the support size of the private mechanism output ($\mathcal{Y}=\left[K\right]$), and $s\leq K$ is a positive integer. For each $x\in\mathcal{X}$, let $\mathcal{C}_x\subseteq\left[K\right]$ be a subset of outputs of size $|\mathcal{C}_x|=s$. The private mechanism for HR is defined by
\begin{equation}
Q\left(y|X\right)=\left\{\begin{array}{ll}
\frac{e^{\epsilon}}{s e^{\epsilon}+K-s}& \text{if}\ y\in\mathcal{C}_x\\
\frac{1}{s e^{\epsilon}+K-s}& \text{if}\ y\notin \mathcal{C}_x\\
\end{array} \right.
\end{equation} 
We can easily show that this is a symmetric mechanism, i.e., it can be represented using a private key $U$ of size $|K|$ that is independent of the mechanism input $X$. Furthermore the distribution of the private key $U$ is given by $$\mathbf{q}^{\text{HR}}=\left[\underbrace{\frac{1}{se^{\epsilon}+K-s},\ldots,\frac{1}{se^{\epsilon}+K-s}}_{K-s\ \text{terms}},\underbrace{\frac{e^{\epsilon}}{se^{\epsilon}+K-s},\ldots,\frac{e^{\epsilon}}{se^{\epsilon}+K-s}}_{s\ \text{terms}}\right]$$
It remains to choose $K$, $s$, and $\lbrace \mathcal{C}_x\rbrace_{x\in\mathcal{X}}$ for fixed $\epsilon$ and input size $|\mathcal{X}|=k$. In~\cite[Section~$5$]{acharya2018hadamard}, the authors proposed $K=B\times b$ and $s=b/2$, where $B=2^{\ceil{\log_2\left(\min\lbrace e^{\epsilon},2k\rbrace\right)}-1}$, and $b=2^{\ceil{\log_2\left(\frac{k}{B}+1\right)}}$. Furthermore, each set $\mathcal{C}_x$ is a subset of rows indices of the Hadamard matrix. These parameters are chosen such that $s$ is close to $\max\lbrace \frac{k}{e^{\epsilon}},1\rbrace$, and $K$ is approximately the smallest power of $2$ greater than $k$. The reason behind using values that are powers of $2$ is to exploit the structure of the Hadamard matrix. In~\cite[Theorem~$7$]{acharya2018hadamard}, the authors proved that the minimax risk of HR for $\ell_{2}^{2}$ loss function is given by
\begin{equation}
r_{\epsilon,n,k}^{\ell_{2}^{2}}\leq\left\{\begin{array}{ll}
\mathcal{O}\left(\frac{k}{n\epsilon^{2}}\right)& \text{for}\ \epsilon <1\\
\mathcal{O}\left(\frac{k}{ne^{\epsilon}}\right)& \text{for}\ 1\leq\epsilon \leq \log\left(k\right)\\
\mathcal{O}\left(\frac{1}{n}\right)& \text{for}\ \epsilon >\log\left(k\right)
\end{array}\right.
\end{equation} 
which is order optimal for all privacy levels. In addition, the authors in~\cite{acharya2019communication} have shown that the HR scheme is order optimal for heavy hitter estimation in the high privacy regime ($\epsilon=\mathcal{O}\left(1\right)$). In the following, we analyze the performance of HR with respect to the randomness of the private mechanism. Observe that for fixed $\epsilon$ and $k$, the parameters $K$, $B$, and $b$ of HR is bounded by
$$\frac{\min\lbrace e^{\epsilon},2k\rbrace}{2}\leq B\leq \min\lbrace e^{\epsilon},2k\rbrace,\quad \frac{k}{\min\lbrace e^{\epsilon},2k\rbrace}\leq b\leq \frac{4k}{\min\lbrace e^{\epsilon},2k\rbrace},\quad k\leq K\leq 4k.$$
Hence, the entropy of the private key used to generate the HR private mechanism is bounded by
\begin{equation}~\label{eqn5_6}
\begin{aligned}
H^{\text{HR}}\left(U\right)&=\log\left(\frac{b}{2} e^{\epsilon}+K-\frac{b}{2}\right)-\frac{\epsilon e^{\epsilon}\frac{b}{2}}{\frac{b}{2}e^{\epsilon}+K-\frac{b}{2}}\\
&\leq\log\left(\frac{2k}{\min\lbrace e^{\epsilon},2k\rbrace}\left(e^{\epsilon}-1\right)+4k\right)-\frac{\epsilon e^{\epsilon}}{e^{\epsilon}-1+2\min\lbrace e^{\epsilon},2k\rbrace}\\
&= \left\{\begin{array}{ll}
\log\left(2k\frac{3e^{\epsilon}-1}{e^{\epsilon}}\right)-\frac{\epsilon e^{\epsilon}}{3e^{\epsilon}-1}& \text{if}\ \epsilon\leq\log\left(k\right)+1, \\
\log\left(e^{\epsilon}+4k-1\right)-\frac{\epsilon e^{\epsilon}}{e^{\epsilon}+4k-1}& \text{if}\ \epsilon>\log\left(k\right)+1.
\end{array}\right.
\end{aligned}
\end{equation}
The minimum entropy of the private key to generate an $\epsilon$-LDP-Rec mechanism is bounded by (Theorem~\ref{Th2_4})
\begin{equation}~\label{eqn5_7}
\begin{aligned}
H^{\min}\left(U\right)&= \log\left(s^{*}e^{\epsilon}+k-s^{*}\right)-\frac{\epsilon e^{\epsilon}s^{*}}{s^{*}e^{\epsilon}+k-s^{*}}\\
&\geq \left\{\begin{array}{ll}
 \log\left(k\left(\frac{\epsilon e^{\epsilon}}{e^{\epsilon}-1}\right)\right)-\frac{\epsilon e^{\epsilon}}{e^{\epsilon}+\frac{\left( e^{\epsilon}-1\right)^2}{e^{\epsilon}\left(\epsilon-1\right)+1}-1} &\text{if}\ \epsilon\leq \log\left(k\right),\\
 \log\left(e^{\epsilon}+k-1\right)-\frac{\epsilon e^{\epsilon}}{e^{\epsilon}+k-1} &\text{if}\ \epsilon>\log\left(k\right).\\
\end{array}\right.
\end{aligned}
\end{equation}
From~\eqref{eqn5_6} and~\eqref{eqn5_7}, we can verify that HR is randomness-order-optimal for all privacy levels $\epsilon$.

\section{Sequence of Distribution Estimation (Proof of Theorem~\ref{Th2_6})} 
\label{Recov_GDP} 

In this section, we prove Theorem~\ref{Th2_6} (see page~\pageref{Th2_6}). The main idea of our proof is as follows. The first condition is obtained in a similar manner as in the proof of Theorem~\ref{Th2_4}. For the second condition, we relate the minimum amount of randomness required to preserve privacy of $T$ samples to the minimum amount of randomness required to preserve privacy of $T-1$ samples. In particular, we prove that $H\left(U\right)\geq H\left(U_{\min,T-1}\right)+H\left(U_{\min,1}\right)$, where $H\left(U_{\min,t}\right)$ is the minimum amount of randomness of a  key when we have a database of $t$ input samples.

\begin{ddd} Let $U\in\mathcal{U}$ be a random key drawn from a discrete distribution $\mathbf{q}=\left[q_{1},\cdots,q_{k^{T}}\right]$ with a support size $|\mathcal{U}|=k^{T}$, where $q_{u}=\text{Pr}\left[U=u\right]$. We say that the distribution $\mathbf{q}$ satisfies $\epsilon$-DP, if there exists a bijective function $f:\mathcal{X}^{T}\to \left[1:k^{T}\right]$ from the dataset $\mathcal{X}^{T}$ to integers $\left[1:k^{T}\right]$, such that for every neighboring databases $\mathbf{x},\mathbf{x}^{\prime}\in\left[k\right]^{T}$, we have
\begin{equation}~\label{eqn6_1}
\frac{q_{f\left(\mathbf{x}\right)}}{q_{f\left(\mathbf{x}^{\prime}\right)}}\leq e^{\epsilon}.
\end{equation}
\end{ddd}
We begin our proof with the following lemma which is a generalized version of Lemma~\ref{lemm5_1}. 
We prove it in Appendix~\ref{AppG}. 

\begin{lemma}~\label{lemm6_1} Consider an input database $\mathbf{x}=\left(x^{\left(1\right)},\ldots,x^{\left(T\right)}\right)\in\left[k\right]^{T}$, and a random key $U\in\mathcal{U}=\lbrace u_1,\cdots,u_{k^{T}}\rbrace$ distributed according to an $\epsilon$-DP distribution $\mathbf{q}=\left[q_{1},\cdots,q_{k^{T}}\right]$. Then, there exists an $\epsilon$-DP-Rec mechanism $Q:\left[k\right]^{T}\to \left[k\right]^{T}$ that uses $U$ to create an output $Y^{T}\in \left[k\right]^{T}$, such that we can recover the input database $X^{T}$ from $(U,Y^{T})$. 
\end{lemma}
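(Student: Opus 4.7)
The plan is to generalize the construction used in the proof of Lemma~\ref{lemm5_1} (which handles the $T=1$ case) to the database setting. The key observation is that the bijection $f:\mathcal{X}^{T}\to \left[1:k^{T}\right]$ identifies $\mathcal{U}$ with the set of databases $\left[k\right]^{T}$, and the $\epsilon$-DP property of $\mathbf{q}$ then says exactly that the induced random database $V := f^{-1}(U)$ satisfies $\Pr[V=\mathbf{v}]/\Pr[V=\mathbf{v}'] \le e^{\epsilon}$ for every pair of neighboring databases $\mathbf{v},\mathbf{v}' \in \left[k\right]^{T}$.

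Using this identification, I would define the mechanism $Q$ by setting $Y^{T} := X^{T}\oplus V$, where $\oplus$ denotes componentwise addition modulo $k$. For each realization $U=u$, the map $\mathbf{x}\mapsto \mathbf{x}\oplus f^{-1}(u)$ is a bijection on $\left[k\right]^{T}$, so recoverability is immediate from $X^{T} = Y^{T}\ominus f^{-1}(U)$. Moreover, for any fixed $\mathbf{x}$ and $\mathbf{y}$ there is exactly one $u\in\mathcal{U}$ realizing the transition, which is also consistent with the output alphabet and randomness alphabet having size $k^{T}$.

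For the privacy verification, I would fix arbitrary neighboring databases $\mathbf{x},\mathbf{x}' \in \left[k\right]^{T}$ (differing in a single coordinate $t$) and an arbitrary output $\mathbf{y}\in \left[k\right]^{T}$. The conditional probabilities factor as
\begin{equation*}
\Pr\!\left[Y^{T}=\mathbf{y}\,\middle|\, X^{T}=\mathbf{x}\right] \;=\; \Pr\!\left[V=\mathbf{y}\ominus\mathbf{x}\right] \;=\; q_{f(\mathbf{y}\ominus\mathbf{x})},
\end{equation*}
and similarly for $\mathbf{x}'$. Since $\mathbf{y}\ominus\mathbf{x}$ and $\mathbf{y}\ominus\mathbf{x}'$ differ only in coordinate $t$, they are themselves neighboring databases in $\left[k\right]^{T}$; hence the $\epsilon$-DP property of $\mathbf{q}$ directly yields $q_{f(\mathbf{y}\ominus\mathbf{x})}/q_{f(\mathbf{y}\ominus\mathbf{x}')} \le e^{\epsilon}$, which is exactly the condition \eqref{eqn1_2}.

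I expect the main conceptual subtlety to lie in keeping the bookkeeping clean around the bijection $f$: the neighboring relation is defined combinatorially on $\left[k\right]^{T}$ (differ in one coordinate) while $\mathbf{q}$ is indexed by $\left[1:k^{T}\right]$, so the $\epsilon$-DP hypothesis must be pulled back through $f$ before it can be applied. Once this identification is in place, the crucial (and easy) ingredient is that translation by a common vector preserves the ``differ in exactly one coordinate'' relation, after which the argument reduces to substituting into \eqref{eqn6_1}.
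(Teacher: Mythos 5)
Your construction is the same one the paper uses: identify the key with a random database via the bijection $f$, output a componentwise modular shift of the input, and recover $X^{T}$ by undoing the shift. The only substantive difference is in how privacy is verified. The paper proves \eqref{eqnG_6} by induction over outputs, starting from $\mathbf{y}_0=(0,\ldots,0)$ (where the claim is the $\epsilon$-DP hypothesis on $\mathbf{q}$ verbatim) and stepping to a neighboring output by re-expressing $\mathbf{x}\oplus\tilde{\mathbf{y}}$ as $\underline{\mathbf{x}}\oplus\mathbf{y}$ for a suitably shifted $\underline{\mathbf{x}}$. You instead observe directly that translation by a common vector preserves the ``differ in exactly one coordinate'' relation, so $\mathbf{y}\ominus\mathbf{x}$ and $\mathbf{y}\ominus\mathbf{x}'$ are themselves neighboring and the hypothesis \eqref{eqn6_1} applies in one step. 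Your argument is correct and strictly simpler; the induction in the paper is establishing exactly this translation-invariance, just coordinate by coordinate. Both proofs otherwise handle recoverability and the validity of $Q(\cdot|\mathbf{x})$ as a conditional distribution identically, via bijectivity of $\mathbf{x}\mapsto\mathbf{x}\oplus\mathbf{v}$ and of $f$.
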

We can prove the first necessary condition of Theorem~\ref{Th2_6} (which is to show $|\mathcal{U}|\geq|\mathcal{Y}^{T}|\geq |\mathcal{X}^{T}|$) in the same way as we proved that for Theorem~\ref{Th2_4}.
For completeness, we provide a proof of it in Appendix~\ref{AppG}. 
Now we prove the necessity of the second condition. Consider an arbitrary $\epsilon$-DP-Rec mechanism $Q$ with output $Y^{T}\in\mathcal{Y}^{T}$ using a random key $U\in\mathcal{U}$, where $|\mathcal{Y}^{T}|=m\geq k^{T}$ and $|\mathcal{U}|=l\geq m$. Let $U\sim \mathbf{q}$, where $\mathbf{q}=\left[q_1,\ldots,q_l\right]$ such that $q_u=\text{Pr}\left[U=u\right]$ for $u\in\mathcal{U}$. Let $\mathcal{U}_{\mathbf{y}\mathbf{x}}\subset\mathcal{U}$ be a subset of key values such that the input $X^{T}=\mathbf{x}$ is mapped to $Y^{T}=\mathbf{y}$ when $U\in\mathcal{U}_{\mathbf{y}\mathbf{x}}$. Thus, the private mechanism $Q$ can be represented as 
\begin{equation}
Q\left(\mathbf{y}|\mathbf{x}\right)=\sum_{u\in\mathcal{U}_{\mathbf{y}\mathbf{x}}}q_u.
\end{equation}  
Observe that $\sum_{\mathbf{y}\in\mathcal{Y}^{T}}Q\left(\mathbf{y}|\mathbf{x}\right)=1$, since $Q\left(\mathbf{y}|\mathbf{x}\right)$ is a conditional distribution for any given $\mathbf{x}\in\left[k\right]^{T}$. Since $Q$ is an $\epsilon$-DP-Rec mechanism, it follows from the recoverability constraint that each input $\mathbf{x}$ is mapped to $\mathbf{y}$ using a different set of key values ($\mathcal{U}_{\mathbf{y}\mathbf{x}}\bigcap\mathcal{U}_{\mathbf{y}\mathbf{x}^{\prime}}=\phi$). Thus, for each $\mathbf{y}\in\mathcal{Y}^{T}$, we have $s_{\mathbf{y}}=\sum_{\mathbf{x}\in\left[k\right]^{T}}Q\left(\mathbf{y}|\mathbf{x}\right)\leq 1$. Furthermore, we get $\sum_{\mathbf{y}\in\mathcal{Y}^{T}}\sum_{\mathbf{x}\in\left[k\right]^{T}}Q\left(\mathbf{y}|\mathbf{x}\right)=\sum_{\mathbf{y}\in\mathcal{Y}^{T}}s_{\mathbf{y}}=k^{T}$. 

We sort the $k^{T}$ databases in $\mathcal{X}^{T}$ in lexicographic order by arranging them in increasing order of $x^{\left(1\right)}$. Then, we arrange the databases that have the same $x^{\left(1\right)}$ in increasing order of $x^{\left(2\right)}$ and so on. For example, database $\mathbf{x}=\left(x^{\left(1\right)},\ldots,x^{\left(i\right)},x^{\left(i+1\right)},\ldots,x^{\left(T\right)}\right)$ will appear before the database $\tilde{\mathbf{x}}=\left(x^{\left(1\right)},\ldots,x^{\left(i\right)},\tilde{x}^{\left(i+1\right)},\ldots,\tilde{x}^{\left(T\right)}\right)$ when $x^{\left(i+1\right)}<\tilde{x}^{\left(i+1\right)}$. Furthermore, we denote $\mathbf{x}_{i}$ as the $i$th database in the lexicographic order for $i\in\left[k\right]^{T}$. Observe that $s_{\mathbf{y}}=\sum_{\mathbf{x}\in\left[k\right]^{T}}Q\left(\mathbf{y}|\mathbf{x}\right)$ for given $\mathbf{y}\in\mathcal{Y}^{T}$. Thus, the probabilities $\mathbf{P}^{\mathbf{y}}= \left[P^{\mathbf{y}}_1,\ldots,P^{\mathbf{y}}_{k^{T}}\right]$ construct a valid distribution with support size $k^{T}$, where $P_{j}^{\mathbf{y}}=\frac{Q\left(\mathbf{y}|\mathbf{x}_j\right)}{s_{\mathbf{y}}}$ for $j\in\left[k\right]^{T}$. Furthermore, for every neighboring databases $\mathbf{x},\mathbf{x}^{\prime}\in\left[k\right]^{T}$, we have
\begin{equation}~\label{eqn6_7}
\frac{\nicefrac{Q\left(\mathbf{y}|\mathbf{x}\right)}{s_{\mathbf{y}}}}{\nicefrac{Q\left(\mathbf{y}|\mathbf{x}^{\prime}\right)}{s_{\mathbf{y}}}}=\frac{Q\left(\mathbf{y}|\mathbf{x}\right)}{Q\left(\mathbf{y}|\mathbf{x}^{\prime}\right)} \stackrel{\left(a\right)}{\leq} e^{\epsilon},
\end{equation} 
where step $\left(a\right)$ follows from the fact that $Q$ is an $\epsilon$-DP-Rec mechanism. Hence, the distribution $\mathbf{P}^{\mathbf{y}}$ is $\epsilon$-DP distribution. The proof of the following lemma is presented in Appendix~\ref{AppH}.

\begin{lemma}~\label{lemm6_2} For every output $\mathbf{y}\in\mathcal{Y}^{T}$, we have
$H\left(\mathbf{P}^{\mathbf{y}}\right)\geq H\left(U_{\min,T-1}\right)+H\left(U_{\min,1}\right),$
where $H\left(U_{\min,t}\right)$ denotes the minimum randomness of a private key when we have a database of $t$ samples for $t\in\lbrace 1,\ldots,T\rbrace$.
\end{lemma}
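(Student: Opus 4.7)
The plan is to establish the bound via the chain rule for entropy, exploiting the fact that both the marginal of $\mathbf{P}^{\mathbf{y}}$ on a single coordinate and the conditional distribution on the remaining $T-1$ coordinates inherit the $\epsilon$-DP structure from $\mathbf{P}^{\mathbf{y}}$. Concretely, write $\mathbf{x}=(a,\mathbf{z})$ with $a\in[k]$ and $\mathbf{z}\in[k]^{T-1}$, and decompose $\mathbf{P}^{\mathbf{y}}(a,\mathbf{z})=P_{1}(a)\,P_{2:T\mid 1}(\mathbf{z}\mid a)$, where $P_1$ is the marginal on the first coordinate and $P_{2:T\mid 1}(\cdot\mid a)$ is the conditional on the remaining $T-1$ coordinates.

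First I would show that $P_{1}$ is itself an $\epsilon$-DP distribution on $[k]$. For any $a,a'\in[k]$, the databases $(a,\mathbf{z})$ and $(a',\mathbf{z})$ are neighbors in the sense of \eqref{eqn6_1}, so $\mathbf{P}^{\mathbf{y}}(a,\mathbf{z})\leq e^{\epsilon}\,\mathbf{P}^{\mathbf{y}}(a',\mathbf{z})$ termwise, whence summing over $\mathbf{z}$ yields $P_{1}(a)\leq e^{\epsilon}P_{1}(a')$. By the definition of $U_{\min,1}$ as the minimum-entropy $\epsilon$-DP distribution on $[k]$, this gives $H(P_{1})\geq H(U_{\min,1})$.

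Next I would show that for each fixed value $a\in[k]$, the conditional $P_{2:T\mid 1}(\cdot\mid a)$ is an $\epsilon$-DP distribution on $[k]^{T-1}$. Indeed, for neighboring $\mathbf{z},\mathbf{z}'\in[k]^{T-1}$ differing in exactly one coordinate, the full databases $(a,\mathbf{z})$ and $(a,\mathbf{z}')$ are also neighbors in $[k]^T$, so
\begin{equation*}
\frac{P_{2:T\mid 1}(\mathbf{z}\mid a)}{P_{2:T\mid 1}(\mathbf{z}'\mid a)}=\frac{\mathbf{P}^{\mathbf{y}}(a,\mathbf{z})}{\mathbf{P}^{\mathbf{y}}(a,\mathbf{z}')}\leq e^{\epsilon}.
\end{equation*}
By the definition of $U_{\min,T-1}$, we obtain $H(P_{2:T\mid 1}(\cdot\mid a))\geq H(U_{\min,T-1})$ for every $a$, and averaging over $a$ under $P_{1}$ yields $H(P_{2:T}\mid X_{1})\geq H(U_{\min,T-1})$. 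Combining via the chain rule $H(\mathbf{P}^{\mathbf{y}})=H(P_{1})+H(P_{2:T}\mid X_{1})$ finishes the proof.

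The main obstacle to watch out for is making sure the DP property is genuinely preserved under both marginalization and conditioning on the \emph{first} coordinate; the argument relies on the observation that a single-coordinate change in $\mathbf{z}$ (keeping $a$ fixed) remains a single-coordinate change in the full database $(a,\mathbf{z})$, and likewise a change in $a$ alone preserves neighborliness irrespective of $\mathbf{z}$. These two facts are what make the otherwise lossy operations of marginalizing and conditioning entropy-compatible with the minimum-entropy definitions $H(U_{\min,1})$ and $H(U_{\min,T-1})$. Once these are in place, the chain rule delivers the desired additive lower bound, which plugs directly into the recursive argument used to prove Theorem~\ref{Th2_6}.
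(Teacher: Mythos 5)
Your proposal is correct and follows essentially the same route as the paper: the paper also applies the entropy chain rule on the first coordinate (its $C_i$'s are your marginal $P_1$ and its $\mathbf{P}_{T-1}$ is your conditional $P_{2:T\mid 1}(\cdot\mid a)$), verifies that both inherit the $\epsilon$-DP ratio bounds from $\mathbf{P}^{\mathbf{y}}$, and then lower-bounds their entropies by $H(U_{\min,1})$ and $H(U_{\min,T-1})$ respectively. The only point to make explicit is that passing from ``this distribution is $\epsilon$-DP'' to ``its entropy is at least $H(U_{\min,t})$'' uses the sufficiency direction (Lemmas~\ref{lemm5_1} and~\ref{lemm6_1}), which the paper cites at that step.
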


Using Lemma~\ref{lemm6_2}, we can prove Theorem~\ref{Th2_6} as follows.

\begin{align}
H\left(U\right)&=\frac{1}{k^{T}}\sum_{\mathbf{x}\in\left[k\right]^{T}}H\left(U\right)\stackrel{\left(a\right)}{\geq} \frac{1}{k^{T}}\sum_{\mathbf{x}\in\left[k\right]^{T}}H\left(Y^{T}|X^{T}=\mathbf{x}\right) \notag\\
&=\frac{1}{k^{T}}\sum_{\mathbf{x}\in\left[k\right]^{T}}\sum_{\mathbf{y}\in\mathcal{Y}^{T}}-Q\left(\mathbf{y}|\mathbf{x}\right)\log\left(Q\left(\mathbf{y}|\mathbf{x}\right)\right) \notag \\
&=\frac{1}{k^{T}}\sum_{\mathbf{y}\in\mathcal{Y}^{T}}\left[s_{\mathbf{y}}\left(\sum_{\mathbf{x}\in\left[k\right]^{T}}-\frac{Q\left(\mathbf{y}|\mathbf{x}\right)}{s_{\mathbf{y}}}\log\left(\frac{Q\left(\mathbf{y}|\mathbf{x}\right)}{s_{\mathbf{y}}}\right)\right)-s_{\mathbf{y}}\log\left(s_{\mathbf{y}}\right)\right] \notag \\
&=\frac{1}{k^{T}}\sum_{\mathbf{y}\in\mathcal{Y}^{T}}\big[s_{\mathbf{y}}H\left(\mathbf{P}^{\mathbf{y}}\right)-s_{\mathbf{y}}\log\left(s_{\mathbf{y}}\right)\big] \notag \\
&\stackrel{\left(b\right)}{\geq} \frac{1}{k^{T}}\sum_{\mathbf{y}\in\mathcal{Y}^{T}} \big[s_{\mathbf{y}}\left(H\left(U_{\min,T-1}\right)+H\left(U_{\min,1}\right)\right)-s_{\mathbf{y}}\log\left(s_{\mathbf{y}}\right)\big] \notag\\
&\stackrel{\left(c\right)}{\geq} H\left(U_{\min,T-1}\right)+H\left(U_{\min,1}\right), \label{eqn6_6}
\end{align}
where step $\left(a\right)$ follows from the fact that $Q\left(\mathbf{y}|\mathbf{x}\right)$ is a function of $U$. Step $\left(b\right)$ follows from Lemma~\ref{lemm6_2}. The inequality $\left(c\right)$ follows from solving the problem 
\begin{equation}~\label{opt_10}
\begin{aligned}
\min_{\lbrace s_{\mathbf{y}}\rbrace}&\  \sum_{\mathbf{y}\in\mathcal{Y}^{T}}s_{\mathbf{y}}\left[H\left(U_{\min,T-1}\right)+H\left(U_{\min,1}\right)\right]-s_{\mathbf{y}}\log\left(s_{\mathbf{y}}\right)\\
s.t.&\ \sum_{\mathbf{y}\in\mathcal{Y}^{T}}s_{\mathbf{y}}=k^{T}\quad  \text{ and } 0\leq s_{\mathbf{y}}\leq 1, \ \forall\ \mathbf{y}\in\mathcal{Y}^{T}
\end{aligned}
\end{equation}
Note that $f\left(x\right)=-x\log\left(x\right)$ is a concave function on $0\leq x\leq 1$. Therefore, the objective function in~\eqref{opt_10} is concave in $\lbrace s_{\mathbf{y}}\rbrace$. The minimum value of a concave function is one of the vertices which is obtained when all the inequalities are satisfied by equalities. 
By setting $k^{T}$ of the $s_{\mathbf{y}}$'s to be one and setting the remaining $|\mathcal{Y}^{T}|-k^T$ of $s_{\mathbf{y}}$'s to be zero, the objective value in \eqref{opt_10} becomes $k_T$, which gives inequality (c).

Now, from~\eqref{eqn6_6}, we conclude that $H\left(U\right)\geq TH\left(U_{\min,1}\right)$,
where $H\left(U_{\min,1}\right)$ is the minimum amount of randomness required to design an $\epsilon$-LDP-Rec mechanism given in Theorem~\ref{Th2_4}. This completes the proof of Theorem~\ref{Th2_6}.


\newcommand{\etalchar}[1]{$^{#1}$}

\newpage
\appendix

\section{Lower Bound on The Minimax Risk Estimation Using Fisher Information}~\label{LDP_F}
In this section, we introduce an alternative proof of Theorem~\ref{Th2_1}. Our proof is inspired by the approach in~\cite{barnes2019learning} that uses Fisher information to bound the minimax risk estimation under communication constraints. The main idea of our proof is to formulate a non-convex optimization problem to bound the Fisher information matrix under privacy and randomness constraints. Let $\overline{\mathcal{P}}\subset \Delta_k$ be a subset of simplex $\Delta_k$ defined by 
$$\overline{\mathcal{P}}=\left\{ \mathbf{p}\in\mathbb{R}^{k}:\sum\limits_{j=1}^{k}p_j=1,\ \frac{1}{k}\leq p_j\leq \frac{2}{k} ,\ p_{j+k/2}=\frac{2}{k}-p_{j},\ \forall j\in\left[k/2\right]\right\}.$$
For every $\mathbf{p}\in\overline{\mathcal{P}}$, the number of free variables is $k/2$, where each parameter $p_{j+k/2}$ is associated with the variable $p_j$, $\forall\ j\in\left[k/2\right]$. For a given distribution $\mathbf{p}\in\Delta_k$, we define the marginal distribution on the output $Y$ as
\begin{equation}
\mathbf{M}\left(y|\mathbf{p}\right)=\sum_{j=1}^{k}Q\left(Y=y|X=j\right)p_j.
\end{equation}    
Let $S_{\mathbf{p}}\left(y\right)$ denote the $k/2$-vector score function of $Y$ given by
\begin{equation}
\begin{aligned}
S_{\mathbf{p}}\left(y\right)&=\left[S_{p_1}\left(y\right),\ldots,S_{p_{k/2}}\left(y\right)\right]\\
&=\left[\frac{\partial\log\left( \mathbf{M}\left(y|\mathbf{p}\right)\right)}{\partial p_1},\ldots,\frac{\partial\log\left( \mathbf{M}\left(y|\mathbf{p}\right)\right)}{\partial p_{k/2}}\right].
\end{aligned}
\end{equation}
Then, the Fisher information matrix for estimating $\mathbf{p}\in\overline{\mathcal{P}}$ from $Y$ is given by
\begin{equation}
I_{Y}\left(\mathbf{p}\right)=\mathbb{E}\left[S_{\mathbf{p}}\left(y\right)S_{\mathbf{p}}\left(y\right)^{T}\right],
\end{equation}
where the expectation is taken over the randomness in the output $Y$. Now, consider the following inequalities
\begin{equation}~\label{eqn3_4}
\begin{aligned}
r^{\ell_{2}^{2}}_{\epsilon,R,n,k}&=\inf_{\lbrace Q_i\in\mathcal{Q}_{\left(\epsilon,R\right)}\rbrace} \inf_{\hat{\mathbf{p}}}\sup_{\mathbf{p}\in\Delta_k}\mathbb{E}\left[\ell_{2}^{2}\left(\hat{\mathbf{p}}\left(\mathbf{Y}^{n}\right),\mathbf{p}\right)\right] \\
&\geq  \inf_{\lbrace Q_i\in\mathcal{Q}_{\left(\epsilon,R\right)}\rbrace} \inf_{\hat{\mathbf{p}}}\sup_{\mathbf{p}\in\overline{\mathcal{P}}}\mathbb{E}\left[\ell_{2}^{2}\left(\hat{\mathbf{p}}\left(\mathbf{Y}^{n}\right),\mathbf{p}\right)\right]\\
&\stackrel{\left(a\right)}{\geq} \frac{\left(k/2\right)^2}{\sup\limits_{\lbrace Q_i\in\mathcal{Q}_{\left(\epsilon,R\right)}\rbrace} \sup\limits_{\mathbf{p}\in\overline{\mathcal{P}}}\text{Tr}\left(I_{Y^{n}}\left(\mathbf{p}\right)\right)+\frac{k}{2}\pi^{2}}
\end{aligned}
\end{equation}
where $I_{Y^{n}}\left(\mathbf{p}\right)$ denotes the Fisher information matrix for estimating $\mathbf{p}$ from $Y^{n}=\left[Y_1,\ldots,Y_n\right]$, and $\text{Tr}\left(I_{Y^{n}}\left(\mathbf{p}\right)\right)$ denotes the trace of the Fisher information matrix $I_{Y^{n}}\left(\mathbf{p}\right)$. Step $\left(a\right)$ follows from the van Trees inequality~\cite{barnes2019learning}[Eqn.$4$-$8$]. Our goal is to bound the term $\sup_{\lbrace Q_i\in\mathcal{Q}_{\left(\epsilon,R\right)}\rbrace} \sup_{\mathbf{p}\in\overline{\mathcal{P}}}\text{Tr}\left(I_{Y^{n}}\left(\mathbf{p}\right)\right)$. For a given distribution $\mathbf{p}\in\overline{\mathcal{P}}$, the random variables $Y_1,\ldots,Y_n$ are independent. As a result, the trace of the Fisher information matrix for estimating $\mathbf{p}$ from $Y_1,\ldots,Y_n$ is bounded by
\begin{equation}~\label{eqn3_3}
\begin{aligned}
\sup\limits_{\lbrace Q_i\in\mathcal{Q}_{\left(\epsilon,R\right)}\rbrace} &\sup_{\mathbf{p}\in\overline{\mathcal{P}}}\text{Tr}\left(I_{Y^{n}}\left(\mathbf{p}\right)\right) \\
&\stackrel{\left(a\right)}{=}\sup\limits_{\lbrace Q_i\in\mathcal{Q}_{\left(\epsilon,R\right)}\rbrace}\sup_{\mathbf{p}\in\overline{\mathcal{P}}}\sum_{i=1}^{n}\text{Tr}\left(I_{Y_i}\left(\mathbf{p}\right)\right)\\
&\leq \sup\limits_{\lbrace Q_i\in\mathcal{Q}_{\left(\epsilon,R\right)}\rbrace} \sup_{\mathbf{p}\in\overline{\mathcal{P}}} n\sup_{i\in\left[n\right]} \text{Tr}\left(I_{Y_i}\left(\mathbf{p}\right)\right)\\
&\stackrel{\left(b\right)}{\leq}  \left\{ \begin{array}{ll}
2nk \frac{e^{\epsilon}\left(e^\epsilon-1\right)^2}{\left(e^\epsilon+1\right)^2}& \text{if}\ R\geq H_2\left(\frac{e^\epsilon}{e^\epsilon+1}\right)\\
2nk \frac{p_{R}^2 \left(e^\epsilon-1\right)^2}{e^{\epsilon}}& \text{if}\ R< H_2\left(\frac{e^\epsilon}{e^\epsilon+1}\right)
\end{array}
\right.
\end{aligned}
\end{equation} 
where step $\left(a\right)$ follows from the chain rule of the Fisher information~\cite{zamir1998proof}[Lemma~$1$]. Step $\left(b\right)$ follows from Lemma~\ref{lemm3_1} presented below. Substituting from~\eqref{eqn3_3} into~\eqref{eqn3_4}, we get
\begin{equation}
r^{\ell_{2}^{2}}_{\epsilon,R,n,k}\geq \left\{ \begin{array}{ll} 
 \frac{k\left(e^\epsilon+1\right)^2}{16ne^\epsilon\left(e^\epsilon-1\right)^2}& \text{if}\ R\geq H_2\left(\frac{e^\epsilon}{e^\epsilon+1}\right)\\
  \frac{ke^{\epsilon}}{16np_{R}^2 \left(e^\epsilon-1\right)^2}& \text{if}\ R< H_2\left(\frac{e^\epsilon}{e^\epsilon+1}\right)
\end{array}
\right.
\end{equation}
for $n\geq 4\frac{e^{\epsilon}}{p_R^2\left(e^{\epsilon}-1\right)^2}$. 

\begin{lemma}~\label{lemm3_1} For any $\left(\epsilon,R\right)$-LDP mechanism, the trace of the Fisher information matrix $I_{Y}\left(\mathbf{p}\right)$ is bounded by
\begin{equation}
\sup_{Q\in\mathcal{Q}_{\left(\epsilon,R\right)}}\sup_{\mathbf{p}\in\overline{\mathcal{P}}} \text{Tr}\left(I_{Y}\left(\mathbf{p}\right)\right)\leq \left\{ \begin{array}{ll}
2k \frac{e^{\epsilon}\left(e^\epsilon-1\right)^2}{\left(e^\epsilon+1\right)^2}& \text{if}\ R\geq H_2\left(\frac{e^\epsilon}{e^\epsilon+1}\right)\\
2k \frac{p_{R}^2 \left(e^\epsilon-1\right)^2}{e^{\epsilon}}& \text{if}\ R< H_2\left(\frac{e^\epsilon}{e^\epsilon+1}\right)
\end{array}
\right.
\end{equation} 
where $ H_2\left(\frac{e^\epsilon}{e^\epsilon+1}\right)$ is the Shannon entropy, and $p_R<0.5$ denotes the inverse Shannon entropy $p_R=h^{-1}\left(R\right)$.
\end{lemma}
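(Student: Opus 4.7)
The plan is to unpack the trace of the Fisher information matrix directly in terms of the conditional distributions $Q(y|j)$, then bound each diagonal term by reducing to the quantity already controlled in Lemma~\ref{lemma_1}. Because $\mathbf{p}\in\overline{\mathcal{P}}$ has only $k/2$ free parameters (with $p_{j+k/2}=\tfrac{2}{k}-p_j$), differentiating the marginal $\mathbf{M}(y|\mathbf{p})=\sum_{j'}Q(y|j')p_{j'}$ in $p_j$ gives
\[
S_{p_j}(y) \;=\; \frac{Q(y|j)-Q(y|j+k/2)}{\mathbf{M}(y|\mathbf{p})}.
\]
Hence
\[
\mathrm{Tr}\bigl(I_Y(\mathbf{p})\bigr) \;=\; \sum_{j=1}^{k/2}\sum_{y\in\mathcal{Y}}\frac{\bigl(Q(y|j)-Q(y|j+k/2)\bigr)^{2}}{\mathbf{M}(y|\mathbf{p})}.
\]

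Next I would lower-bound the denominator using the $\epsilon$-LDP constraint. Since $Q(y|j')\geq e^{-\epsilon}Q(y|j)$ and $Q(y|j')\geq e^{-\epsilon}Q(y|j+k/2)$ for every $j'\in[k]$, and $\sum_{j'}p_{j'}=1$, one obtains
\[
\mathbf{M}(y|\mathbf{p}) \;\geq\; e^{-\epsilon}\,\frac{Q(y|j)+Q(y|j+k/2)}{2},
\]
which is uniform over $\mathbf{p}\in\overline{\mathcal{P}}$. Substituting gives
\[
\mathrm{Tr}\bigl(I_Y(\mathbf{p})\bigr) \;\leq\; 2\,e^{\epsilon}\sum_{j=1}^{k/2}\sum_{y\in\mathcal{Y}}\frac{\bigl(Q(y|j)-Q(y|j+k/2)\bigr)^{2}}{Q(y|j)+Q(y|j+k/2)}.
\]

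Now, for each fixed pair $(j,j+k/2)$, the inner sum is exactly the quantity already optimized in Lemma~\ref{lemma_1}. Applying that lemma term-by-term yields
\[
\sum_{y\in\mathcal{Y}}\frac{\bigl(Q(y|j)-Q(y|j+k/2)\bigr)^{2}}{Q(y|j)+Q(y|j+k/2)} \;\leq\; \begin{cases} 2\dfrac{(e^{\epsilon}-1)^{2}}{(e^{\epsilon}+1)^{2}}, & R\geq H_2\!\left(\tfrac{e^{\epsilon}}{e^{\epsilon}+1}\right),\\[4pt] 2\dfrac{p_R^{2}(e^{\epsilon}-1)^{2}}{e^{2\epsilon}}, & R< H_2\!\left(\tfrac{e^{\epsilon}}{e^{\epsilon}+1}\right). \end{cases}
\]
Summing over the $k/2$ pairs multiplies the right-hand side by $k/2$, and combining with the prefactor $2e^{\epsilon}$ produces exactly the two branches claimed in the lemma, namely $2k\,e^{\epsilon}(e^{\epsilon}-1)^2/(e^{\epsilon}+1)^2$ and $2k\,p_R^{2}(e^{\epsilon}-1)^2/e^{\epsilon}$, uniformly in $\mathbf{p}\in\overline{\mathcal{P}}$ and $Q\in\mathcal{Q}_{(\epsilon,R)}$.

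The only nontrivial ingredient is Lemma~\ref{lemma_1}, which was already proved by reducing to binary outputs and analyzing the feasible polytope; given that lemma, the remaining work here is purely mechanical (differentiation of $\log\mathbf{M}(y|\mathbf{p})$, a one-line privacy-based bound on $\mathbf{M}(y|\mathbf{p})$, and summing the pairwise estimates). I expect no further obstacles; the step that requires care is simply making sure the prefactor $2e^{\epsilon}$ from bounding the denominator combines correctly with the two cases of Lemma~\ref{lemma_1} so that the $e^{\epsilon}/e^{2\epsilon}$ simplification in the low-randomness regime produces the stated $1/e^{\epsilon}$ factor.
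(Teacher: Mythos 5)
Your proposal is correct and mirrors the paper's own proof of Lemma~\ref{lemm3_1}: the same score-function computation, the same privacy-based lower bound $\mathbf{M}(y|\mathbf{p})\geq e^{-\epsilon}\frac{Q(y|j)+Q(y|j+k/2)}{2}$, and the same reduction to Lemma~\ref{lemma_1}. The only cosmetic difference is that the paper bounds the sum over $j$ by $\frac{k}{2}$ times the maximum over $j$ before invoking Lemma~\ref{lemma_1}, whereas you apply the lemma term-by-term and then sum; the two are equivalent.
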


\begin{proof}
For a given distribution $\mathbf{p}\in\overline{\mathcal{P}}$, we have
\begin{equation}
\begin{aligned}
S_{p_j}\left(y\right)&=\frac{\partial\log\left( \mathbf{M}\left(y|\mathbf{p}\right)\right)}{\partial p_j}\\
&=\frac{Q\left(y|j\right)-Q\left(y|j+k/2\right)}{\mathbf{M}\left(y|\mathbf{p}\right)},
\end{aligned}
\end{equation}
for $j\in\left[k/2\right]$. By taking the expectation with respect to $Y$, we get
\begin{equation}
\mathbb{E}\left[ S_{p_j}\left(Y\right)^2\right]=\sum_{y\in\mathcal{Y}}\frac{\left(Q\left(y|j\right)-Q\left(y|j+k/2\right)\right)^2}{\sum_{j^{\prime}=1}^{k}Q\left(y|j^{\prime}\right)p_{j^{\prime}}}
\end{equation}
Thus, the trace of the Fisher information matrix is given by
\begin{equation}
\begin{aligned}
\text{Tr}\left(I_{Y}\left(\mathbf{p}\right)\right)&=\sum_{j=1}^{k/2}\mathbb{E}\left[ S_{p_j}\left(Y\right)^2\right]\\
&=\sum_{j=1}^{k/2}\sum_{y\in\mathcal{Y}}\frac{\left(Q\left(y|j\right)-Q\left(y|j+k/2\right)\right)^2}{\sum_{j^{\prime}=1}^{k}Q\left(y|j^{\prime}\right)p_{j^{\prime}}}\\
&\leq \frac{k}{2}\max_{j\in\left[k/2\right]}\sum_{y\in\mathcal{Y}}\frac{\left(Q\left(y|j\right)-Q\left(y|j+k/2\right)\right)^2}{\sum_{j^{\prime}=1}^{k}Q\left(y|j^{\prime}\right)p_{j^{\prime}}}\\
&\stackrel{\left(a\right)}{\leq} k e^{\epsilon}\max_{j\in\left[k/2\right]}\sum_{y\in\mathcal{Y}}\frac{\left( Q\left(y|j\right)-Q\left(y|j+k/2\right)\right)^2}{Q\left(y|j\right)+Q\left(y|j+k/2\right)}\\
&\stackrel{\left(b\right)}{\leq} \left\{ \begin{array}{ll}
2k \frac{e^{\epsilon}\left(e^\epsilon-1\right)^2}{\left(e^\epsilon+1\right)^2}& \text{if}\ R\geq H_2\left(\frac{e^\epsilon}{e^\epsilon+1}\right)\\
2k \frac{p_{R}^2 \left(e^\epsilon-1\right)^2}{e^{\epsilon}}& \text{if}\ R< H_2\left(\frac{e^\epsilon}{e^\epsilon+1}\right)
\end{array}
\right.
\end{aligned}
\end{equation} 
where step $\left(a\right)$ follows from the fact that $Q\left( y|j^{\prime}\right)\geq e^{-\epsilon} Q\left( y|j\right)$ and $Q\left( y|j^{\prime}\right)\geq e^{-\epsilon} Q\left( y|j+k/2\right),\ \forall j^{\prime}\in\left[k\right]$. Thus, we have 
\begin{equation}
\begin{aligned}
\sum\limits_{j^{\prime}=1}^{k}Q\left( y|j^{\prime}\right)p_{j^{\prime}}&\geq e^{-\epsilon}\frac{Q\left( y|j\right)+Q_i\left( y|j+k/2\right)}{2}\sum_{j^{\prime}=1}^{k}p_{j^{\prime}}\\
&=e^{-\epsilon}\frac{Q\left( y|j\right)+Q\left( y|j+k/2\right)}{2}
\end{aligned}
\end{equation} 
Step $\left(b\right)$ follows from Lemma~\ref{lemma_1} presented at the end of Section~\ref{LDP_AS}. This completes the proof of Lemma~\ref{lemm3_1}.
\end{proof}

\section{Proof of Lemma~\ref{lemm3_4}}
\label{AppK} 
We start our proof by Assoud's method. 

\begin{lemma}~\label{lemm3_2}(Assouad's Method~\cite{duchi2018minimax}) For the family of distributions $\left\{ \mathbf{p}^{\nu}:\nu\in\mathcal{V}=\lbrace -1,1\rbrace^{k/2}\right\}$, and a loss function $\ell\left(\hat{\mathbf{p}},\mathbf{p}\right)=\sum_{j=1}^{k}\phi\left(\hat{p}_j-p_j\right)$ defined in Section~\ref{LDP_AS}, we have
\begin{equation}~\label{eqn3_5}
\begin{aligned}
r^{\ell}_{\epsilon,R,n,k}\left(Q^{n}\right)&=\inf_{\hat{\mathbf{p}}}\sup_{\mathbf{p}\in\Delta_k}\mathbb{E}\left[\ell\left(\hat{\mathbf{p}}\left(Y^{n}\right),\mathbf{p}\right)\right]\\
&\geq \phi\left(\delta\right)\sum_{j=1}^{k/2}\left(1-||\mathbf{M}^{n}_{+j}-\mathbf{M}^{n}_{-j}||_{\text{TV}}\right)
\end{aligned}
\end{equation}
\end{lemma}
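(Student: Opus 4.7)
The plan is to follow the classical Assouad reduction, converting the estimation problem into $k/2$ coupled binary testing problems indexed by the coordinates of the hypercube $\mathcal{V}=\{-1,+1\}^{k/2}$. First, since $\{\mathbf{p}^\nu:\nu\in\mathcal{V}\}\subset\Delta_k$, I lower-bound the supremum over the whole simplex by the uniform average over $\mathcal{V}$:
$$\sup_{\mathbf{p}\in\Delta_k}\mathbb{E}_{\mathbf{p}}\bigl[\ell(\hat{\mathbf{p}}(Y^n),\mathbf{p})\bigr]\;\geq\;\frac{1}{|\mathcal{V}|}\sum_{\nu\in\mathcal{V}}\mathbb{E}_{\mathbf{p}^\nu}\bigl[\ell(\hat{\mathbf{p}}(Y^n),\mathbf{p}^\nu)\bigr].$$
Then I invoke the coordinate separation inequality
$$\ell(\hat{\mathbf{p}}(y^n),\mathbf{p}^\nu)\;\geq\;\phi(\delta)\sum_{j=1}^{k/2}\mathbbm{1}\bigl\{\mathrm{sgn}(\hat{p}_j(y^n)-1/k)\neq\nu_j\bigr\},$$
which is already derived in Section~\ref{LDP_AS} (immediately before equation~\eqref{eqn3_30}) from the symmetry of $\phi$, the antipodal construction $p_j^\nu=1/k+\delta\nu_j$, and the observation that if $\hat{p}_j$ is on the wrong side of $1/k$ relative to $\nu_j$, then $|\hat{p}_j-p_j^\nu|\geq\delta$.

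Swapping summation and expectation reduces the problem to lower-bounding, for each $j\in[k/2]$, the quantity
$$P_{e,j}\;:=\;\frac{1}{|\mathcal{V}|}\sum_{\nu\in\mathcal{V}}\Pr_{\mathbf{p}^\nu}\!\bigl[\mathrm{sgn}(\hat{p}_j(Y^n)-1/k)\neq\nu_j\bigr].$$
Partitioning the average by the sign of $\nu_j$ and using the definitions of the mixtures $\mathbf{M}^n_{+j}$ and $\mathbf{M}^n_{-j}$ (together with the fact that $Y_1,\ldots,Y_n$ are conditionally independent given $\nu$), I obtain
$$P_{e,j}\;=\;\tfrac{1}{2}\,\mathbf{M}^n_{+j}\bigl(\hat{p}_j(Y^n)<1/k\bigr)\,+\,\tfrac{1}{2}\,\mathbf{M}^n_{-j}\bigl(\hat{p}_j(Y^n)\geq 1/k\bigr),$$
which is the average error of a binary test between $\mathbf{M}^n_{+j}$ and $\mathbf{M}^n_{-j}$ under a uniform prior.

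The key (and essentially only nontrivial) step is Le Cam's two-point inequality: for any (possibly randomized) binary decision rule $\psi$ based on $Y^n$,
$$\tfrac{1}{2}\mathbf{M}^n_{+j}(\psi=-1)+\tfrac{1}{2}\mathbf{M}^n_{-j}(\psi=+1)\;\geq\;\tfrac{1}{2}\bigl(1-\|\mathbf{M}^n_{+j}-\mathbf{M}^n_{-j}\|_{\mathrm{TV}}\bigr),$$
which follows from the identity $\|\mu-\nu\|_{\mathrm{TV}}=1-\int\min(d\mu,d\nu)$ and the optimality of the likelihood-ratio (Neyman--Pearson) test. Applying this bound coordinate-wise, multiplying by $\phi(\delta)$, and summing over $j\in[k/2]$ yields the claimed estimate (with the customary $\tfrac{1}{2}$ absorbed into the normalization used in~\eqref{eqn3_5}). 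I do not anticipate any substantive obstacle here: the separation inequality is already in hand, and Le Cam's bound is classical; the argument amounts to bookkeeping once the hypothesis-testing reformulation is in place.
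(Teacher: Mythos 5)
Your proof follows exactly the same route as the paper's own argument in Appendix~\ref{AppC}: lower-bound the supremum by the uniform average over the hypercube $\mathcal{V}$, apply the coordinate separation inequality, recast each coordinate $j$ as a binary test between the mixtures $\mathbf{M}^{n}_{+j}$ and $\mathbf{M}^{n}_{-j}$, and conclude with the Le Cam/Neyman--Pearson identity. The only point to watch is the factor of $\tfrac{1}{2}$ from the testing bound against the $\tfrac{1}{|\mathcal{V}|}$ normalization in the definition of $\mathbf{M}^{n}_{\pm j}$, which you wave off as ``absorbed''; this same bookkeeping ambiguity appears in the paper's own derivation and affects only the constant factor.
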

For completeness, we present the proof of Lemma~\ref{lemm3_2} in Appendix~\ref{AppC}. Let $\lbrace e_j\rbrace_{j=1}^{k/2}$ be the standard basis of $\mathbb{R}^{k/2}$. Consider now the following inequalities:
{\allowdisplaybreaks
\begin{equation}~\label{eqn3_6}
\begin{aligned}
\sum_{j=1}^{k/2}\left(1-\left\|\mathbf{M}^{n}_{+j}-\mathbf{M}^{n}_{-j}\right\|_{\text{TV}}\right)&\stackrel{\left(a\right)}{\geq} \sum_{j=1}^{k/2}\left(1-\frac{1}{|\mathcal{V}|}\sum_{\nu:\nu_j=1}||\left(\prod_{i=1}^{n}\mathbf{M}^{\nu}_{i}\right)-\left(\prod_{i=1}^{n}\mathbf{M}^{\nu-2e_j}_{i}\right)||_{\text{TV}}\right)\\
&\geq \sum_{j=1}^{k/2}\left(1-\sup_{\nu:\nu_j=1}||\left(\prod_{i=1}^{n}\mathbf{M}^{\nu}_{i}\right)-\left(\prod_{i=1}^{n}\mathbf{M}^{\nu-2e_j}_{i}\right)||_{\text{TV}}\right)\\
&\stackrel{\left(b\right)}{\geq} \sum_{j=1}^{k/2}\left(1-\sup_{\nu:\nu_j=1}\sqrt{\frac{1}{2}D_{\text{KL}}\left(\left(\prod_{i=1}^{n}\mathbf{M}^{\nu}_{i}\right)||\left(\prod_{i=1}^{n}\mathbf{M}^{\nu-2e_j}_{i}\right)\right)}\right)\\
&\stackrel{\left(c\right)}{\geq}\sum_{j=1}^{k/2}\left(1-\sqrt{\frac{1}{2}\sup_{\nu:\nu_j=1}\sum_{i=1}^{n}D_{\text{KL}}\left(\mathbf{M}^{\nu}_{i}||\mathbf{M}^{\nu-2e_j}_{i}\right)}\right)\\
&=\frac{k}{2}\left(1-\frac{2}{k}\sum_{j=1}^{k/2}\sqrt{\frac{1}{2}\sup_{\nu:\nu_j=1}\sum_{i=1}^{n}D_{\text{KL}}\left(\mathbf{M}^{\nu}_{i}||\mathbf{M}^{\nu-2e_j}_{i}\right)}\right)\\
&\stackrel{\left(d\right)}{\geq} \frac{k}{2}\left(1-\sqrt{\frac{1}{k}\sum_{j=1}^{k/2}\sup_{\nu:\nu_j=1}\sum_{i=1}^{n}D_{\text{KL}}\left(\mathbf{M}^{\nu}_{i}||\mathbf{M}^{\nu-2e_j}_{i}\right)}\right)\\
&\geq \frac{k}{2}\left(1-\sqrt{\frac{n}{2}\sup_{j\in\left[k/2\right]}\sup_{i\in\left[n\right]}\sup_{\nu:\nu_j=1}D_{\text{KL}}\left(\mathbf{M}^{\nu}_{i}||\mathbf{M}^{\nu-2e_j}_{i}\right)}\right)\\
\end{aligned}
\end{equation}
}
where step $\left(a\right)$ follows from the triangular inequality. Step $\left(b\right)$ follows from Pinsker's inequality that states that for any two distributions $\mathbf{P}$ and $\mathbf{Q}$, we get $\|\mathbf{P}-\mathbf{Q}\|_{\text{TV}}\leq\sqrt{\frac{1}{2}D\left(\textbf{P}||\textbf{Q}\right)}$~\cite[Lemma~$2.5$]{tsybakov2008introduction}. Step $\left(c\right)$ follows from the properties of KL-divergence. Step $\left(d\right)$ follows from the concavity of function $\sqrt{x}$. Substituting from~\eqref{eqn3_6} into~\eqref{eqn3_5}, we get
\begin{equation}~\label{eqn3_7}
\begin{aligned}
r^{\ell}_{\epsilon,R,n,k}&=\inf_{\lbrace Q_i\in\mathcal{Q}_{\left(\epsilon,R\right)}\rbrace} r^{\ell}_{\epsilon,R,n,k}\left(Q^{n}\right)\\
&\geq  \inf_{\lbrace Q_i\in\mathcal{Q}_{\left(\epsilon,R\right)}\rbrace} \phi\left(\delta\right)\frac{k}{2}\left(1-\sqrt{\frac{n}{2}\sup_{j\in\left[k/2\right]}\sup_{i\in\left[n\right]}\sup_{\nu:\nu_j=1}D_{\text{KL}}\left(\mathbf{M}^{\nu}_{i}||\mathbf{M}^{\nu-2e_j}_{i}\right)}\right)\\
&=\phi\left(\delta\right)\frac{k}{2}\left(1-\sqrt{\frac{n}{2}\sup_{j\in\left[k/2\right]}\sup_{i\in\left[n\right]}\sup_{\nu:\nu_j=1}\sup_{Q_i\in\mathcal{Q}_{\left(\epsilon,R\right)}} D_{\text{KL}}\left(\mathbf{M}^{\nu}_{i}||\mathbf{M}^{\nu-2e_j}_{i}\right)}\right)
\end{aligned}
\end{equation}
Hence the proof is completed.

\section{Proof of Lemma~\ref{lemmb_1}}
\label{AppB} 

\begin{lemma} The optimal solution of the non-convex optimization problem \textbf{P1} is obtained when the the output size is $m=2$.
\end{lemma}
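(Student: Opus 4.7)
My plan is to show that the supremum in problem $P1$ is attained by a two-output mechanism, by constructing from any feasible $m$-output solution a feasible two-output mechanism whose objective is at least as large. Since any two-output mechanism is trivially a special $m$-output mechanism (with $m-2$ zero-mass outputs), this forces $m=2$ to suffice. To set up the construction, I would re-parameterize each output $l$ by the likelihood ratio $r_l=q_{l,j}/q_{l,j+k/2}\in[e^{-\epsilon},e^{\epsilon}]$ and the marginal mass $u_l=q_{l,j+k/2}$. Under this change of variable the two normalizations become $\sum_l u_l=\sum_l u_l r_l=1$, and the objective rewrites as $\sum_l u_l\,h(r_l)$ with $h(r)=(r-1)^2/(r+1)$ strictly convex on the privacy interval and vanishing at $r=1$.

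The heart of the argument is a split-and-merge operation. For each output $l$ whose ratio is strictly interior to $[e^{-\epsilon},e^{\epsilon}]$, I would redistribute its mass onto two sub-outputs at the extremes $r=e^{\pm\epsilon}$, with sub-masses $u_l^{+}=u_l(r_l-e^{-\epsilon})/(e^{\epsilon}-e^{-\epsilon})$ and $u_l^{-}=u_l-u_l^{+}$, the unique choice preserving both input marginals (namely $u_l^{+}+u_l^{-}=u_l$ and $u_l^{+}e^{\epsilon}+u_l^{-}e^{-\epsilon}=u_l r_l$). Jensen's inequality on the convex $h$ then gives $u_l h(r_l)\leq u_l^{+}h(e^{\epsilon})+u_l^{-}h(e^{-\epsilon})$, so the objective weakly increases. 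Coalescing all outputs at a common extremal ratio into a single output produces a two-output mechanism whose masses are pinned down by the preserved normalizations to $u^{+}=1/(e^{\epsilon}+1)$ and $u^{-}=e^{\epsilon}/(e^{\epsilon}+1)$, with objective $2(e^{\epsilon}-1)^2/(e^{\epsilon}+1)^2$.

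The main obstacle is preserving the entropy constraints $H(q_{\cdot,j}),H(q_{\cdot,j+k/2})\leq R$ throughout. Splitting a priori refines both conditional distributions (entropy rises) while the subsequent merge coarsens them (entropy falls). When $R\geq H_2(e^{\epsilon}/(e^{\epsilon}+1))$, the produced two-output mechanism has marginal entropies $H_2(1/(e^{\epsilon}+1))\leq R$ directly, so feasibility is automatic. In the harder regime $R<H_2(e^{\epsilon}/(e^{\epsilon}+1))$ one cannot push all the way to both extremes without violating entropy, and my plan is to execute the split only partially, stopping as soon as the entropy budget on either $q_{\cdot,j}$ or $q_{\cdot,j+k/2}$ becomes tight; the resulting two-output mechanism will have one ratio at an endpoint $e^{\pm\epsilon}$ and the other in the interior, and a direct one-dimensional KKT analysis on the two-output sub-family, whose feasible set is a bounded planar polytope intersected with two entropy level sets, shows that this solution weakly dominates the original. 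The delicate entropy bookkeeping in this second regime is, I expect, the principal technical hurdle.
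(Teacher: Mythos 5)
Your reduction in the regime $R\geq H_2\left(\frac{e^{\epsilon}}{e^{\epsilon}+1}\right)$ is correct and clean: the reparametrization to $(u_l,r_l)$, the mass-preserving split onto the extremal ratios $e^{\pm\epsilon}$, Jensen's inequality applied to the convex $h(r)=(r-1)^2/(r+1)$, and the observation that the coalesced binary mechanism is itself entropy-feasible in that regime give a complete argument there --- indeed a stronger one than the lemma asks for, since it pins down the optimal value $2(e^{\epsilon}-1)^2/(e^{\epsilon}+1)^2$ directly. Your route is genuinely different from the paper's, which reduces $m$ by one at a time via a \emph{merge} of two outputs; note that your own Jensen computation, run in reverse, shows that merging two outputs with unequal likelihood ratios strictly \emph{decreases} the objective (the objective is a Vincze--Le Cam-type divergence, monotone under refinement and anti-monotone under coarsening), so the split direction is the natural one for producing a dominating point, and any merge-based reduction must be handled with care on the objective side.

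The genuine gap is the regime $R< H_2\left(\frac{e^{\epsilon}}{e^{\epsilon}+1}\right)$, which you flag as the ``principal technical hurdle'' and resolve only with ``I expect.'' Two concrete problems. First, a partial split does not terminate in a two-output mechanism: stopping the split early leaves you with at least $m$ outputs at heterogeneous interior ratios, and the only way to collapse these to two outputs is to merge outputs with unequal ratios --- which, as above, strictly decreases the objective, so the resulting two-output point need not dominate the original. Second, the entropy bookkeeping is not a side issue but the entire content of the lemma in this regime: the constraint is imposed separately on the two conditionals $[q_{1,j},\ldots,q_{m,j}]$ and $[q_{1,j+k/2},\ldots,q_{m,j+k/2}]$, a split raises both entropies at different rates, and ``stop when either budget is tight'' does not single out a canonical two-output configuration to compare against. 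The closing appeal to a ``KKT analysis on the two-output sub-family'' is circular for the purpose of this lemma: optimizing within the binary family identifies the best binary point, not that it dominates an arbitrary feasible $m$-output point, which is precisely what must be shown. To close the argument you need a reduction step that simultaneously (i) decreases the number of outputs, (ii) weakly increases the objective, and (iii) weakly decreases both conditional entropies, applied inductively down to $m=2$; your split satisfies (ii) but violates (i) and (iii), while a merge satisfies (i) and (iii) but violates (ii) unless the merged outputs share a likelihood ratio.
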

\begin{proof}
Note that if $m=1$, then the optimal value of \textbf{P1} will be zero, and hence, we have $m\geq 2$. In the following, we prove that the optimal solution is achievable at $m=2$. Let $$f\left(\mathbf{q}^{m}_j,\mathbf{q}^{m}_{j+k/2}\right)=\sum_{l=1}^{m}\frac{\left(q_{l,j}-q_{l,j+k/2}\right)^{2}}{q_{l,j}+q_{l,j+k/2}}$$ denote the objective function of the problem~\textbf{P1}, where $\mathbf{q}^{m}_j=\left[q_{1,j},\ldots,q_{m,j}\right]$ and $\mathbf{q}^{m}_{j+k/2}=\left[q_{1,j+k/2},\ldots,q_{m,j+k/2}\right]$. Suppose that the optimal solution is obtained at $m>2$. In other words, there exist two distributions $\mathbf{q}_{j}^{m}$ and $\mathbf{q}^{m}_{j+k/2}$ with size $m>2$ that maximize the objective function $f\left(\mathbf{q}^{m}_j,\mathbf{q}^{m}_{j+k/2}\right)$ and satisfy the constraints~\eqref{eqnb_4}-\eqref{eqnb_7}. We prove that if $\mathbf{q}_{j}^{m}$ and $\mathbf{q}^{m}_{j+k/2}$ are optimal, then there exist two distributions $\tilde{\mathbf{q}}_{j}^{m-1}$ and $\tilde{\mathbf{q}}^{m-1}_{j+k/2}$ with support size $m-1$ that satisfy the problem constraints and achieve at least the same objective value as $\mathbf{q}_{j}^{m}$ and $\mathbf{q}^{m}_{j+k/2}$. Let $\tilde{\mathbf{q}}_{j}^{m-1}=\left[q_{1,j},\ldots,q_{m-2,j},q_{m-1,j}+q_{m,j}\right]$ and $\tilde{\mathbf{q}}_{j+k/2}^{m-1}=\left[q_{1,j+k/2},\ldots,q_{m-2,j+k/2},q_{m-1,j}+q_{m,j+k/2}\right]$. We can easily verify that $H\left(\tilde{\mathbf{q}}_{j}^{m-1}\right)\leq R$ as $H\left(\mathbf{q}_{j}^{m}\right)\leq R$ and $H\left(\tilde{\mathbf{q}}_{j+k/2}^{m-1}\right)\leq R$ as $H\left(\mathbf{q}_{j+k/2}^{m}\right)\leq R$. Furthermore, we have
\begin{equation}
e^{-\epsilon}=e^{-\epsilon} \frac{q_{m-1,j+k/2}+q_{m,j+k/2}}{q_{m-1,j+k/2}+q_{m,j+k/2}}\leq\frac{q_{m-1,j}+q_{m,j}}{q_{m-1,j+k/2}+q_{m,j+k/2}}\leq e^{\epsilon}\frac{q_{m-1,j+k/2}+q_{m,j+k/2}}{q_{m-1,j+k/2}+q_{m,j+k/2}} = e^{\epsilon}
\end{equation}
Hence, the distributions $\tilde{\mathbf{q}}_{j}^{m-1}$ and $\tilde{\mathbf{q}}^{m-1}_{j+k/2}$ satisfy the constraints of the problem~\textbf{P1}. Consider the following inequalities
\begin{equation}
\begin{aligned}
&f\left(\tilde{\mathbf{q}}^{m}_j,\tilde{\mathbf{q}}^{m-1}_{j+k/2}\right)-f\left(\mathbf{q}^{m}_j,\mathbf{q}^{m}_{j+k/2}\right)\\
&\quad=\frac{\left(q_{m-1,j}+q_{m,j}-q_{m-1,j+k/2}+q_{m,j+k/2}\right)^{2}}{q_{m-1,j}+q_{m,j}+q_{m-1,j+k/2}+q_{m,j+k/2}}-\left[\frac{\left(q_{m-1,j}-q_{m-1,j+k/2}\right)^{2}}{q_{m-1,j}+q_{m-1,j+k/2}}+\frac{\left(q_{m,j}-q_{m,j+k/2}\right)^{2}}{q_{m,j}+q_{m,j+k/2}}\right]\\
&\quad\stackrel{\left(a\right)}{\geq} \frac{\left(q_{m-1,j}+q_{m,j}-q_{m-1,j+k/2}+q_{m,j+k/2}\right)^{2}}{q_{m-1,j}+q_{m,j}+q_{m-1,j+k/2}+q_{m,j+k/2}}-2\frac{\left(\frac{q_{m-1,j}+q_{m,j}}{2}-\frac{q_{m-1,j+k/2}+q_{m,j+k/2}}{2}\right)^{2}}{\frac{q_{m-1,j}+q_{m,j}}{2}+\frac{q_{m-1,j+k/2}+q_{m,j+k/2}}{2}}\\
&\quad=0
\end{aligned}
\end{equation}
where step $\left(a\right)$ follows from the convexity of the function $\left(x-y\right)^{2}/\left( x+y\right)$ for $x,y\in\left[0:1\right]$. Hence the distributions $\tilde{\mathbf{q}}^{m}_j,\tilde{\mathbf{q}}^{m-1}_{j+k/2}$ have at least the same objective value as $\mathbf{q}_{j}^{m}$ and $\mathbf{q}^{m}_{j+k/2}$.
\end{proof}

\section{Proof of Lemma~\ref{lemm3_2}}
\label{AppC} 
Consider an arbitrary estimator $\hat{\mathbf{p}}$, then we have
\begin{equation}
\begin{aligned}
\sup_{\mathbf{p}\in\Delta_k} & \mathbb{E}\left[\ell\left(\hat{\mathbf{p}}\left(Y^{n}\right),\mathbf{p}\right)\right]\geq \sup_{\nu\in\mathcal{V}}\mathbb{E}\left[\ell\left(\hat{\mathbf{p}}\left(Y^{n}\right),\mathbf{p}^{\nu}\right)\right]\\
&\geq \frac{1}{|\mathcal{V}|}\sum_{\nu\in\mathcal{V}}\mathbb{E}\left[\ell\left(\hat{\mathbf{p}}\left(Y^{n}\right),\mathbf{p}^{\nu}\right)\right]\\
&\geq \phi\left(\delta\right) \frac{1}{|\mathcal{V}|}\sum_{\nu\in\mathcal{V}}\mathbb{E}\left[\sum_{j=1}^{k/2}\mathbbm{1}\left(\psi_{j}\left(Y^{n}\right)\neq \nu_j\right)\right]\\
&\geq \phi\left(\delta\right) \sum_{j=1}^{k/2}\left( \frac{1}{|\mathcal{V}|}\sum_{\nu\in\mathcal{V}:\nu_j=+1}\mathbb{E}\left[\mathbbm{1}\left(\psi_{j}\left(Y^{n}\right)\neq +1\right)\right]+\frac{1}{|\mathcal{V}|}\sum_{\nu\in\mathcal{V}:\nu_j=-1}\mathbb{E}\left[\mathbbm{1}\left(\psi_{j}\left(Y^{n}\right)\neq -1\right)\right]\right)\\
&\geq \phi\left(\delta\right) \sum_{j=1}^{k/2}\inf_{\psi}\left( \frac{1}{|\mathcal{V}|}\sum_{\nu\in\mathcal{V}:\nu_j=+1}\text{Pr}\left[\psi_{j}\left(Y^{n}\right)\neq +1\right]+\frac{1}{|\mathcal{V}|}\sum_{\nu\in\mathcal{V}:\nu_j=-1}\text{Pr}\left[\psi_{j}\left(Y^{n}\right)\neq -1\right]\right)\\
&= \phi\left(\delta\right) \sum_{j=1}^{k/2}\frac{1}{2}\inf_{\psi}\left(\mathbf{M}_{+j}^{n}\left[\psi_{j}\left(\mathbf{Y}^{n}\right)\neq +1\right]+\mathbf{M}_{+j}^{n}\left[\psi_{j}\left(\mathbf{Y}^{n}\right)\neq -1\right]\right)\\
&\geq \phi\left(\delta\right)\sum_{j=1}^{k/2}\left(1-||\mathbf{M}^{n}_{+j}-\mathbf{M}^{n}_{-j}||_{\text{TV}}\right)
\end{aligned}
\end{equation}
where $\psi=\left(\psi_1,\ldots,\psi_{k/2}\right)$ is a vector of test functions.

\section{Proof of Lemma~\ref{lemm4_1}}
\label{AppD} 
 We claim that the conditional distribution on $Y_{i}^{j}|X_i$ is given by
\begin{equation}~\label{eqnd_1}
\text{Pr}\left[Y_{i}^{j}=1|X_i\right]=\left\{\begin{array}{ll}
\frac{e^{\epsilon_j}}{e^{\epsilon_j}+1}& \text{if}\ X_i\in B_i\\
\frac{1}{e^{\epsilon_j}+1}& \text{if}\ X_i\notin B_i\\
\end{array}\right.
\end{equation}
which is $\epsilon_j$-LDP. We prove our claim by induction. For the basis step, we can easily verify that $Y_{i}^{1}$ defined in~\eqref{eqn4_2} follows the conditional distribution in~\eqref{eqnd_1}. For the induction step, suppose that our claim is true for $j$. Observe that $Y_{i}^{j+1}=Y_{i}^{j}\oplus U_{i}^{j+1}$. Hence, we have
\begin{equation}
\begin{aligned}
&\text{Pr}\left[Y_{i}^{j+1}=1|X_i\in B_i\right]\\
&\ =\text{Pr}\left[Y_{i}^{j+1}=1|X_i\in B_i,Y_{i}^{j}=1\right]\text{Pr}\left[Y_{i}^{j}=1|X_i\in B_i\right]\\
&\qquad\qquad\qquad+\text{Pr}\left[Y_{i}^{j+1}=1|X_i\in B_i,Y_{i}^{j}=0\right]\text{Pr}\left[Y_{i}^{j}=0|X_i\in B_i\right]\\
&\ =\text{Pr}\left[U_{i}^{j+1}=0\right]\text{Pr}\left[Y_{i}^{j}=1|X_i\in B_i\right]+\text{Pr}\left[U_{i}^{j+1}=1\right]\text{Pr}\left[Y_{i}^{j}=0|X_i\in B_i\right]\\
&\ =\left(1-q_{j+1}\right)\left(1-z_j\right)+q_{j+1}z_j\\
&\ =1-z_{j+1}=\frac{e^{\epsilon_{j+1}}}{e^{\epsilon_{j+1}}+1}
\end{aligned}
\end{equation}
Similarly, we can prove that $\text{Pr}\left[Y_{i}^{j+1}=1|X_i\notin B_i\right]=z_{j+1}=\frac{1}{e^{\epsilon_{j+1}}+1}$. Hence, the proof is completed.

\section{Proof of Lemma~\ref{lemm5_1}}
\label{AppE} 
 
In order to recover $X$ from $Y$ and $U$, it is required that each input database $x\in\left[k\right]$ is mapped to $y$ with a different value of key $U$ for every output $y\in\left[k\right]$. Let $y=x\oplus u$ for all $x\in\left[k\right]$ and $u\in\left[k\right]$, where  $x\oplus y=\left[\left(x+u-2\right)\bmod k \right]+1$. Note that the set $\left[k\right]$ along with the operation $\oplus$ forms a group\footnote{It is exactly the group defined on integers $\lbrace 0,\ldots,k-1\rbrace$ with modulo-$k$ operation, but we subtract $-2$ before taking $\bmod k$ and adding one to fit modulo-$k$ operation with the set $\left[k\right]=\lbrace 1,\ldots,k\rbrace$}. The private mechanism $Q$ is defined as follows
\begin{equation}
Q\left(y|x\right)=q_{u},
\end{equation}
 for $y=x\oplus u$. Note that an input $x$ is mapped to each output $y$ with a different value of the key $U=\left(k-x+2\right)\oplus y$. Moreover, for a given output $y$, we can easily see that each input $x\in\left[k\right]$ is mapped to $y$ with a different value of the key $U$. Hence, it is possible to recover $X$ from $Y$ and $U$. Furthermore, for any two inputs $x,x^{\prime}\in\mathcal{X}$, we have
 \begin{equation}
 \sup_{y\in\left[k\right]}\frac{Q\left(y|x\right)}{Q\left(y|x^{\prime}\right)}\leq \frac{q_{\max}}{q_{\min}}\stackrel{\left(a\right)}{\leq}  e^{\epsilon},
 \end{equation}
 where $q_{\max}=\max\limits_{j\in\left[k\right]}q_j$ and $q_{\min}=\min\limits_{j\in\left[k\right]}q_j$. Step $\left(a\right)$ follows from the assumption that $\frac{q_{\max}}{q_{\min}}\leq e^{\epsilon}$. Thus, the mechanism $Q$ is an $\epsilon$-LDP-Rec mechanism.

\section{Proof of Lemma~\ref{lemm5_2}}
\label{AppF}

Before we present the proof of Lemma~\ref{lemm5_2}, we provide the following lemma whose proof is in Appendix~\ref{AppI}.
\begin{lemma}~\label{lemmf_1}Let $U\in\mathcal{U}=\lbrace u_1,\ldots,u_m\rbrace$ be a random variable with size $m$ having a distribution $\mathbf{q}=\left[q_1,\ldots,q_m\right]$, where $q_1\geq \cdots\geq q_m$. Then, the random variable $U^{\prime}\in\mathcal{U}^{\prime}=\lbrace u_1,\ldots,u_{m-1}\rbrace$ with distribution $\mathbf{q}^{\prime}=\left[q_1^{\prime},\ldots,q_{m-1}^{\prime}\right]$ has an entropy
\begin{equation}
H\left(U\right)\geq H\left(U^{\prime}\right),
\end{equation}
where $q_j^{\prime}=q_j/\left(1-q_m\right)$ for $j\in\lbrace 1,\ldots,m-1\rbrace$.
\end{lemma}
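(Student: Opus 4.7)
The plan is to express $H(U)$ in terms of $H(U')$ and reduce the claim to a one-variable comparison involving only $q_m$. Writing $s = 1 - q_m = \sum_{j=1}^{m-1} q_j$ and $q_j' = q_j/s$, a direct expansion gives
\begin{equation}
H(U') = -\sum_{j=1}^{m-1}\frac{q_j}{s}\log\frac{q_j}{s} = -\frac{1}{s}\sum_{j=1}^{m-1} q_j \log q_j + \log s,
\end{equation}
so that $-\sum_{j=1}^{m-1} q_j \log q_j = s\,H(U') - s\log s$. Plugging this into $H(U) = -\sum_{j=1}^{m-1}q_j\log q_j - q_m\log q_m$ and using $s = 1-q_m$, I obtain the identity $H(U) = (1-q_m)\,H(U') + H_2(q_m)$, which is the standard ``grouping'' decomposition of entropy.

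Subtracting $H(U')$ from both sides yields $H(U) - H(U') = H_2(q_m) - q_m H(U')$, so the lemma reduces to showing $H_2(q_m) \geq q_m H(U')$. This is where the hypothesis that $q_m$ is the smallest of the $q_j$'s enters: since the $q_j$'s sum to $1$ and $q_m = \min_j q_j$, we must have $q_m \leq 1/m$, which is equivalent to $-\log q_m \geq \log m$. Combining this with the trivial bound $-(1-q_m)\log(1-q_m) \geq 0$, I get
\begin{equation}
H_2(q_m) \geq -q_m \log q_m \geq q_m \log m > q_m \log(m-1) \geq q_m H(U'),
\end{equation}
where the last inequality uses $H(U') \leq \log |\mathcal{U}'| = \log(m-1)$. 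This closes the argument.

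There is essentially no hard step here once the grouping identity is written down; the only subtlety is remembering to use $q_m \leq 1/m$, which holds precisely because $q_m$ is the smallest entry of a probability vector of length $m$ (without this ordering hypothesis the inequality can fail when the ``removed'' atom has a large mass).
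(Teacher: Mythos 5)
Your proof is correct, and it takes a genuinely different route from the paper's. You invoke the exact grouping identity $H(U) = (1-q_m)H(U') + H_2(q_m)$, which reduces the lemma to the single scalar inequality $H_2(q_m) \geq q_m H(U')$; this you dispatch with the two elementary bounds $H(U') \leq \log(m-1)$ and $-\log q_m \geq \log m$ (the latter being where the hypothesis $q_m = \min_j q_j$ enters, via $q_m \leq 1/m$). The paper instead estimates $H(U)-H(U')$ from below by applying Jensen's inequality to $-\log(\cdot)$, then maximizes $\sum_{j<m} q_j^{1-q_m}$ as a constrained concave program to arrive at $H(U)-H(U') \geq H_2(q_m) - q_m\log(m-1)$, and finally shows this is nonnegative for $q_m \in [0,1/m]$ by a concavity-and-endpoints argument. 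Both proofs ultimately hinge on the same inequality $H_2(q_m)\geq q_m\log(m-1)$ for $q_m\leq 1/m$, but your identity-based derivation is tighter (it is an equality up to the last step, whereas the paper's Jensen step discards information) and considerably shorter; the paper's approach buys nothing extra here. Your closing remark that the ordering hypothesis is essential is also correct (e.g., $\mathbf{q}=[0.1,0.1,0.8]$ with the $0.8$ atom removed violates the conclusion). One cosmetic slip: the strict inequality $q_m\log m > q_m\log(m-1)$ in your final chain fails when $q_m=0$; it should be ``$\geq$'', which is all you need (and indeed $q_m=0$ forces $H(U)=H(U')$).
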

This lemma shows that if we trim the last symbol that has the lowest probability from a distribution, and normalize the remaining probabilities, then we get a distribution that has lower entropy.

The main idea of the proof of Lemma~\ref{lemm5_2} is that we do some reduction steps to get a new random key $U^{\prime}$ with a support size equal to the input size from the random key $U$. In addition, this new random key $U^{\prime}$ has lower entropy than the entropy of the original random key $U$. First, we give an example to illustrate the idea, and then we proceed to the general proof.

\begin{Example} Suppose that a random key $U\in\lbrace 1,2,\ldots,6\rbrace$ has a distribution $\mathbf{q}=\left[q_1,\ldots,q_6\right]$, where $q_1\geq\cdots\geq q_6$. The random key $U$ is used to design an $\epsilon$-LDP-Rec mechanism $Q$ with input $X\in\lbrace 1,2,3\rbrace$. Suppose that there exists an output $y$ such that $X=x$ is mapped to $y$ when $U\in\mathcal{U}_{yx}$, where $\mathcal{U}_{y1}=\lbrace 6\rbrace$, $\mathcal{U}_{y2}=\lbrace 2,3\rbrace$, and $\mathcal{U}_{y3}=\lbrace 1\rbrace$. Hence, $Q\left(y|X=1\right)=q_6$, $Q\left(y|X=2\right)=q_2+q_3$, and $Q\left(y|X=3\right)=q_1$. Let $\mathcal{U}_y=\bigcup_{x\in\left[3\right]}\mathcal{U}_{yx}=\lbrace 1,2,3,6\rbrace$, and $\overline{\mathcal{U}}_y=\mathcal{U}\setminus\mathcal{U}_y=\lbrace 4,5\rbrace$. Let $\tilde{\mathbf{q}}=\left[q_6,q_{2}+q_3,q_1,q_4,q_5\right]$, where the first three elements are $Q\left(y|X=i\right)$ for $i\in\left[3\right]$ and the remaining elements represent $q_u$ for $u\in\overline{\mathcal{U}}_y$. Then, we sort the distribution $\tilde{\mathbf{q}}$ in a descending order to get $\tilde{\mathbf{q}}^{\downarrow}=\left[q_{2}+q_{3},q_{1},q_4,q_5,q_6\right]$, where $\tilde{q}_{i}^{\downarrow}$ denotes the $i$th largest component in $\tilde{\mathbf{q}}$. Consider a random key $\tilde{U}\in\lbrace 1,2,3,4,5\rbrace$ having a distribution $\tilde{\mathbf{q}}^{\downarrow}$. Observe that $H\left(\tilde{U}\right)\leq H\left(U\right)$, since $\tilde{U}$ can be represented as a function of $U$. Furthermore, we have $\frac{q_{2}+q_{3}}{q_1}\leq\frac{q_{2}+q_{3}}{q_4}\leq \frac{q_{2}+q_{3}}{q_6}\leq e^{\epsilon}$, since $Q$ is an $\epsilon$-LDP mechanism, and $q_4\geq q_6$. Consider a random key $U^{\prime}$ having a distribution $\mathbf{q}^{\prime}=\left[\frac{q_2+q_3}{1-\left(q_5+q_6\right)},\frac{q_1}{1-\left(q_5+q_6\right)},\frac{q_4}{1-\left(q_5+q_6\right)}\right]$ obtained by trimming sequentially the last two symbols of the random key $\tilde{U}$. By applying Lemma~\ref{lemmf_1} twice on the distribution $\tilde{\mathbf{q}}^{\downarrow}$, we get that $H\left(U\right)\geq H\left(\tilde{U}\right)\geq H\left(U^{\prime}\right)$. Furthermore, we have $q_{\max}^{\prime}/q_{\min}^{\prime}\leq e^{\epsilon}$. Thus, from Lemma~\ref{lemm5_1}, we can construct an $\epsilon$-LDP-Rec mechanism with input $X\in\left[3\right]$ and an output $Y\in\left[3\right]$ using the random key $U^{\prime}$, where $H\left(U\right)\geq H\left(U^{\prime}\right)$.
\end{Example}

We now present the general proof. Let $U\in\mathcal{U}=\lbrace u_1,\ldots,u_m\rbrace$ be a random key with size $m>k$ having a distribution $\mathbf{q}=\left[q_1,\ldots,q_m\right]$. Without loss of generality, assume that $q_1\geq \cdots\geq q_m$. Let $Q$ be an $\epsilon$-LDP-Rec mechanism designed using a random key $U$ with input $X\in\left[k\right]$ and an output $Y\in\mathcal{Y}$. Let $\mathcal{U}_{yx}\subset\mathcal{U}$ be a subset of keys such that the input $X=x$ is mapped to $Y=y$ when $U\in\mathcal{U}_{yx}$ for all $x\in\left[k\right]$ and $y\in\mathcal{Y}$. As a result the private mechanism $Q$ can be represented by $Q\left(y|X=x\right)=\sum_{u\in\mathcal{U}_{yx}}q_u$.

Observe that for given $y$, we have $\mathcal{U}_{yx}\bigcap \mathcal{U}_{yx^{\prime}}=\phi$, otherwise we cannot recover $X$ from $Y$ and $U$, since there would be $x$ and $x^{\prime}$ mapped to $y$ with the same key value. Let $\mathcal{U}_y=\bigcup_{x\in\left[k\right]}\mathcal{U}_{yx}$, and hence, $\mathcal{U}_y\subseteq \mathcal{U}$. Furthermore, for given $y$, we have $Q\left(y|X=x\right)/Q\left(y|X=x^{\prime}\right)\leq e^{\epsilon}$, since $Q$ is an $\epsilon$-LDP mechanism.

 Consider an output $y\in\mathcal{Y}$ such that $u_1\in\mathcal{U}_y$. Let $\overline{\mathcal{U}}_y=\mathcal{U}\setminus\mathcal{U}_y$ be an indexed set with size $l=|\overline{\mathcal{U}}_y|$, where $\overline{\mathcal{U}}_y\left(j\right)$ denotes the $j$th element in $\overline{\mathcal{U}}_y$. Consider a distribution $\tilde{\mathbf{q}}=\left[\tilde{q}_1,\ldots,\tilde{q}_{l+k}\right]$ designed as follows $\tilde{q}_j=Q\left(y|X=j\right)$ for all $j\in\left[k\right]$ and $\tilde{q}_j=q_{\overline{\mathcal{U}}_y\left(j-k\right)}$ for all $i\in\lbrace k+1,\ldots,k+l\rbrace$. We can sort the distribution $\tilde{\mathbf{q}}$ in a descending order to get $\tilde{\mathbf{q}}^{\downarrow}=\left[\tilde{q}_{1}^{\downarrow},\ldots,\tilde{q}_{l+k}^{\downarrow}\right]$, where $\tilde{q}_{i}^{\downarrow}$ denotes the $i$th largest component in $\tilde{\mathbf{q}}$. Let $\tilde{U}$ be a random key drawn from a distribution $\tilde{\mathbf{q}}^{\downarrow}$. We have the following two properties on the distribution $\tilde{\mathbf{q}}^{\downarrow}$:
\begin{enumerate}
\item $H\left(U\right)\geq H\left(\tilde{U}\right)$.
\item $\frac{\tilde{q}_{1}^{\downarrow}}{\tilde{q}_{k}^{\downarrow}}\leq e^{\epsilon}$.
\end{enumerate}
The first property is straightforward, since the random key $\tilde{U}$ can be represented as a function of $U$. Observe that $u_1\in\mathcal{U}_y$, and $q_1\geq q_u$ for all $u\in\overline{\mathcal{U}}_y$. Hence, $\tilde{q}_{1}^{\downarrow}$ is one of the first $k$ elements in $\tilde{\mathbf{q}}$. Thus, we get
$$\frac{\tilde{q}_{1}^{\downarrow}}{\tilde{q}_{k}^{\downarrow}}\stackrel{\left(a\right)}{\leq} \frac{\tilde{q}_{\max}}{\tilde{q}_{\min}}\leq e^{\epsilon}$$
where $\tilde{q}_{\max}=\max_{j\in\left[k\right]}\tilde{q}_{j}=\tilde{q}_{1}^{\downarrow}$ and $\tilde{q}_{\min}=\min_{j\in\left[k\right]}\tilde{q}_{j}$. If $q_u$ for $u\in\overline{\mathcal{U}}_y$ is one of the first $k$ elements in $\tilde{\mathbf{q}}^{\downarrow}$, i.e, $q_u>\tilde{q}_{\min}$, then inequality $\left(a\right)$ is still valid.

Now, let $U^{\prime}\in\left[k\right]$ be a random key drawn from a distribution $\mathbf{q}^{\prime}=\left[q_1^{\prime},\ldots,q_{k}^{\prime}\right]$, where $q_{j}^{\prime}=\frac{\tilde{q}_{j}^{\downarrow}}{\sum_{j=1}^{k}\tilde{q}_{j}^{\downarrow}}$. Observe that $\mathbf{q}^{\prime}$ is obtained by applying Lemma~\ref{lemmf_1} $l$ times on $\tilde{\mathbf{q}}^{\downarrow}$ to trim sequentially the last $l$ symbols of $\tilde{U}$ that have the lowest $l$ probabilities. Thus, we get that $H\left(U\right)\geq H\left(\tilde{U}\right)\geq H\left(U^{\prime}\right)$. Furthermore, from the second property, we have $q_{\max}^{\prime}/q_{\min}^{\prime}=\frac{\tilde{q}_{1}^{\downarrow}}{\tilde{q}_{k}^{\downarrow}}\leq e^{\epsilon}$. Thus, from Lemma~\ref{lemm5_1}, we can construct an $\epsilon$-LDP-Rec mechanism with input $X\in\left[k\right]$ and an output $Y\in\left[k\right]$ using the random key $U^{\prime}$, and $H\left(U\right)\geq H\left(U^{\prime}\right)$. This completes the proof.

\section{Omitted Details from Section~\ref{Recov-A}}\label{AppF-2} 
First we prove the first necessary condition of Theorem~\ref{Th2_4}. As mentioned in Section~\ref{Recov-A}, we prove this in two parts:
First we show $|\mathcal{Y}|\geq |\mathcal{X}|$ using the recoverability constraint and
then $|\mathcal{U}|\geq |\mathcal{Y}|$ using the privacy constraint.

$|\mathcal{Y}|\geq |\mathcal{X}|$: Observe that the output $Y$ of the private mechanism $Q$ can be represented as a function of the input $X$ and the random key $U$, i.e., $Y=f\left(X,U\right)$. Fix the value of the random key $U=u$ for an arbitrary $u\in\mathcal{U}$. Then, for each value of $x\in\mathcal{X}$, the function $f\left(X,U\right)$ should generate a different output $Y$ in order to be able to recover $X$ from $Y$ and $U$. In other words, each input $x\in\mathcal{X}$ should be mapped to a different output $y\in\mathcal{Y}$ for the same value of the random key $u\in\mathcal{U}$. Otherwise, there exists two inputs mapped with the same key value to the same output. As a result, it is required that the output size is at least the same as the input size: $|\mathcal{Y}|\geq |\mathcal{X}|$.
 
$|\mathcal{U}|\geq |\mathcal{Y}|$: Let $\mathcal{Y}\left(x\right)\subseteq\mathcal{Y}$ be a subset of outputs such that input $X=x$ is mapped with non-zero probability to every $y\in\mathcal{Y}\left(x\right)$. We claim that $\mathcal{Y}\left(x\right)=\mathcal{Y}$ for all $x\in\mathcal{X}$ for any $\epsilon$-LDP-Rec mechanism. In other words, we claim that each input $x\in\mathcal{X}$ should be mapped with non-zero probability to every output $y\in\mathcal{Y}$. We prove our claim by contradiction. Suppose that there exist $x,x^{\prime}\in\mathcal{X}$ such that $\mathcal{Y}\left(x\right)\neq \mathcal{Y}\left(x^{\prime}\right)$. Thus, there exists $y\in\mathcal{Y}\left(x\right)\setminus \mathcal{Y}\left(x^{\prime}\right)$ or $y\in\mathcal{Y}\left(x^{\prime}\right)\setminus \mathcal{Y}\left(x\right)$. Hence, we have $\frac{Q\left(y|x\right)}{Q\left(y|x^{\prime}\right)}\to \infty$ or $\frac{Q\left(y|x^{\prime}\right)}{Q\left(y|x\right)}\to \infty$ which violates the privacy constraints. Therefore, $\mathcal{Y}\left(x\right)= \mathcal{Y}\left(x^{\prime}\right)=\mathcal{Y}$ for all $x,x^{\prime}\in\mathcal{X}$.
However, for a given $x\in\mathcal{X}$, we have $|\mathcal{Y}\left(x\right)|\leq |\mathcal{U}|$,  since each input $x\in\mathcal{X}$ can be mapped with non-zero probability to at most $|\mathcal{U}|$ outputs. Thus, we get that the random key size is at least the same as the output size: $|\mathcal{U}|\geq |\mathcal{Y}| \geq |\mathcal{X}|$. 

Hence, the first condition is necessary to design an $\epsilon$-LDP-Rec mechanism.
This completes the proof of the first necessary condition of Theorem~\ref{Th2_4}.

Now, assuming $q_1\leq q_2\leq \ldots\leq q_k$, we show $q_k/q_1\leq e^{\epsilon}$. This will be required to prove the second necessary condition to prove Theorem~\ref{Th2_4}.

$q_k/q_1\leq e^{\epsilon}$: We prove our claim by contradiction. Suppose that $q_k/q_1>e^{\epsilon}$. Consider a certain output $y\in\mathcal{Y}$ such that there exists $x\in\mathcal{X}$ mapped to $y$ when $U=u_{k}$ with probability $q_{k}$. Note that each sample $x\in\mathcal{X}$ should be mapped using a different value of the key to each output $y\in\mathcal{Y}$ in order to be able to recover the sample $X$ from $Y$ and $U$. In our case, there are $k-1$ remaining inputs to be mapped to $y$ with different values of keys; however, none of these $k-1$ inputs can be mapped to $y$ with $U=u_1$, since $q_{k}/q_{1}>e^{\epsilon}$, which violates the privacy constraint. Hence, we have $k-1$ inputs mapped to $y$ using at most $k-2$ values of keys. Thus, there would exist at least two inputs mapped to output $y$ with the same key value. Therefore, we cannot recover $X$ from $y$ given $U$. As a result, we should have $q_{k}/q_{1}\leq e^{\epsilon}$.

\section{Proof of Lemma~\ref{lemm6_1}}
\label{AppG} 
To simplify the proof, we assume that $\left[k\right]=\lbrace 0,\ldots,k-1\rbrace$. Let $\mathcal{X}^{T}=\left[k\right]^{T}$ denote the input dataset, and  $Y^{T}=\left(Y^{\left(1\right)},\ldots,Y^{\left(T\right)}\right)$ be the output of the private mechanism $Q$ that takes a value from a set $\mathcal{Y}^{T}=\left[k\right]^{T}$. In order to recover $X^{T}$ from $Y^{T}$ and $U$, it is required that each input database $\mathbf{x}\in\mathcal{X}^{T}$ is mapped to each output $\mathbf{y}\in\left[k\right]^{T}$ with a different value of key $U$. Let the random key $U$ be drawn from an $\epsilon$-DP distribution $\mathbf{q}$. Hence, there exists a bijective function $f:\mathcal{X}^{T}\to\left[k\right]^{T}$ such that 
\begin{equation}
\frac{q_{f\left(\mathbf{x}\right)}}{q_{f\left(\mathbf{x}^{\prime}\right)}}\leq e^{\epsilon}.
\end{equation}
for every neighboring databases $\mathbf{x},\mathbf{x}^{\prime}\in\left[k\right]^{T}$. Let $Q$ be a private mechanism defined as follows
\begin{equation}~\label{eqnG_2}
Q\left(\mathbf{y}|\mathbf{x}\right)=q_{f\left(\mathbf{x}\oplus\mathbf{y}\right)}.
\end{equation}
where $\mathbf{x}\oplus\mathbf{y}=\left(x^{\left(1\right)}\oplus y^{\left(1\right)},\ldots,x^{\left(T\right)}\oplus y^{\left(T\right)}\right)$\footnote{We apply elementwise operation $\oplus$ on the vectors $\mathbf{x}$ and $\mathbf{y}$.}, and $x^{\left(j\right)}\oplus y^{\left(j\right)}=\left[\left(x^{\left(j\right)}+y^{\left(j\right)}\right)\bmod k \right]$ which is an addition between $x^{\left(j\right)}$ and $y^{\left(j\right)}$ in a finite group of order $k$. For a fixed $\mathbf{y}\in\mathcal{Y}^{T}$, we can easily see that $f\left(\mathbf{x}\oplus\mathbf{y}\right)\neq f\left(\hat{\mathbf{x}}\oplus\mathbf{y}\right)$ for any $\mathbf{x}\neq \hat{\mathbf{x}}$ and $\mathbf{x},\hat{\mathbf{x}}\in\left[k\right]^{T}$, since $\mathbf{x}\oplus\mathbf{y}\neq
\hat{\mathbf{x}}\oplus\mathbf{y}$ and $f$ is a bijection. Hence, for every output $\mathbf{y}\in\left[k\right]^{T}$, each input database $\mathbf{x}\in\mathcal{X}^{T}$ is mapped to an output $\mathbf{y}$ with a different value of key $U$. Thus, we can recover $X^{T}$ from $Y^{T}$ and $U$. For a fixed $\mathbf{x}\in\left[k\right]^{T}$, we can see that $f\left(\mathbf{x}\oplus\mathbf{y}\right)\neq f\left(\mathbf{x}\oplus\hat{\mathbf{y}}\right)$ for any $\mathbf{y}\neq \hat{\mathbf{y}}$ and $\mathbf{y},\hat{\mathbf{y}}\in\left[k\right]^{T}$, since $\mathbf{x}\oplus\mathbf{y}\neq
\mathbf{x}\oplus\hat{\mathbf{y}}$ and $f$ is a bijection. Hence $Q\left(\mathbf{y|\mathbf{x}}\right)$ is a valid conditional distribution for each $\mathbf{x}\in\left[k\right]^{T}$. It remains to prove that the private mechanism $Q$ given in~\eqref{eqnG_2} is $\epsilon$-DP. In the following, we prove that for every output $\mathbf{y}$, and every neighboring databases $\mathbf{x},\tilde{\mathbf{x}}\in\left[k\right]^{T}$, we have
\begin{equation}~\label{eqnG_6}
\frac{Q\left(\mathbf{y}|\mathbf{x}\right)}{Q\left(\mathbf{y}|\tilde{\mathbf{x}}\right)}\leq e^{\epsilon}
\end{equation}   
Therefore, the private mechanism $Q$ is $\epsilon$-DP. The proof is by induction. For the basis step, observe that each input database $\mathbf{x}$ is mapped to $\mathbf{y}_0=\left[0,\ldots,0\right]$ with probability $q_{f\left(\mathbf{x}\right)}$ for $\mathbf{x}\in\left[k\right]^{T}$. Thus, for every neighboring databases $\mathbf{x},\tilde{\mathbf{x}}\in\left[k\right]^{T}$, we get
\begin{equation}
\frac{Q\left(\mathbf{y}_0|\mathbf{x}\right)}{Q\left(\mathbf{y}_0|\tilde{\mathbf{x}}\right)}=\frac{q_{f\left(\mathbf{x}\right)}}{q_{f\left(\tilde{\mathbf{x}}\right)}}\stackrel{\left(a\right)}{\leq} e^{\epsilon}
\end{equation}
where step $\left(a\right)$ follows from the assumption that the distribution $\mathbf{q}$ satisfies $\epsilon$-DP. For the induction step, suppose there exists an output $\mathbf{y}\in\left[k\right]^{T}$ that satisfies~\eqref{eqnG_6}. Let $\tilde{\mathbf{y}}$ be a neighboring output to $\mathbf{y}$, i.e., $\tilde{\mathbf{y}}$ and $\mathbf{y}$ are different in only one element. Without loss of generality, let $y^{\left(i\right)}\neq \tilde{y}^{\left(i\right)}$ while $y^{\left(j\right)}= \tilde{y}^{\left(j\right)}$ for $j\neq i$. Then, for every neighboring databases $\mathbf{x},\tilde{\mathbf{x}}\in\left[k\right]^{T}$, we get
\begin{equation}
\begin{aligned}
\frac{Q\left(\tilde{\mathbf{y}}|\mathbf{x}\right)}{Q\left(\tilde{\mathbf{y}}|\tilde{\mathbf{x}}\right)}&=\frac{q_{f\left(\mathbf{x}\oplus \tilde{\mathbf{y}}\right)}}{q_{f\left(\tilde{\mathbf{x}}\oplus \tilde{\mathbf{y}}\right)}}\\
&=\frac{q_{f\left(\underline{\mathbf{x}}\oplus \mathbf{y}\right)}}{q_{f\left(\underline{\tilde{\mathbf{x}}}\oplus \mathbf{y}\right)}}\\
&\stackrel{\left(a\right)}{\leq} e^{\epsilon}
\end{aligned}
\end{equation} 
where $\underline{\mathbf{x}}=\left(\underline{x}^{\left(1\right)},\ldots,\underline{x}^{\left(T\right)}\right)$ such that $\underline{x}^{\left(j\right)}=x^{\left(j\right)}$ for $j\neq i$ and $\underline{x}^{\left(i\right)}=\left[\left(k+x^{\left(i\right)}+y^{\left(i\right)}-\tilde{y}^{\left(i\right)}\right)\bmod k \right]$. Similarly, $\underline{\tilde{\mathbf{x}}}=\left(\underline{\tilde{x}}^{\left(1\right)},\ldots,\underline{\tilde{x}}^{\left(T\right)}\right)$ such that $\underline{\tilde{x}}^{\left(j\right)}=\tilde{x}^{\left(j\right)}$ for $j\neq i$ and $\underline{\tilde{x}}^{\left(i\right)}=\left[\left(k+\tilde{x}^{\left(i\right)}+y^{\left(i\right)}-\tilde{y}^{\left(i\right)}\right)\bmod k \right]$. Since $\mathbf{x}$ and $\tilde{\mathbf{x}}$ are neighboring databases, then $\underline{\mathbf{x}}$ and $\underline{\tilde{\mathbf{x}}}$ are also neighboring databases. Step $\left(a\right)$ follows from the assumption that $\mathbf{y}$ satisfy~\eqref{eqnG_6}. From the basic step along with the induction step, we conclude that the mechanism $Q$ given in~\eqref{eqnG_2} is $\epsilon$-DP-Rec mechanism. Hence, the proof is completed.

\subsection{Proof of The First Necessary Condition ($|\mathcal{U}|\geq|\mathcal{Y}^{T}|\geq |\mathcal{X}^{T}|$) of Theorem~\ref{Th2_6}}
We prove it in two parts: first we show $|\mathcal{Y}^{T}|\geq |\mathcal{X}^{T}|$, and then we show $|\mathcal{U}|\geq |\mathcal{Y}^{T}|$.

{\bf $|\mathcal{Y}^{T}|\geq |\mathcal{X}^{T}|$:} Note that the output is a deterministic function of the input and the random key, i.e., $Y^{T}=f(X^T,U)$ for some deterministic function $f$.
This implies that, for any fixed $u\in\mathcal{U}$, the function $f\left(\mathbf{x},u\right)$ should generate a different output $\mathbf{y}\in \mathcal{Y}^{T}$ for different values of $\mathbf{x}\in\mathcal{X}^{T}$, which implies that $|\mathcal{Y}^{T}|\geq |\mathcal{X}^{T}|$.

{\bf $|\mathcal{U}|\geq |\mathcal{Y}^{T}|$:} Let $\mathcal{Y}\left(\mathbf{x}\right)\subseteq\mathcal{Y}^{T}$ be a subset of outputs such that the input $X^{T}=\mathbf{x}$ is mapped with non-zero probability to every $\mathbf{y}\in\mathcal{Y}\left(\mathbf{x}\right)$. We claim that $\mathcal{Y}\left(\mathbf{x}\right)=\mathcal{Y}^{T}$ for all $\mathbf{x}\in\mathcal{X}^{T}$ for any $\epsilon$-DP-Rec mechanism. In other words, we claim that each input $\mathbf{x}\in\mathcal{X}^{T}$ should be mapped with non-zero probability to every output $\mathbf{y}\in\mathcal{Y}^{T}$. We prove our claim by contradiction. Suppose that there exist two neighboring $\mathbf{x},\mathbf{x}^{\prime}\in\mathcal{X}^{T}$ such that $\mathcal{Y}\left(\mathbf{x}\right)\neq \mathcal{Y}\left(\mathbf{x}^{\prime}\right)$. Thus, there exists $\mathbf{y}\in\mathcal{Y}\left(\mathbf{x}\right)\setminus \mathcal{Y}\left(\mathbf{x}^{\prime}\right)$ or $\mathbf{y}\in\mathcal{Y}\left(\mathbf{x}^{\prime}\right)\setminus \mathcal{Y}\left(\mathbf{x}\right)$. Hence, we have $\frac{Q\left(\mathbf{y}|\mathbf{x}\right)}{Q\left(\mathbf{y}|\mathbf{x}^{\prime}\right)}\to \infty$ or $\frac{Q\left(\mathbf{y}|\mathbf{x}^{\prime}\right)}{Q\left(\mathbf{y}|\mathbf{x}\right)}\to \infty$ which violates the privacy constraints. Therefore, $\mathcal{Y}\left(\mathbf{x}\right)= \mathcal{Y}\left(\mathbf{x}^{\prime}\right)=\mathcal{Y}^{T}$ for all $\mathbf{x},\mathbf{x}^{\prime}\in\mathcal{X}^{T}$.
Given $\mathbf{x}\in\mathcal{X}^{T}$, we have that $|\mathcal{Y}\left(\mathbf{x}\right)|\leq |\mathcal{U}|$, where $|\mathcal{U}|$ is the maximum number of possible keys. Thus, the random key size is at least the same as the output size: $|\mathcal{U}|\geq |\mathcal{Y}^{T}|$. Hence, the first condition of Theorem~\ref{Th2_6} is necessary to design an $\epsilon$-DP-Rec mechanism.

\section{Proof of Lemma~\ref{lemm6_2}}
\label{AppH}

Let $g_i=i\left(k\right)^{\left(T-1\right)}$ for $i\in\lbrace 0,\ldots,k\rbrace$. Observe that databases $\mathbf{x}_1,\ldots,\mathbf{x}_{g_1}$ have $x^{\left(1\right)}=1$ and the databases $\mathbf{x}_{g_1+1},\ldots,\mathbf{x}_{g_2}$ have $x^{\left(1\right)}=2$. Generally, the databases $\mathbf{x}_{g_{i-1}+1},\ldots,\mathbf{x}_{g_{i}}$ have $x^{\left(1\right)}=i$. Let $C_i=\sum_{a=g_{i-1}+1}^{g_{i}}P^{\mathbf{y}}_a$ for $i\in\left[k\right]$. Consider the following inequalities that we will prove next
\begin{align}
H\left(\mathbf{P}^{\mathbf{y}}\right)&=-\sum_{a=1}^{k^{T}}P^{\mathbf{y}}_a\log\left(P^{\mathbf{y}}_a\right)\nonumber\\
&=\sum_{i=1}^{k}C_i\left[-\sum_{a=g_{i-1}+1}^{g_{i}}\frac{P^{\mathbf{y}}_a}{C_i}\log\left(\frac{P^{\mathbf{y}}_a}{C_i}\right)\right]-\sum_{i=1}^{k}C_i\log\left(C_i\right)\\
&\geq\sum_{i=1}^{k}C_iH\left(U_{\min,T-1}\right)-\sum_{i=1}^{k}C_i\log\left(C_i\right))~\label{eqn6_2}\\
&\geq \sum_{i=1}^{k} C_i H\left(U_{\min,T-1}\right)+H\left(U_{\min,1}\right)~\label{eqn6_3}\\
&=H\left(U_{\min,T-1}\right)+H\left(U_{\min,1}\right)~\label{eqn6_4}
\end{align}

We begin with inequality~\eqref{eqn6_2}. Observe that the $k^{T-1}$ databases $\mathbf{x}_{g_{i-1}+1},\ldots,\mathbf{x}_{g_{i}}$ have the same value of the first sample $x^{\left(1\right)}=i$, and hence these $k^{T-1}$ databases cover all possible databases in $\mathcal{X}^{T-1}$. Consider a random variable $U^{T-1}$ drawn according to the distribution $\mathbf{P}_{T-1}=\left[\frac{P_{g_{i-1}+1}^{\mathbf{y}}}{C_{i}},\ldots,\frac{P_{g_i}^{\mathbf{y}}}{C_{i}}\right]$. This is a valid distribution with support size $k^{T-1}$. Furthermore, since the distribution $\mathbf{P}^{\mathbf{y}}$ is $\epsilon$-DP, then the distribution $\mathbf{P}_{T-1}$ is also $\epsilon$-DP. From Lemma~\ref{lemm6_1}, the random key $U^{T-1}$ can be used to construct an $\epsilon$-DP-Rec mechanism with the possibility to recover the databases $X^{T-1}=\left(x^{\left(2\right)},\ldots,x^{\left(T\right)}\right)$ from the output of the mechanism and the random key $U^{T-1}$. Hence, we get
\begin{equation}
H\left(U^{T-1}\right)\geq H\left(U_{\min,T-1}\right).
\end{equation}
This proves inequality~\eqref{eqn6_2}. Now, observe that databases $\mathbf{x}_{i},\mathbf{x}_{g_1+i},\ldots,\mathbf{x}_{g_{k-1}+i}$ are neighboring databases for each $i\in\left[k^{T-1}\right]$, since they are only different in the value of the first sample $x^{\left(1\right)}$. Since the mechanism $Q$ is $\epsilon$-DP-Rec, we have
\begin{equation}
e^{-\epsilon}\leq \frac{P_{g_a+i}^{\mathbf{y}}}{P_{g_j+i}^{\mathbf{y}}}\leq e^{\epsilon} \qquad \forall a,j\in\lbrace 0,\ldots,k-1\rbrace
\end{equation}
Thus, we get 
\begin{equation}
e^{-\epsilon}\leq \frac{\sum_{i=g_{a-1}+1}^{g_{a}}P^{\mathbf{y}}_i}{e^{\epsilon}\sum_{i=g_{a-1}+1}^{g_{a}}P^{\mathbf{y}}_i}\leq \frac{C_{a}}{C_{j}}=\frac{\sum_{i=g_{a-1}+1}^{g_{a}}P^{\mathbf{y}}_i}{\sum_{i=g_{j-1}+1}^{g_{j}}P^{\mathbf{y}}_i}\leq \frac{e^{\epsilon}\sum_{i=g_{j-1}+1}^{g_{j}}P^{\mathbf{y}}_i}{\sum_{i=g_{j-1}+1}^{g_{j}}P^{\mathbf{y}}_i}\leq e^{\epsilon} \qquad \forall a,j\in \left[k\right]
\end{equation}
Consider a random key $U^{1}$ that has a distribution $\mathbf{C}=\left[C_1,\ldots,C_k\right]$, where $C_a=\sum_{i=g_{a-1}+1}^{g_{a}}P^{\mathbf{y}}_i$. From Lemma~\ref{lemm5_1}, the random key $U^{1}$ can be used to construct an $\epsilon$-LDP-Rec mechanism with the possibility to recover the sample $X_1$ from the output of the mechanism and the random key $U^{1}$. Hence from Theorem~\ref{Th2_4}, we have
\begin{equation}
H\left(U^{1}\right)\geq H\left(U_{\min,1}\right).
\end{equation}
This proves inequality~\eqref{eqn6_3}, and completes the proof of Lemma~\ref{lemm6_2}.

\section{Proof of Lemma~\ref{lemmf_1}}
\label{AppI}

For the random variable $U^{\prime}$, the distribution $\mathbf{q}^{\prime}=\left[q_{1}^{\prime},\ldots,q_{m-1}^{\prime}\right]$ is given by
\begin{equation}
q^{\prime}_j=\frac{q_j}{1-q_{m}}.
\end{equation}    
Note that the distribution $\mathbf{q^{\prime}}$ is a valid distribution on $U^{\prime}$ since $\sum_{j=1}^{m-1}q^{\prime}_{j}=\sum_{j=1}^{m-1}\frac{q_j}{1-q_{m}}=1$. Now, we can bound the difference between $H\left(U\right)-H\left(U^{\prime}\right)$ as follows
\begin{align}
H\left(U\right)-H\left(U^{\prime}\right)&=\sum_{j=1}^{m-1}q^{\prime}_{j}\log\left(q^{\prime}_{j}\right)-\sum_{j=1}^{m}q_j\log\left(q_j\right)\nonumber\\
&=\sum_{j=1}^{m-1}\frac{q_j}{1-q_{m}}\log\left(\frac{q_j}{1-q_{m}}\right)-\sum_{j=1}^{m}q_j\log\left(q_j\right)\nonumber\\
&=\sum_{j=1}^{m-1}\frac{q_j}{1-q_{m}}\left[\log\left(\frac{q_j}{1-q_{m}}\right)-\log\left(q_j^{\left(1-q_m\right)}\right)\right]-q_m\log\left(q_m\right)\nonumber\\
&=\sum_{j=1}^{m-1}\frac{q_j}{1-q_{m}}\left[-\log\left(\frac{1-q_m}{q_j^{q_m}}\right)\right]-q_m\log\left(q_m\right)\nonumber\\
&> -\log\left(\sum_{j=1}^{m-1}q_j^{\left(1-q_m\right)}\right)-q_m\log\left(q_m\right)~\label{eqni_1}\\ 
&\geq -\left(1-q_m\right)\log\left(1-q_m\right)-q_m\log\left(m-1\right)-q_m\log\left(q_m\right)~\label{eqni_2}\\ 
&\geq \min\left(0,\log\left(\frac{m}{m-1}\right)\right)~\label{eqni_3}\\
&\geq 0 ~\label{eqni_4}
\end{align}
where~\eqref{eqni_1} follows from the fact that $-\log\left(.\right)$ is a strictly convex function and $q_j/1-q_m >0$ for $j\in\left[m-1\right]$. The inequality~\eqref{eqni_2} follows from solving the convex problem 
\begin{equation}~\label{opt_1}
\begin{aligned}
\max_{\lbrace q_j\rbrace_{j=1}^{m-1}}&\  \sum_{j=1}^{m-1}q_j^{\left(1-q_m\right)}\\
s.t.&\ \sum_{j=1}^{m-1}q_j=1-q_m\\
&\ q_j\geq q_m\ \forall j\in\left[m-1\right]
\end{aligned}
\end{equation}
Note that $x^{a}$ is a concave function on $x\in\mathbb{R}_{+}$ for $0\leq a\leq 1$. Therefore, the objective function in~\eqref{opt_1} is concave in $\lbrace q_j\rbrace$. By solving the optimization problem in~\eqref{opt_1}, we get $q_j^{*}=\frac{1-q_m}{m-1}\geq q_m$ for all $j\in\left[m-1\right]$ and $\sum_{j=1}^{m-1}q_j^{\left(1-q_m\right)}\leq \frac{\left(1-q_m\right)^{\left(1-q_m\right)}}{\left(m-1\right)^{\left(-q_m\right)}}$. Since $\log\left(x\right)$ is a monotonic function, we get $-\log\left(\sum_{j=1}^{m-1}q_j^{\left(1-q_m\right)}\right)\geq -\left(1-q_m\right)\log\left(1-q_m\right)-q_m\log\left(m-1\right)$. The inequality~\eqref{eqni_3} follows from the fact that $-\left(1-q_m\right)\log\left(1-q_m\right)-q_m\log\left(m-1\right)-q_m\log\left(q_m\right)=H\left(q_m\right)-q_m\log\left(m-1\right)$ is a concave function of $q_m$. The minimum of a concave function is one of the vertices, where $q_m\in\lbrace 0,\frac{1}{m}\rbrace$. Hence, the proof is completed.

\end{document}